\documentclass[aps,prx,reprint,superscriptaddress,nofootinbib]{revtex4-2}

\usepackage{amsmath,amssymb}
\usepackage{graphicx}
\usepackage{physics}
\usepackage{color}
\usepackage{amsmath,amssymb,mathtools}
\usepackage{tikz}
\usepackage{booktabs}
\usepackage{amsthm}
\usepackage{ragged2e}
\theoremstyle{plain}

\newtheorem{theorem}{Theorem}

\newtheorem{Proposition}[theorem]{Proposition}

\newtheorem*{theorem*}{Theorem}

\newtheoremstyle{axiomwithbreak}
  {3pt}
  {3pt}
  {}
  {}
  {\bfseries}
  {.}
  {\newline}
  {}

\theoremstyle{axiomwithbreak}

\theoremstyle{definition}
\newtheorem{definition}[theorem]{Definition}

\usetikzlibrary{arrows.meta,positioning}
\usepackage[colorlinks=true,citecolor=blue,linkcolor=blue,urlcolor=blue]{hyperref}
\usepackage[most]{tcolorbox}
\newtcolorbox{scopebox}[1][]{
  enhanced, breakable,
  colback=blue!4, colframe=blue!45!black,
  boxrule=0pt, leftrule=2.5pt, arc=0pt,
  left=6pt, right=6pt, top=4pt, bottom=4pt,
  fonttitle=\bfseries, coltitle=blue!45!black,
  attach title to upper={.\ },
  #1
}

\newcommand{\statementend}{\unskip\nobreak\hfill\ensuremath{\diamond}}
\makeatletter
\newcommand{\statementeqend}{%
  \let\statement@tagform\tagform@
  \def\tagform@##1{\statement@tagform{##1}\,\ensuremath{\diamond}}%
}
\makeatother

\providecommand{\F}{\mathbb{F}}
\providecommand{\one}{\mathbf{1}}
\providecommand{\calL}{\mathcal{L}}

\providecommand{\rank}{\operatorname{rank}}
\providecommand{\wt}{\operatorname{wt}}

\graphicspath{{figures/}}
\newcommand{\foldP}{\textcolor{blue}{P}}
\newcommand{\foldR}{\textcolor{red}{R}}

\begin{document}

\title{Design Principles for Ultra-High-Rate Quantum Codes}
\author{Jong Yeon Lee\textsuperscript{*,$\dagger$}}
\affiliation{Department of Physics and Institute for Condensed Matter Theory, University of Illinois Urbana-Champaign, Urbana, Illinois 61801, USA}
\affiliation{Korea Institute for Advanced Study, Seoul 02455, South Korea}
\author{Koki Okada\textsuperscript{*}}
\affiliation{Department of Information and Communications Engineering,
School of Engineering, Institute of Science Tokyo,
Tokyo 152-8550, Japan}
\author{Nishad Maskara}
\affiliation{Center for Theoretical Physics -- a Leinweber Institute, Massachusetts Institute of Technology, Cambridge, MA 02139, USA}
\author{Kenta Kasai\textsuperscript{$\dagger$}}
\affiliation{Department of Information and Communications Engineering,
School of Engineering, Institute of Science Tokyo,
Tokyo 152-8550, Japan}
\author{Hengyun Zhou\textsuperscript{$\dagger$}}
\affiliation{Department of Electrical Engineering and Computer Science, Massachusetts Institute of Technology, Cambridge, MA 02139, USA}

\begin{abstract}
Reducing the qubit overhead of quantum error correction is a central challenge for scalable fault-tolerant quantum computing. Recent ultra-high-rate quantum codes offer a promising route toward this goal, with some constructions requiring as few as two physical data qubits per logical qubit. However, systematic principles for navigating the tradeoffs among encoding rate, distance, check weight, and blocklength remain lacking. Here, we develop and analyze principles for exploring this design space, and use them to design compact code constructions with improved performance. We develop code templates based on a pair-partition construction and a halving transformation that further reduces blocklength. Motivated by ensemble analysis of the degree distributions, we identify column weight as a key design parameter: increasing the column weight enables larger distances at compact blocklengths, at the cost of heavier checks. We find that at physical error rates of 0.1\%, the benefits of increased distance often outweigh the penalty associated with heavier checks. Applying this framework, we identify numerous compact codes with favorable parameters, including $[[90,21,11]]$, $[[140,31,15]]$, and $[[200,43,20]]$ non-CSS codes with check weight 10. Moreover, we develop symmetry-informed strategies for identifying low-weight logical bases with canonical pairing.
These results provide systematic strategies for designing ultra-high-rate quantum codes and navigating their Pareto frontier.
\end{abstract}

\maketitle

\begingroup
\renewcommand{\thefootnote}{*}
\footnotetext[1]{These authors contributed equally.}
\renewcommand{\thefootnote}{$\dagger$}
\footnotetext[2]{Corresponding authors: \href{mailto:jongyeon@illinois.edu}{jongyeon@illinois.edu}; \\ \href{mailto:kenta@ict.eng.isct.ac.jp}{kenta@ict.eng.isct.ac.jp}; \href{mailto:hyzhou@mit.edu}{hyzhou@mit.edu}.}
\endgroup

\makeatletter
\def\l@subsection#1#2{}%
\def\l@subsubsection#1#2{}%
\def\tocstop{\def\l@section##1##2{}}%
\makeatother

\tableofcontents

\section{Introduction}

Quantum error correction (QEC) is expected to be an essential component of large-scale quantum computation, but the associated physical qubit overhead remains a major obstacle. Reducing this overhead requires codes that protect many logical qubits using a comparatively small number of physical qubits, while retaining sufficient distance and admitting reliable syndrome extraction and decoding. High-rate quantum low-density parity-check (qLDPC) codes provide a promising route toward this goal~\cite{gottesman2013fault,breuckmann2021quantum,tillich2014quantum}, with recent asymptotic constructions establishing that qLDPC codes can simultaneously achieve a constant encoding rate and distance linear in the blocklength~\cite{panteleev2022asymptotically,leverrier2022quantum}.

Alongside this asymptotic progress, increasing attention has turned to finite-size instances, with blocklengths in the hundreds to thousands~\cite{bravyi2024high,panteleev2019degenerate,tremblay2022constant,xu2024constant}. Recent constructions have pushed further into the ultra-high-rate regime, with encoding rates ranging from $10\%$ to over $50\%$~\cite{zhao2026towards,kasai2026breaking,cain2026shor,lee2026logicalspectroscopyliftedproductcodes,yang2026designer,lu2026quantum,bhardwaj2026high,yang2023spatially}. These results demonstrate that high encoding rates can be compatible with favorable finite-size performance, but they also reveal a broad and increasingly complex design space: Different constructions make different compromises among encoding rate, distance, decoding threshold, check weight, blocklength, and compatibility with physical implementation. Despite the growing number of available code families, a systematic framework for navigating these tradeoffs remains lacking. In particular, it is often unclear which design choices control the attainable distance and check weight at a given blocklength and encoding rate.

In this work, we develop a unified framework for navigating this design space and selecting ultra-high-rate quantum codes for different operating regimes. Inspired by classical coding theory, we use the row and column weight distributions, together with the lift size, as common coordinates for describing a code family~\cite{richardson2008modern}. This separates the underlying construction, which determines how the commutation constraints are satisfied, from the degree parameters and lift used to obtain a particular code instance. Simple dimension counting then relates the degree distributions to the design rate, providing a direct way to identify feasible combinations of encoding rate and check weight.

These coordinates lead to practical design rules for improving and comparing code families. Within a fixed construction, increasing the lift size increases the blocklength while approximately preserving the rate and degree distributions, and typically allows the distance to grow. Alternatively, increasing the column weight can improve the attainable distance at a given scale, but requires heavier checks at fixed rate, although we find that the increased distance often outweighs the cost of heavier checks at 0.1$\%$ physical error rates.
We further apply ensemble heuristics from classical coding theory to analyze the code performance in the waterfall and error floor regimes~\cite{richardson2001capacity,richardson2001design,Gallager1960}. More broadly, the framework provides a common basis for comparing different constructions: at fixed row and column weights, a better code family should reach a target distance using a smaller blocklength.

Guided by these principles, we propose code construction templates with favorable parameters below 1000 physical qubits. We develop a pair-partition construction template that achieves code parameters at the Pareto frontier of existing constructions.
We then utilize a symplectic halving procedure~\cite{burton2024genons} to produce even more compact non-CSS construction templates, further improving the previous parameter Pareto frontier by a factor of $\sim1.5$. Representative non-CSS instances have parameters including $[[90,21,11]]$, $[[140,31,15]]$, $[[200,43,20]]$ and, at comparable block lengths, achieve larger distances than the Gross and two-Gross codes~\cite{bravyi2024high} while encoding three to four times as many logical qubits. Circuit-level simulations of these instances yield logical error rates in the gigaquop-to-teraquop regime at a circuit-level error rate of $0.1\%$. Together, these results demonstrate how our framework can guide code design, compare construction families, and identify promising regions of the ultra-high-rate code landscape.

The remainder of this paper is organized as follows. Section~\ref{sec:construction} reviews systematic procedures for constructing ultra-high-rate quantum codes. Sections~\ref{sec:cpmpp} and~\ref{sec:noncss} develop the pair-partition and symplectic-halving constructions and present representative finite-size code instances.
Section~\ref{sec:de} analyzes decoding thresholds and ensemble distance trends from the degree distributions.
Sections~\ref{sec:comparison} and~\ref{sec:check_weight} compare these instances across construction families and examine the tradeoff between distance and check weight. Section~\ref{sec:logical_profile} develops symmetry-informed methods for constructing low-weight canonical logical bases.

\section{Design Principles for Ultra-High-Rate Codes}
\label{sec:construction}

In this section, we review recent approaches to constructing ultra-high-rate quantum codes at finite blocklength~\cite{hong2026quantum,bhardwaj2026high,yang2026designer,kasai2026breaking,zhao2026towards,lu2026quantum}. We separate two aspects of the design: the construction template, which specifies the structure of the code family and ensures stabilizer commutation, and the degree parameters, which further influence the tradeoff among check weight, attainable distance, and blocklength. Together, these provide a systematic way to construct quantum codes with high encoding rate, good distance, and small block length. Figure~\ref{fig:design_flow} summarizes the resulting design flow.

\begin{figure*}
\centering
\includegraphics[width=\textwidth]{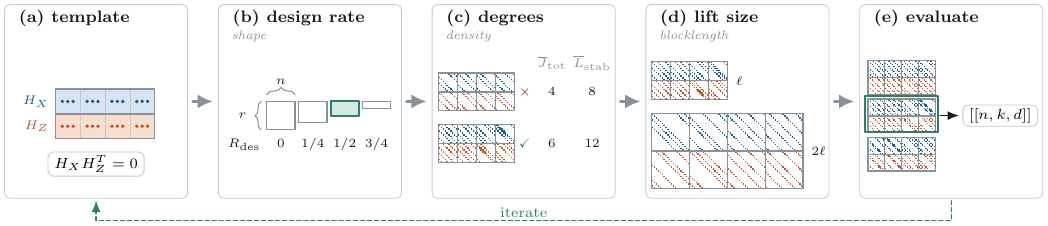}
\caption{\textbf{Design heuristics for ultra-high-rate quantum codes.}
\textbf{(a)}~The construction template---a protograph together with a lift
strategy---fixes which blocks are nonzero and enforces $H_XH_Z^{T}=0$.
\textbf{(b)}~The design rate fixes the \emph{shape}: counting qubit--check
incidences gives $n\overline{J}_{\mathrm{tot}}=r\overline{L}_{\mathrm{stab}}$, so
$R_{\mathrm{des}}=1-r/n$.
\textbf{(c)}~At fixed encoding rate the degree parameters
$\overline{J}_{\mathrm{tot}}$ and $\overline{L}_{\mathrm{stab}}$ are linked together; ensemble analysis suggests $\overline{J}_{\mathrm{tot}}\ge6$ is desirable, and higher degree tends to increase distance.
\textbf{(d)}~The lift size $\ell$, written $P$ for the circulant
constructions of Sec.~\ref{sec:cpmpp}, fixes the \emph{blocklength},
$n=b_{\mathrm q}\ell$, where $b_{\mathrm q}$ is the number of qubit blocks in
the protograph; the shape and the row/column weights are unchanged.
\textbf{(e)}~Candidate lifts are screened and iterated upon.}
\label{fig:design_flow}
\end{figure*}

\medskip
\noindent\textbf{Step 1: Specify the construction template.} The first, and arguably most important step, is to specify a construction template. A construction template consists of a protograph structure together with a lift strategy~\cite{thorpe2003protographs}. Most recent constructions have a parity-check matrix based on a block structure. The protograph specifies which blocks of the parity-check matrices are nonzero, while the lift strategy specifies how these blocks are replaced by larger matrices to obtain a family of finite codes. In a group-based lift, each nonzero protograph entry is replaced by a matrix representation of an element of the chosen group. The lift strategy therefore includes both the choice of group and any constraints relating the group elements assigned to different locations~\cite{richardson2008modern,panteleev2019degenerate,bravyi2024high,lin2024quantum}.

The construction template must, in particular, ensure that the stabilizer checks commute. For CSS codes with parity-check matrices $H_X$ and $H_Z$, this requires $H_X H_Z^T=0$~\cite{calderbank1996good,steane1996error}. The manner in which this condition is enforced can itself introduce distance-limiting structure. In some constructions, for example, it implies the existence of a nontrivial logical operator whose weight is no greater than the check weight~\cite{kasai2026breaking,hagiwara2010quantum}. A useful template should thus avoid such low-weight logical operators while retaining sufficient freedom among its lifts to produce favorable finite instances.

Many leading quantum LDPC constructions can be described within this framework: Hypergraph-product codes use a protograph derived from a product construction together with a trivial lift, whereas lifted-product codes retain the product structure while introducing nontrivial group-based lifts~\cite{tillich2014quantum,panteleev2022quantum}. The product structure can naturally produce nonuniform degree distributions, depending on the constituent codes. A complementary approach is to design a denser protograph and its lift constraints directly. Early quasi-cyclic QLDPC constructions follow this route~\cite{hagiwara2010quantum}, but enforce commutation in a manner that bounds the distance by the check weight. More recent constructions avoid this particular obstruction, by using affine-permutation-matrices and the pair-partition construction introduced here, enabling distances beyond the check weight~\cite{kasai2026breaking,zhao2026towards}. Overall, the construction template should avoid any structural mechanisms that lead to a bound on the distance.

An appropriately-chosen construction template can improve the parameter Pareto frontier accessible to the remaining design steps. For example, non-Abelian two-block group algebra constructions and affine-permutation constructions provide distinct ways to expand the design space of quantum codes~\cite{lin2024quantum,hong2026quantum,bhardwaj2026high,yang2026designer,kasai2026breaking}. Another common method is to replace monomials in the lift by polynomials~\cite{yang2026designer,lee2026logicalspectroscopyliftedproductcodes}. In this work, we identify new construction templates that further improve upon the previously known parameter Pareto frontier.

\medskip
\noindent\textbf{Step 2: Fix the design rate.} Having chosen the construction template, we now aim to reach the desired code parameters, starting with a target encoding rate. Let $\overline{J}_{\mathrm{tot}}$ denote the average number of stabilizer checks incident on a data qubit, counting checks in all bases, and let $\overline{L}_{\mathrm{stab}}$ denote the average check weight. If there are $n$ data qubits and $r$ checks, counting qubit--check incidences gives $n\overline{J}_{\mathrm{tot}}=r\overline{L}_{\mathrm{stab}}$. The realized encoding rate is therefore
\begin{equation}
    R = 1-\frac{\overline{J}_{\mathrm{tot}}}{\overline{L}_{\mathrm{stab}}}+\rho_{\mathrm{dep}},\label{eq:design_rate}
\end{equation}
where $\rho_{\mathrm{dep}}$ is the total rank deficiency of the stabilizer checks normalized by $n$. Thus, $1-\overline{J}_{\mathrm{tot}}/\overline{L}_{\mathrm{stab}}$ is the design rate, while check dependencies contribute additional logical qubits~\cite{richardson2008modern}. For a CSS code, $\overline{J}_{\mathrm{tot}}=\overline{J}_X+\overline{J}_Z$, and in the basis-symmetric case with data-qubit degree $J$ in each basis, $\overline{J}_{\mathrm{tot}}=2J$.

Commonly studied bivariate bicycle codes have zero design rate (also known as ``rate-less codes''), so their logical qubits arise from dependencies among the nominal stabilizer checks~\cite{bravyi2024high,lin2024quantum}. This mechanism can yield useful numbers of logical qubits at small blocklengths, but zero design rate often prevents the number of logical qubits from growing proportionally with blocklength.

In contrast, many recent ultra-high-rate constructions begin with a positive design rate, ensuring a high encoding rate by construction even before accounting for additional check redundancies~\cite{kasai2026breaking,zhao2026towards,hong2026quantum,bhardwaj2026high,lu2026quantum}. An important consequence of Eq.~\eqref{eq:design_rate} is that at a fixed design rate, the average data qubit degree and check weight cannot be varied independently, exposing an important tradeoff.

\medskip
\noindent\textbf{Step 3: Choose the degree parameters.} Fixing the design rate constrains the ratio between the average data qubit degree and the average check weight, but it does not determine their individual values. The data qubit degree therefore remains an important design lever. Intuitively, increasing this degree subjects each qubit to more constraints and can thereby support a larger code distance.

We make this intuition quantitative in two complementary ways. The ensemble analysis in Sec.~\ref{sec:de} predicts how the attainable performance depends on the degree parameters, while the explicit code constructions in Sec.~\ref{sec:cpmpp} and Sec.~\ref{sec:noncss} exhibit the same trend within a fixed construction family. In particular, both the density-evolution threshold and the ensemble weight-growth exponent indicate that average data qubit degree two is generally insufficient to achieve good performance, consistent with familiar principles from classical LDPC code design~\cite{richardson2001capacity,richardson2001design}. Consequently, an average data qubit degree of at least three is often needed; together with the target design rate, this requirement directly lower bounds the average check weight. For example, we expect it to be challenging to design rate-$1/2$ codes with good performance and maximum check weight below 12.

The data qubit degree thus provides a useful knob for navigating the code parameter landscape. Smaller degrees permit lighter checks at a fixed design rate, but may not support the desired distance at compact blocklengths. Increasing the degree can improve the attainable distance at a given scale, although it necessarily requires heavier checks. In Sec.~\ref{sec:check_weight}, we analyze this tradeoff numerically and find that, at commonly assumed physical error rates of 0.1\%, the distance gained by increasing the degree can justify the additional check weight.

\medskip
\noindent\textbf{Step 4: Choose the lift size and generate candidates.} Having fixed the construction template, design rate, and degree parameters, we next choose a lift size and generate concrete code instances. The lift size directly sets the blocklength up to a factor determined by the protograph. Changing it preserves the design rate and degree parameters, although the realized rate can vary because the number of check dependencies may depend on the particular lift. A larger lift also provides more freedom within the same structural family and can therefore support a larger distance.

Candidate lifts can often be generated by random sampling followed by screening, although informed search heuristics may improve the efficiency. Each candidate should first be checked for the defining constraints of the construction, the intended degree distribution, and the absence of any readily detectable low weight logical operators. The lift size and data qubit degree then provide complementary ways to improve the attainable distance.

\medskip
\noindent\textbf{Step 5: Evaluate instances and iterate.} After generating a collection of valid candidates, we evaluate them using criteria appropriate to the intended application. Distance is often a primary consideration and can be screened using various search methods~\cite{webster2026distance,pryadko2022qdistrnd,bhardwaj2026high}. Other relevant criteria include the girth, decoding performance, compatibility with logical operations, and constraints imposed by the target hardware. It can also be useful to retain a Pareto set of candidates to account for these tradeoffs. Finally, if one is not able to achieve the desired performance, one can return to earlier stages of the design to adjust the construction template, degree distribution, or blocklength.

\begin{table}[!t]
    \centering
    \small
    \setlength{\tabcolsep}{3pt}
    \begin{ruledtabular}
    \begin{tabular}{@{}ccccc@{}}
        $(J,L)$ & $[[n,k,d]]$ & Girth & Lift $P$ & Rate $k/n$\\
        \colrule
        $(3,8)$ & $[[472,122,16]]$ & $8$ & $59$ & $0.258$\\
        $(3,8)$ & $[[776,198,20]]$ & $8$ & $97$ & $0.255$\\
        $(3,10)$ & $[[970,392,16]]$ & $6$ & $97$ & $0.404$\\
        $(3,10)$ & $[[1630,656,20]]$ & $8$ & $163$ & $0.402$\\
        $(3,12)$ & $[[2004,1006,16]]$ & $6$ & $167$ & $0.502$\\
        $(3,12)$ & $[[2676,1342,18]]$ & $6$ & $223$ & $0.501$\\
        $(3,14)$ & $[[1414,812,12]]$ & $6$ & $101$ & $0.574$\\
        $(3,14)$ & $[[3122,1788,16]]$ & $6$ & $223$ & $0.573$\\
        \colrule
        $(4,10)$ & $[[220,50,16]]$ & $6$ & $22$ & $0.227$\\
        $(4,10)$ & $[[320,70,20]]$ & $6$ & $32$ & $0.219$\\
        $(4,12)$ & $[[372,130,16]]$ & $6$ & $31$ & $0.349$\\
        $(4,12)$ & $[[492,170,20]]$ & $6$ & $41$ & $0.346$\\
        $(4,14)$ & $[[518,228,16]]$ & $6$ & $37$ & $0.440$\\
        $(4,14)$ & $[[574,252,18]]$ & $6$ & $41$ & $0.439$\\
        $(4,16)$ & $[[848,430,18]]$ & $6$ & $53$ & $0.507$\\
        $(4,16)$ & $[[944,478,20]]$ & $6$ & $59$ & $0.506$\\
        \colrule
        $(5,10)$ & $[[170,8,20]]$ & $6$ & $17$ & $0.047$
    \end{tabular}
    \end{ruledtabular}
    \caption{Representative CPM-based CSS pair-partition codes. A more extended
    list can be found in Table~\ref{tab:cpmpp-code-catalogue-full}.}
    \label{tab:cpmpp-code-catalogue}
\end{table}

\section{CPM-Based Pair-Partition Codes}
\label{sec:cpmpp}

Having described a general recipe for constructing ultra-high-rate quantum codes, we now instantiate it with a new construction template, which we refer to as the pair-partition (PP) code. In this section, we focus on this construction using circulant permutation matrices (CPMs), although the method can be generalized to APMs, which allows exceeding the structural distance bounds on CPM-based constructions imposed by column weight (Appendix~\ref{sec:cpmpp-structural-bounds}).

\subsection{Definition and CSS orthogonality}
\label{sec:cpmpp-construction}

The pair-partition construction enforces CSS orthogonality by pairing equal contributions to the orthogonality equations. For $s\in\mathbb Z_P$, let $C(s)$ denote the $P\times P$ CPM whose nonzero entry in row $a$ lies in column $a-s$, with indices taken modulo $P$. Lifting a protograph by these matrices gives the standard quasi-cyclic LDPC construction~\cite{tanner2004ldpc,fossorier2004quasicyclic}. The pair-partition construction is defined using these CPMs as follows:

\begin{definition}[CPM-based pair-partition construction]
    Fix positive integers $J,P$ and an even positive integer $L$.
Let $M=(M_{ij})_{0\leq i,j<J}$ be a $J\times J$ array whose entry
$M_{ij}$ is a partition of $\mathbb Z_L$ into $L/2$ disjoint unordered
pairs. Choose exponent arrays $E=(e_{i\ell}),D=(d_{j\ell})\in
\mathbb Z_P^{J\times L}$ satisfying
\begin{equation}
    d_{j,u}-e_{i,u}\equiv d_{j,v}-e_{i,v}\pmod P,
    \label{eq:cpmpp-paired-difference}
\end{equation}
for every $0\leq i,j<J$ and every $\{u,v\}\in M_{ij}$. The associated CPM-based pair-partition
(CPM-PP) code is the CSS code with binary check matrices
\begin{equation}
    H_X=\bigl(C(e_{i\ell})\bigr)_{i,\ell},\qquad
    H_Z=\bigl(C(d_{j\ell})\bigr)_{j,\ell}.
    \label{eq:cpmpp-checks}
\end{equation}
The array $M$ specifies the construction template independently of the
lift size; $P$ and compatible exponent arrays specify a code instance.\statementend
\end{definition}

\noindent\textbf{CSS orthogonality.}
The $(i,j)$ block of $H_XH_Z^{\mathsf T}$ is
\begin{equation}
    \sum_{\ell=0}^{L-1}C(e_{i\ell})C(d_{j\ell})^{\mathsf T}
    =\sum_{\ell=0}^{L-1}C(e_{i\ell}-d_{j\ell}).
    \label{eq:cpmpp-cancellation}
\end{equation}
For each $\{u,v\}\in M_{ij}$, Eq.~\eqref{eq:cpmpp-paired-difference} makes the two corresponding summands identical, so they cancel over $\mathbb F_2$. Since $M_{ij}$ partitions all block columns, the entire block is zero. Applying this argument to every $i,j$ gives $H_XH_Z^{\mathsf T}=0$.

Each check matrix is a CPM lift of the all-one $J\times L$ protograph.
Every lifted row has one nonzero entry in each block column, and every
lifted column has one in each block row. The code therefore has
blocklength $n=LP$, row weight $L$, and column weight $J$ in each check
matrix. Its number of logical qubits is
\begin{equation}
    k=LP-\operatorname{rank}H_X
         -\operatorname{rank}H_Z.
    \label{eq:cpmpp-dimension}
\end{equation}
Thus, in the notation of Sec.~\ref{sec:construction}, $\overline{J}_{\mathrm{tot}}=2J$ and
$\overline{L}_{\mathrm{stab}}=L$, with design rate
$R_{\mathrm{des}}=1-2J/L$.

For a pair $B=\{u,v\}\in M_{ij}$, define
$m_{ij}(B)=d_{j,u}-e_{i,u}=d_{j,v}-e_{i,v}$ in
$\mathbb{Z}_P$.
As an additional search constraint, we may require
\begin{equation}
    m_{ij}(B)\ne m_{ij}(B')
    \quad\text{for distinct }B,B'\in M_{ij}.
    \label{eq:cpmpp-mixed-distinct}
\end{equation}
This mixed-distinctness condition prevents different pairs
in the same $M_{ij}$ from sharing a mixed-difference value.
For fixed lifted rows $(i,a)$ of $H_X$ and $(j,b)$ of $H_Z$,
a common nonzero occurs in block column $\ell$ exactly when
$d_{j,\ell}-e_{i,\ell}=b-a$.
Equation~\eqref{eq:cpmpp-mixed-distinct} therefore makes every
such row intersection have size zero or two. This reduces the number of
mixed $4$-cycles in the combined $X/Z$ incidence graph, providing a
heuristic motivation for imposing the condition to improve decoding.

\subsection{Search procedure}
\label{sec:cpmpp-design}

We now provide additional details about our search procedure to identify promising CPM-PP code instances.

We first provide a structural upper bound on Tanner graph girth from the choice of pair-partition array. For a fixed $Z$-block row $j$, combine the pairs in $M_{ij}$ over all $i$ to
form a colored multigraph $\Gamma_j^X$ on the $L$ block-column indices, in which
the $L/2$ pairs of $M_{ij}$ are the edges of color $i$ (Fig.~\ref{fig:pairing-graphs}).  Similarly, combining
$M_{ij}$ over $j$ for fixed $i$ gives $\Gamma_i^Z$, in which the pairs of
$M_{ij}$ are the edges of color $j$. Each color class is a perfect matching of
$\mathbb{Z}_L$, so every pairing graph is $J$-regular on $L$ vertices.  These
$2J$ pairing graphs depend on $J,L$, and $M$, but not on the lift size or the
CPM exponents.

\begin{Proposition}[Pairing-graph girth bound]
    For any exponent arrays satisfying
Eq.~\eqref{eq:cpmpp-paired-difference}, a cycle of length $r\geq2$
in $\Gamma_j^X$ forces a Tanner cycle of length at most $2r$ in
$H_X$. The corresponding statement holds for $\Gamma_i^Z$ and $H_Z$.
Two parallel edges count as a cycle of length two.\statementend
\end{Proposition}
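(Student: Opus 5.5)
The plan is to translate a cycle in the pairing graph $\Gamma_j^X$ into a closed walk in the base protograph of $H_X$ whose exponent sum vanishes modulo $P$. Then I would invoke the standard quasi-cyclic (Fossorier-type) lifting criterion: a closed walk in the protograph lifts to a closed walk in the lifted Tanner graph exactly when the alternating sum of CPM exponents along it is $0 \pmod P$. The key observation is that an edge $\{u,v\}$ of color $i$ in $\Gamma_j^X$ encodes, by Eq.~\eqref{eq:cpmpp-paired-difference}, the identity
\begin{equation}
e_{i,u}-e_{i,v}\equiv d_{j,u}-d_{j,v}\pmod P .
\end{equation}
So differences of $X$-exponents along an edge equal differences of the fixed $Z$-row exponents $d_{j,\cdot}$, and the latter telescope around any cycle.

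Concretely, take a cycle $u_0,u_1,\dots,u_{r-1},u_r=u_0$ in $\Gamma_j^X$, where the edge $\{u_{t-1},u_t\}$ has color $i_t$. In the protograph of $H_X$ (the all-one $J\times L$ array), consider the walk
\[
u_0\to i_1\to u_1\to i_2\to\cdots\to i_r\to u_0 ,
\]
alternating between block columns and block rows. It has length $2r$. Its exponent sum is
\[
\sum_t\bigl(e_{i_t,u_{t-1}}-e_{i_t,u_t}\bigr)\equiv\sum_t\bigl(d_{j,u_{t-1}}-d_{j,u_t}\bigr)=0 .
\]
Hence, starting from any lifted column in block $u_0$ and following the unique lifted edges prescribed by each CPM, we return to the same lifted column after $2r$ steps. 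This gives a closed walk of length $2r$ in the Tanner graph of $H_X$.

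The step needing care is showing that this closed walk contains a genuine cycle, i.e.\ that it is non-backtracking. A non-backtracking closed walk cannot live in a forest, so it contains a cycle of length at most $2r$. I would check non-backtracking in the protograph, which suffices because distinct protograph edges lift to disjoint edge sets.
\begin{itemize}
\item At a block-column vertex $u_t$, the walk uses edges to block rows $i_t$ and $i_{t+1}$. These are distinct: the consecutive cycle edges $\{u_{t-1},u_t\}$ and $\{u_t,u_{t+1}\}$ are distinct edges sharing $u_t$, and each color class is a perfect matching, so they cannot have the same color.
\item At a block-row vertex $i_t$, the walk enters from $u_{t-1}$ and leaves to $u_t\neq u_{t-1}$, since pairs consist of distinct indices.
\end{itemize}
The parallel-edge case $r=2$ fits the same argument. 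Two colors $i\neq i'$ on the same pair $\{u,v\}$ give the walk $u\to i\to v\to i'\to u$, which lifts to a 4-cycle.

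The statement for $\Gamma_i^Z$ and $H_Z$ follows symmetrically. Swap the roles of $e$ and $d$: fixing $i$, the edge relation gives $d_{j,u}-d_{j,v}\equiv e_{i,u}-e_{i,v}$, and the $e_{i,\cdot}$ differences telescope.
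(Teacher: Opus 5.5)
Your proposal is correct and follows the paper's proof. Both arguments sum the paired-difference relations around the pairing-graph cycle, so that the $E$-exponent differences telescope through the fixed row $d_{j,\cdot}$, and this closes the lifted walk after $2r$ Tanner edges. Your non-backtracking check (distinct colors at each block column because color classes are matchings, and distinct block columns at each block row) supplies a step the paper leaves implicit when it passes from a closed walk to a genuine Tanner cycle of length at most $2r$.
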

\begin{proof}
    Let $c_0,c_1,\ldots,c_{r-1},c_r=c_0$ be a cycle in
$\Gamma_j^X$, and let $i_t$ be the color of the edge
$\{c_t,c_{t+1}\}\in M_{i_tj}$. Summing the paired-difference equations around the cycle yields
\begin{equation}
    \begin{aligned}
    &\sum_{t=0}^{r-1}\bigl(e_{i_t,c_t}-e_{i_t,c_{t+1}}\bigr)=\sum_{t=0}^{r-1}\bigl(d_{j,c_t}-d_{j,c_{t+1}}\bigr)=0\pmod P.\nonumber
    \end{aligned}
    \label{eq:cpmpp-pairing-cycle-sum}
\end{equation}
Indeed, each $d_{j,c_t}$ occurs once with each sign. Starting at lift
coordinate $a_t$ in block column $c_t$, a step through block row $i_t$
reaches block column $c_{t+1}$ at coordinate
$a_{t+1}=a_t+e_{i_t,c_t}-e_{i_t,c_{t+1}}$.
The zero sum therefore closes the lifted walk after $2r$ Tanner edges, resulting in a
Tanner cycle of length at most $2r$. Interchanging $E$ and $D$ gives
the statement for $H_Z$.
\end{proof}

A repeated edge is the case $r=2$ and forces a Tanner 4-cycle; a triangle
forces a Tanner cycle of length at most six.  Pairing graphs used for a
girth-eight search must therefore be simple and triangle-free. These
are necessary conditions; the chosen exponents must still pass the
lifted Tanner-cycle tests.
It is also known that every CPM lift of the complete $J\times L$ protograph with $J\geq2$ and
$L\geq3$ has Tanner girth at most twelve, independently of
$M,E,D$~\cite{fossorier2004quasicyclic}.
To illustrate these restrictions, write $\langle03,12,46,57\rangle$ for the pair-partition
$\{\{0,3\},\{1,2\},\{4,6\},\{5,7\}\}$.  For $J=3$ and $L=8$, one complete
array is
\begin{equation}
\begin{aligned}
M_{00}&=\langle03,12,46,57\rangle,&
M_{01}&=\langle02,16,37,45\rangle,\\
M_{02}&=\langle06,15,23,47\rangle,&
M_{10}&=\langle05,13,24,67\rangle,\\
M_{11}&=\langle03,15,27,46\rangle,&
M_{12}&=\langle07,14,26,35\rangle,\\
M_{20}&=\langle01,26,35,47\rangle,&
M_{21}&=\langle05,17,24,36\rangle,\\
M_{22}&=\langle02,16,37,45\rangle.
\end{aligned}
\label{eq:cpmpp-example-array}
\end{equation}
The six pairing graphs of the array in Eq.~\eqref{eq:cpmpp-example-array} are drawn in Fig.~\ref{fig:pairing-graphs}. No graph has a repeated edge, so no Tanner $4$-cycle is forced;
four of the six do contain a triangle, leading to girth at most $6$.

\begin{figure}[t]
    \centering
    \begingroup
\tikzset{
  pairzero/.style={draw=blue!70!black,line width=0.8pt},
  pairone/.style={draw=red!75!black,line width=0.8pt,dashed},
  pairtwo/.style={draw=green!45!black,line width=0.9pt,dotted},
  tricell/.style={fill=orange!88!red,fill opacity=0.12,draw=none},
  pairvertex/.style={circle,draw=black,fill=white,inner sep=0pt,
    minimum size=3.2mm,font=\scriptsize}
}
\newcommand{\paircoords}{%
  \foreach \v/\a in {0/90, 1/45, 2/0, 3/315, 4/270, 5/225, 6/180, 7/135}
    \coordinate (v\v) at (\a:0.90cm);%
}
\newcommand{\pairnodes}{%
  \foreach \v in {0,...,7} \node[pairvertex] at (v\v) {\v};%
}
\newcommand{\pairtitle}[2]{\multicolumn{1}{c}{$#1\;(g=#2)$}}
\setlength{\tabcolsep}{3pt}
\begin{tabular}{ccc}
\pairtitle{\Gamma_0^X}{3} & \pairtitle{\Gamma_1^X}{4} & \pairtitle{\Gamma_2^X}{3} \\[1mm]
\begin{tikzpicture}
  \paircoords
  \fill[tricell] (v0)--(v1)--(v3)--cycle;
  \fill[tricell] (v0)--(v3)--(v5)--cycle;
  \fill[tricell] (v2)--(v4)--(v6)--cycle;
  \fill[tricell] (v4)--(v6)--(v7)--cycle;
  \draw[pairzero] (v0)--(v3) (v1)--(v2) (v4)--(v6) (v5)--(v7);
  \draw[pairone] (v0)--(v5) (v1)--(v3) (v2)--(v4) (v6)--(v7);
  \draw[pairtwo] (v0)--(v1) (v2)--(v6) (v3)--(v5) (v4)--(v7);
  \pairnodes
\end{tikzpicture}
&
\begin{tikzpicture}
  \paircoords
  \draw[pairzero] (v0)--(v2) (v1)--(v6) (v3)--(v7) (v4)--(v5);
  \draw[pairone] (v0)--(v3) (v1)--(v5) (v2)--(v7) (v4)--(v6);
  \draw[pairtwo] (v0)--(v5) (v1)--(v7) (v2)--(v4) (v3)--(v6);
  \pairnodes
\end{tikzpicture}
&
\begin{tikzpicture}
  \paircoords
  \fill[tricell] (v0)--(v2)--(v6)--cycle;
  \fill[tricell] (v1)--(v4)--(v5)--cycle;
  \draw[pairzero] (v0)--(v6) (v1)--(v5) (v2)--(v3) (v4)--(v7);
  \draw[pairone] (v0)--(v7) (v1)--(v4) (v2)--(v6) (v3)--(v5);
  \draw[pairtwo] (v0)--(v2) (v1)--(v6) (v3)--(v7) (v4)--(v5);
  \pairnodes
\end{tikzpicture}
\\[1mm]
\pairtitle{\Gamma_0^Z}{3} & \pairtitle{\Gamma_1^Z}{3} & \pairtitle{\Gamma_2^Z}{4} \\[1mm]
\begin{tikzpicture}
  \paircoords
  \fill[tricell] (v0)--(v2)--(v3)--cycle;
  \fill[tricell] (v4)--(v5)--(v7)--cycle;
  \draw[pairzero] (v0)--(v3) (v1)--(v2) (v4)--(v6) (v5)--(v7);
  \draw[pairone] (v0)--(v2) (v1)--(v6) (v3)--(v7) (v4)--(v5);
  \draw[pairtwo] (v0)--(v6) (v1)--(v5) (v2)--(v3) (v4)--(v7);
  \pairnodes
\end{tikzpicture}
&
\begin{tikzpicture}
  \paircoords
  \fill[tricell] (v0)--(v3)--(v5)--cycle;
  \fill[tricell] (v1)--(v3)--(v5)--cycle;
  \fill[tricell] (v2)--(v4)--(v6)--cycle;
  \fill[tricell] (v2)--(v6)--(v7)--cycle;
  \draw[pairzero] (v0)--(v5) (v1)--(v3) (v2)--(v4) (v6)--(v7);
  \draw[pairone] (v0)--(v3) (v1)--(v5) (v2)--(v7) (v4)--(v6);
  \draw[pairtwo] (v0)--(v7) (v1)--(v4) (v2)--(v6) (v3)--(v5);
  \pairnodes
\end{tikzpicture}
&
\begin{tikzpicture}
  \paircoords
  \draw[pairzero] (v0)--(v1) (v2)--(v6) (v3)--(v5) (v4)--(v7);
  \draw[pairone] (v0)--(v5) (v1)--(v7) (v2)--(v4) (v3)--(v6);
  \draw[pairtwo] (v0)--(v2) (v1)--(v6) (v3)--(v7) (v4)--(v5);
  \pairnodes
\end{tikzpicture}

\end{tabular}

\vspace{1mm}
\begin{tikzpicture}[baseline=-0.6ex]
  \draw[pairzero] (0.00,0)--(0.65,0);
  \node[anchor=west,font=\scriptsize] at (0.72,0) {0};
  \draw[pairone] (1.18,0)--(1.83,0);
  \node[anchor=west,font=\scriptsize] at (1.90,0) {1};
  \draw[pairtwo] (2.36,0)--(3.01,0);
  \node[anchor=west,font=\scriptsize] at (3.08,0) {2};
  \fill[tricell] (3.54,-0.10) rectangle (4.09,0.16);
  \node[anchor=west,font=\scriptsize] at (4.16,0) {triangle};
\end{tikzpicture}
\endgroup
    \caption{\textbf{Pairing graphs of the $J=3$, $L=8$ array of
    Eq.~\eqref{eq:cpmpp-example-array}.}
    In $\Gamma_j^X$ the color-$i$ edges are the pairs of $M_{ij}$ read down
    column $j$; in $\Gamma_i^Z$ they are read across row $i$. Colors $0,1,2$
    are drawn as blue solid, red dashed, and green dotted lines. Every color
    class is a perfect matching, so each graph is $J$-regular on the $L$
    block-column indices, and $g$ is the length of its shortest cycle. A
    cycle of length $r$ here forces a Tanner cycle of length at most $2r$.
    Shaded regions are triangles on the combined graph, each forcing a Tanner $6$-cycle.}
    \label{fig:pairing-graphs}
\end{figure}

We generate $M$ by backtracking over the perfect matchings of
$\mathbb{Z}_L$.  When filling cell $(i,j)$, a matching is accepted only when
it shares no edge with the matchings already placed in row $i$ or column $j$.
For target girth eight, a tentative matching is also rejected if either union
contains a triangle.  Fixing the first matching by a relabeling of the block
columns removes an equivalent set of search branches.

After $M$ has been selected, fix a prime $P$. The paired-difference equations form a
homogeneous system in the $2JL$ entries of $E$ and $D$.  Row reduction over
$\mathbb{Z}_P$ then gives a basis for the solution space.  More explicitly, if
$x=(\operatorname{vec}E,\operatorname{vec}D)^{\mathsf T}$, the equations can
be written as
\begin{equation}
    A_Mx=0,
    \qquad A_M\in\mathbb{Z}_P^{(J^2L/2)\times 2JL},
    \label{eq:cpmpp-linear-system}
\end{equation}
where the row for $\{u,v\}\in M_{ij}$ has coefficients
$-1,+1,+1,-1$ in the positions of
$e_{i,u},e_{i,v},d_{j,u},d_{j,v}$.  The transformation
\begin{equation}
    e'_{i\ell}=e_{i\ell}+\tau_\ell+\alpha_i,
    \qquad
    d'_{j\ell}=d_{j\ell}+\tau_\ell+\beta_j
    \label{eq:cpmpp-gauge}
\end{equation}
preserves orthogonality, binary ranks, Tanner girth, and distance, for any
column offsets $\tau_\ell$ and block-row offsets $\alpha_i,\beta_j$ in
$\mathbb{Z}_P$. We partially fix this relabeling freedom by taking
\begin{equation}
    e_{0\ell}=0\qquad(0\leq\ell<L).
    \label{eq:cpmpp-normalization}
\end{equation}
This leaves $(2J-1)L$ scalar variables before the remaining paired equations
are row-reduced. The normalization does not select a unique representative:
setting $\tau_\ell=0$ and $\alpha_0=0$ in
Eq.~\eqref{eq:cpmpp-gauge} still permits arbitrary shifts of the other
$E$ rows and all $D$ rows.

Our full search procedure can be summarized as follows:
\begin{enumerate}
    \item Choose $J,L$, and a prime $P$, and generate a pair-partition array
    $M$ by backtracking.  Reject edge reuse, and reject triangles when Tanner
    girth at least eight is required.
    \item Form $A_M$ and compute $\ker A_M$ over $\mathbb Z_P$, using
    Eq.~\eqref{eq:cpmpp-normalization} to partially fix the exponent
    relabeling freedom.
    \item Enumerate or sample $E,D$ from this solution space.  Apply
    Eq.~\eqref{eq:cpmpp-mixed-distinct} and the modular cycle tests
    before constructing the binary matrices.
    \item Construct $H_X,H_Z$ and directly verify CSS orthogonality, binary
    ranks, row and column weights, and Tanner girths.
    \item Apply the logical-operator search of
    Appendix~\ref{sec:distance-verification}.  Retain the fixed exponent arrays
    only when they meet the prescribed distance screen.
\end{enumerate}

\subsection{Code instances and distance trends}
\label{sec:cpmpp-instances}

Table~\ref{tab:cpmpp-code-catalogue} summarizes representative finite CPM-PP instances we construct using this method. The complete catalogue is given in Appendix~\ref{sec:cpmpp-full-catalogue}, see the accompanying repository for details~\cite{designprinciplesdata}. The instances achieve code parameters at or near the Pareto frontier of code constructions, for comparable encoding rates and check weights. As we can see, the design rate directly ties together the row and column weight. As the encoding rate increases, the block length required to achieve a given distance also tends to increase. For similar block lengths and encoding rates, the higher column weight $J=4$ instances result in larger distances: for example, compare the $J=3$ code instance $[[530,216,12]]$ with the $J=4$ code instance $[[518,228,16]]$. This can be understood from the fact that larger column weight imposes more constraints, making it easier to avoid low-weight logicals. We provide an additional, ensemble-level perspective on this trend in Sec.~\ref{sec:de}, and show that the increased distance can help performance despite a larger check weight in Sec.~\ref{sec:check_weight}.

The CPM-PP construction has a distance ceiling $d\leq(J+1)!$ for a given column weight, which is $24$ for $J=3$ and $120$ for $J=4$, independent of $P$, see Appendix~\ref{sec:cpmpp-structural-bounds} for more details~\cite{mackay2001evaluation,smarandache2012quasi,fossorier2004quasicyclic}.
The construction can be generalized to use affine permutation matrices (APMs) instead, which circumvents this distance upper bound.
While the greater flexibility afforded by this construction can in principle also increase distance at smaller block lengths, we empirically find that only at larger block lengths above 1000, when the distance approaches the structural bound for the CPM-PP construction, does the APM-PP start outperforming the CPM-PP construction.
This is consistent with the distance at smaller block length not being limited by global structure, and instead by the amount of available space within the lift.

\section{Symplectic Halving and Non-CSS Codes}
\label{sec:noncss}

\begin{figure}[!t]
    \centering
    \includegraphics[width=\columnwidth]{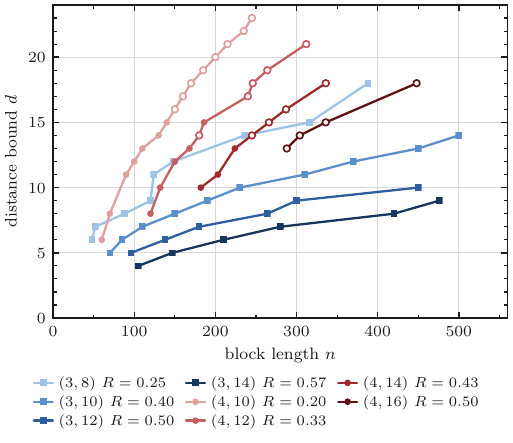}
    \caption{\textbf{Distance frontier of the halved non-CSS search.}
    Best distance $d$ reached at or below blocklength $n$
    for eight $(J,L)$ families, where $J$ is the number of
    check block rows in one basis, $L$ the number of block columns, and $R_{\mathrm{des}}=1-2J/L$ the
    design rate. Filled markers are exact distances from a completed
    exhaustion; open markers are upper bounds from a logical-operator witness.}
    \label{fig:halved-frontier}
\end{figure}

The majority of recent ultra-high-rate code constructions~\cite{cain2026shor,zhao2026towards,hong2026quantum,bhardwaj2026high,kasai2026breaking,lu2026quantum} focus on CSS codes~\cite{calderbank1996good,steane1996error}.
However, general stabilizer codes offer a larger design space by removing the requirement that the stabilizer group admits a generating set of purely $X$-type and purely $Z$-type checks.
This relaxes the structural restrictions on how a fixed number of commuting checks can be arranged to exclude low-weight logical operators, opening additional possibilities for improving finite-size code parameters.
Indeed, improvements in code parameters from non-CSS codes already appear for the smallest distance-three codes: the five-qubit perfect code is a non-CSS $[[5,1,3]]$ code, whereas the shortest CSS counterpart is the $[[7,1,3]]$ Steane code~\cite{laflamme1996perfect,steane1996error}.
A complementary benefit of mixed-basis checks is their ability to tailor a code to biased noise, as demonstrated by the XZZX surface code and its variants~\cite{bonilla2021xzzx,xu2023tailored}.
More recently, non-CSS generalizations of bivariate bicycle codes and two-block group algebra codes were proposed, but they were focused on small blocklengths, check weight six or seven, and vanishing design rate~\cite{khesin2026mirror}.
Here, we instead aim to use the non-CSS design freedom directly in the ultra-high-rate regime and improve the Pareto frontier of code constructions.

\begin{table}[!t]
    \centering
    \small
    \setlength{\tabcolsep}{3pt}
    \begin{ruledtabular}
    \begin{tabular}{@{}cccc@{}}
        $(J,L)$ & $[[n,k,d]]$ & Lift $P$ & Rate $k/n$\\
        \colrule
        $(3,8)$ & $[[52,15,7]]$ & $13$ & $0.288$\\
        $(3,8)$ & $[[316,81,15]]$ & $79$ & $0.256$\\
        $(3,8)$ & $[[388,99,18]]$ & $97$ & $0.255$ \\
        $(3,10)$ & $[[500,202,14]]$ & $100$ & $0.404$\\
        $(3,12)$ & $[[450,227,10]]$ & $75$ & $0.504$\\
        $(3,14)$ & $[[476,274,9]]$ & $68$ & $0.576$\\
        \colrule
        $(4,10)$ & $[[90,21,11]]$ & $18$ & $0.233$\\
        $(4,10)$ & $[[140,31,15]]$ & $28$ & $0.221$\\
        $(4,10)$ & $[[150,33,\leq 16]]$ & $30$ & $0.220$\\
        $(4,10)$ & $[[200,43,20]]$ & $40$ & $0.215$\\
        $(4,12)$ & $[[240,83,\leq 17]]$ & $40$ & $0.346$\\
        $(4,12)$ & $[[312,107,\leq 21]]$ & $52$ & $0.343$\\
        $(4,14)$ & $[[287,126,\leq 16]]$ & $41$ & $0.439$\\
        $(4,14)$ & $[[336,147,\leq 18]]$ & $48$ & $0.438$\\
        $(4,16)$ & $[[336,171,\leq 15]]$ & $42$ & $0.509$\\
    \end{tabular}
    \end{ruledtabular}
    \caption{Representative halved non-CSS pair-partition codes. The full frontier is given in Table~\ref{tab:noncss-code-catalogue-full}.}
    \label{tab:noncss-code-catalogue}
\end{table}

\subsection{Symplectic Halving Method}

A useful starting point is the symplectic doubling map from general stabilizer codes to CSS codes~\cite{Kovalev_2012, kovalev2013quantum,liu2023subsystem,burton2024genons}. Let $H=(A\mid B)$ be the binary symplectic check matrix of an $[[n,k,d]]$ non-CSS stabilizer code.
For binary Pauli vectors $p=(x| z)$ and $q=(x' | z')$, the symplectic form
\begin{equation}
B_{\mathrm{sp}}(p,q)=x^{\mathsf T}z'+z^{\mathsf T}x'
\end{equation}
is zero when the corresponding operators commute and one when they anticommute.
The matrix $H$ represents commuting Pauli checks precisely when $AB^{\mathsf T}+BA^{\mathsf T}=0$. Therefore, the matrices
\begin{equation}
    H_X'=(A\mid B),\qquad H_Z'=(B\mid A)
    \label{eq:stabilizer_doubling}
\end{equation}
satisfy the CSS orthogonality, defining a CSS code $[[n'=2n,k'=2k,d']]$ with $d\leq d'\leq2d$.
We use the reverse operation to construct stabilizer codes from CSS parents.

\begin{definition}[Symplectic halving]
    Suppose a CSS code on $2n$ qubits admits a pairing and a choice of check
generators such that, after placing one member of each pair in the first
half of the qubit ordering and its partner in the second half,
\begin{equation}
    H_X=(A\mid B),\qquad H_Z=(B\mid A)
    \label{eq:halving_parent}
\end{equation}
Its \emph{symplectic halving} gives the non-CSS stabilizer code on $n$ qubits with binary symplectic check matrix
\begin{equation}
    H_{\mathrm{halved}}= ( \underbrace{A}_{\text{Pauli-}X}
      \mid\underbrace{B}_{\text{Pauli-}Z}).
    \label{eq:halving_check}
\end{equation}
For each pair, the entries of a parent $X$-check on the first and second qubits become the $X$ and $Z$ components of the corresponding halved check.
The resulting checks commute because the parent CSS orthogonality implies $H_X^{\vphantom{\dagger}} H_Z^{{\dagger}} \equiv A B^{\mathsf{T}} + B A^{\mathsf{T}} = 0$.
\end{definition}

Symplectic doubling and its inverse provide a correspondence between general stabilizer codes and CSS codes admitting a fixed-point-free ZX duality~\cite{kovalev2013quantum, breuckmann2024fold,quintavalle2023partitioning,burton2024genons}. Earlier applications yield halved non-CSS hypergraph-product and hyperbicycle codes~\cite{Kovalev_2012, kovalev2013quantum}. Here, we derive explicit halving criteria for CPM pair-partition templates, characterize their compatibility with the lift-group symmetry, and incorporate these constraints directly into the code search.

\vspace{5pt}\noindent\emph{Folding map and code parameters.}
Let $S_X=\operatorname{im}H_X^{\mathsf T}$ and $S_Z=\operatorname{im}H_Z^{\mathsf T}$ be the parent check spaces, and let $\pi$ be the fixed-point-free involution satisfying $\pi(S_X)=S_Z$.
Choose one endpoint $q_i$ from each pair $\{q_i,\pi(q_i)\}$. For a binary vector $a$ on $2n$ qubits, define
\begin{align}
F_\pi(a) &= (x|z),\quad x_i=a_{q_i},\quad z_i=a_{\pi(q_i)}. \label{eq:folded-parent-map}
\end{align}
This linear bijection identifies parent binary words with Pauli vectors on $n$ folded qubits.

For any parent binary words $a,b\in\F_2^{2n}$, the folding map produces Pauli vectors $F_\pi(a)=(x|z)$ and $F_\pi(b)=(x'|z')$.
Their binary commutation bit satisfies
\begin{align}
B_{\mathrm{sp}}(F_\pi(a),F_\pi(b))
&=a^{\mathsf T}\pi(b).
\label{eq:folded-twisted-pairing}
\end{align}
Changing the chosen endpoint of a pair exchanges its $X$ and $Z$ components, corresponding to a local Hadamard.
Different orientations therefore give locally Clifford-equivalent folded codes, and this correspondence preserves Pauli weight.

CSS orthogonality and Eq.~\eqref{eq:folded-twisted-pairing} give the folded stabilizer and normalizer spaces:
\begin{align}
S_{\mathrm{fold}} &= F_\pi(S_X),\qquad N_{\mathrm{fold}}=F_\pi(\ker H_Z). \label{eq:folded-stabilizer-normalizer}
\end{align}
Indeed, for $h,h'\in S_X$, the identity $\pi(h')\in S_Z$ and CSS orthogonality give $h\mathbin{\cdot}\pi(h')=0$, so the folded stabilizers commute.
Moreover,
\begin{align}
F_\pi(a)\in N_{\mathrm{fold}} &\Longleftrightarrow a\mathbin{\cdot}\pi(h)=0\quad\text{for every }h\in S_X \nonumber\\
&\Longleftrightarrow a\in S_Z^\perp=\ker H_Z.
\end{align}
These identities establish a bijection between parent $X$-logical classes and folded logical Pauli classes.
If the parent check matrices have rank $r$, the parent encodes $2n-2r$ logical qubits and the fold encodes $n-r$.

The folded Pauli weight counts occupied pairs, with a doubly occupied pair contributing one $Y$ operator.
Writing $c_\pi(a)$ for the number of pairs on which both coordinates of $a$ are one gives
\begin{align}
\wt_{\mathrm P}(F_\pi(a)) &= \wt(a)-c_\pi(a). \label{eq:folded-pauli-weight}
\end{align}
Thus $\wt(a)/2\leq\wt_{\mathrm P}(F_\pi(a))\leq\wt(a)$.
Since $\pi$ exchanges the two CSS sides and preserves Hamming weight, the parent $X$ and $Z$ distances coincide.
The code parameters therefore satisfy
\begin{align}
n_{\mathrm{fold}} = \frac{n'}{2},\quad k_{\mathrm{fold}}=\frac{k'}{2}, \quad
\left\lceil\frac{d'}{2}\right\rceil \leq d_{\mathrm{fold}}\leq d'. \label{eq:folded-code-parameters}
\end{align}
Halving preserves the encoding rate and never increases the supplied check weights.

We apply this procedure to the pair-partition construction in Sec.~\ref{sec:cpmpp}, which lies at the Pareto frontier of ultra-high-rate CSS code constructions, although the same method can be applied to other code families such as certain product codes and bivariate bicycle codes~\cite{bravyi2024high,liang2025self} (see Appendix~\ref{sec:zsz-halving}). Recall that the two CSS check matrices of the PP construction are specified by $J\times L$ arrays $E$ and $D$ of exponents over $\mathbb{Z}_P$, with each exponent lifted to a $P\times P$ circulant permutation matrix. We then have the following criterion:

\begin{Proposition}[CPM halving criterion]
    Suppose there is a permutation $\rho$ of the block rows, a fixed-point-free involution $\sigma$ of the block columns, an element $\eta\in\mathbb{Z}_P$ satisfying $\eta^2\equiv1\pmod P$, row offsets $\alpha_j$, and column offsets $\beta_\ell$ satisfying $\beta_{\sigma(\ell)}=-\eta\,\beta_\ell$. Here, $\alpha_j$ and $\beta_\ell$ are elements of $\mathbb{Z}_P$ added uniformly to all exponents in block row $j$ and block column $\ell$, respectively. If the exponent arrays of a CSS parent satisfy
\begin{equation}
    D_{j,\ell}=\eta\,E_{\rho(j),\sigma(\ell)}+\alpha_j+\beta_\ell \pmod P,
    \label{eq:halving_condition}
\end{equation}
then the parent admits symplectic halving under the qubit involution
\begin{equation}
    \pi(\ell,t)=\bigl(\sigma(\ell),\,\eta\,(t+\beta_\ell)\bigr),
    \label{eq:halving_involution}
\end{equation}
where $\ell$ labels a block column and $t\in\mathbb Z_P$ labels a
qubit within that block. The halved code has blocklength $LP/2$.
\end{Proposition}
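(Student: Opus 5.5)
The plan is to reduce the claim to the definition of symplectic halving. I will show that the proposed map $\pi$ of Eq.~\eqref{eq:halving_involution} is a fixed-point-free involution of the $LP$ qubits, and that it sends the support of every lifted row of $H_X$ onto the support of a lifted row of $H_Z$, bijectively on rows. Once this holds, $\pi$ permutes the check generators themselves, not just the spans, so $\pi(S_X)=S_Z$ at the level of generators. Then we pick one endpoint $q_i$ from each pair $\{q_i,\pi(q_i)\}$ and place it in the first half of the ordering, with its partner in the second half. We write $H_X=(A\mid B)$ in this ordering and order the $Z$-generators as the $\pi$-images of the $X$-generators. Exchanging the two halves is exactly the column action of $\pi$, so this ordering gives $H_Z=(B\mid A)$, which is Eq.~\eqref{eq:halving_parent}. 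The halved code then lives on $LP/2$ qubits; this needs $L$ even, which is guaranteed both by the construction and by the existence of $\sigma$.

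\emph{Step 1 (involution).} Since $\eta^2\equiv1$, $\eta$ is a unit in $\mathbb Z_P$, so $\pi$ is a bijection. Applying $\pi$ twice gives
\[
\pi^2(\ell,t)=\bigl(\sigma^2(\ell),\,\eta^2(t+\beta_\ell)+\eta\beta_{\sigma(\ell)}\bigr).
\]
Using $\sigma^2=\mathrm{id}$, $\eta^2=1$ and $\eta\beta_{\sigma(\ell)}=-\eta^2\beta_\ell=-\beta_\ell$, the right-hand side collapses to $(\ell,t)$. Moreover $\pi$ has no fixed points, because $\sigma$ already moves every block column.

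\emph{Step 2 (rows map to rows).} With the convention that $C(s)$ has row $a$ supported in column $a-s$, lifted $X$-row $(i,a)$ meets block column $\ell$ at $t=a-e_{i\ell}$. Lifted $Z$-row $(j,b)$ meets block column $m$ at $t=b-d_{jm}$. Under $\pi$, the $X$-row support becomes
\[
\{(\sigma(\ell),\,\eta(a-e_{i\ell}+\beta_\ell))\}_\ell.
\]
Reindex by $m=\sigma(\ell)$ and use $\eta\beta_{\sigma(m)}=-\beta_m$. The coordinate in block column $m$ becomes $\eta a-\eta e_{i,\sigma(m)}-\beta_m$. Substituting Eq.~\eqref{eq:halving_condition} into the $Z$-row coordinate gives
\[
b-d_{jm}=b-\eta e_{\rho(j),\sigma(m)}-\alpha_j-\beta_m .
\]
These coordinates agree for every $m$ exactly when $i=\rho(j)$ and $b=\eta a+\alpha_j$. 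Hence $\pi$ maps $X$-row $(\rho(j),a)$ onto $Z$-row $(j,\eta a+\alpha_j)$. Because $\rho$ is a permutation and $a\mapsto\eta a+\alpha_j$ is a bijection of $\mathbb Z_P$, this is a bijection between the two generator sets.

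The main obstacle is the bookkeeping in Step 2: keeping the sign convention of $C(s)$ straight and using the twisted offset relation $\beta_{\sigma(\ell)}=-\eta\beta_\ell$ in the right direction. That relation has to serve both Step 1, where it makes $\pi$ square to the identity, and Step 2, where it cancels the $\beta$ shift against the $\beta_m$ in Eq.~\eqref{eq:halving_condition}. I would first verify the identity $\eta\beta_{\sigma(m)}=-\beta_m$ separately, then run the support comparison block column by block column. The remaining step, rewriting the matrices as $(A\mid B)$ and $(B\mid A)$, is routine.
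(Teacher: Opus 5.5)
Your proposal is correct and follows the paper's proof essentially step for step. Both arguments use the same computation of $\pi^2$ via $\eta^2=1$ and $\beta_{\sigma(\ell)}=-\eta\beta_\ell$, the same observation that fixed-point-freeness comes from $\sigma$, and the same direct comparison of lifted supports showing that $\pi$ carries each check row onto a row of the opposite type. You map $X\to Z$ via $(\rho(j),a)\mapsto(j,\eta a+\alpha_j)$, whereas the paper maps $Z\to X$ via $(j,a)\mapsto(\rho(j),\eta(a-\alpha_j))$, which is the inverse correspondence. Your explicit $(A\mid B)$, $(B\mid A)$ bookkeeping just spells out what the paper leaves implicit.

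One small overstatement in Step 2: the supports agree \emph{if} $i=\rho(j)$ and $b=\eta a+\alpha_j$, but not necessarily \emph{only if}. If two block rows of $E$ differ by a constant, they produce duplicate check rows and hence other matches. Since your argument only uses the ``if'' direction, this does not affect the proof.
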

\begin{proof}
    Applying $\pi$ twice gives
\[
    \pi^2(\ell,t)
    =\bigl(\ell,t+\beta_\ell+\eta\beta_{\sigma(\ell)}\bigr)
    =(\ell,t).
\]
Here we used $\eta^2=1$ and
$\beta_{\sigma(\ell)}=-\eta\beta_\ell$.
There are no fixed qubits because $\sigma$ has no fixed block columns.
A $Z$-check in block row $j$ with row coordinate $a$ meets block
column $\ell$ at $t=a-D_{j,\ell}$. Under $\pi$, its lift coordinate is
\[
    \eta(a-D_{j,\ell}+\beta_\ell)
    =\eta(a-\alpha_j)-E_{\rho(j),\sigma(\ell)}.
\]
Thus $\pi$ maps this check to the $X$-check in block row $\rho(j)$
with row coordinate $\eta(a-\alpha_j)$. These row maps are bijections,
so $\pi$ exchanges the two check spaces and gives a symplectic halving.
Its $LP/2$ qubit orbits give the stated blocklength.
\end{proof}

For prime $P$, $\eta^2\equiv1\pmod P$ implies $\eta=\pm1$.
For composite $P$, there may be additional square roots of unity and
hence additional dualities of the same form. We focus below on the
two universal choices $\eta=\pm1$.

The two choices give different involutions. For $\eta=+1$, the involution acts within each block by a translation, $\pi(\ell,t)=(\sigma(\ell),t+\beta_\ell)$, with antisymmetric offsets $\beta_{\sigma(\ell)}=-\beta_\ell$; we call this a \emph{plain involution}. For $\eta=-1$, it also reverses the lift index, $\pi(\ell,t)=(\sigma(\ell),-t-\beta_\ell)$, with symmetric offsets $\beta_{\sigma(\ell)}=\beta_\ell$; we call this a \emph{reverse involution}. The column gauge freedom $\tau_\ell$ of Eq.~\eqref{eq:cpmpp-gauge} acts on the offsets as $\beta_\ell\mapsto\beta_\ell+\tau_\ell-\eta\,\tau_{\sigma(\ell)}$, which preserves the required symmetry and can always be used to set $\beta=0$. Modulo gauge, the two cases read $D_{j,\ell}=\eta\,E_{\rho(j),\sigma(\ell)}$ and $\pi(\ell,t)=(\sigma(\ell),\eta\,t)$. Thus, a plain involution exchanges the paired block columns but maintains the cyclic direction, while a reverse involution additionally inverts each cyclic block. A given parent may admit involutions of both types, although we find that for some values of $J$ and $L$ (e.g. $(J,L)=(4,10)$), the plain involution constraints limit the achievable code distance.

The two involution types also differ in how the inherited cyclic symmetry acts after folding. For a plain fold, translation acts by a folded-qubit permutation, possibly combined with local Hadamards, and therefore preserves Pauli weight. For a reversing fold, the two Pauli components are translated in opposite directions, so their overlap and hence the folded Pauli weight can vary along a translation orbit. We analyze the resulting logical-orbit structure further in Sec.~\ref{sec:logical_profile}.

Rather than first searching for pair-partition CSS codes and then testing whether they happen to satisfy Eq.~\eqref{eq:halving_condition}, we impose the halving constraint at the beginning of the search. CSS orthogonality constrains the exponent differences $T_{i,j}(\ell)=E_{i,\ell}-D_{j,\ell}$ of the parent CSS code, requiring that for a given $i$ and $j$ the values across different $\ell$ occur in equal pairs, so that their circulants cancel. Taking $\rho$ to be the identity, which is always possible by relabeling the $Z$-check block rows, and substituting Eq.~\eqref{eq:halving_condition} gives
\begin{equation}
    T_{i,j}(\ell)=E_{i,\ell}-\eta\,E_{j,\sigma(\ell)}-\beta_\ell-\alpha_j \pmod P,
    \label{eq:halving_pair_partition}
\end{equation}
whose last term is constant across $\ell$ and therefore does not affect the pairing.
The cells $(i,j)$ and $(j,i)$ are related by $T_{j,i}(\ell)=-\eta\,T_{i,j}(\sigma(\ell))-\alpha_i-\eta\,\alpha_j$, and since even multiplicity survives both negation and a uniform shift, the second cell is determined by the first. Consequently, only $\binom{J}{2}$ off-diagonal cells require independent pair partitions, rather than all $J^2$ cells of the original construction. For a reverse involution $T_{i,i}(\ell)=T_{i,i}(\sigma(\ell))$ identically, so the diagonal orthogonality conditions require no additional equations. In either case, the $D$ array is derived from $E$, reducing the number of exponent variables from $2JL$ to $JL+J+L/2$ before accounting for gauge choices. The halving constraint thus defines a smaller direct search problem.

\begin{figure*}[!t]
    \centering
    \includegraphics[width=\textwidth]{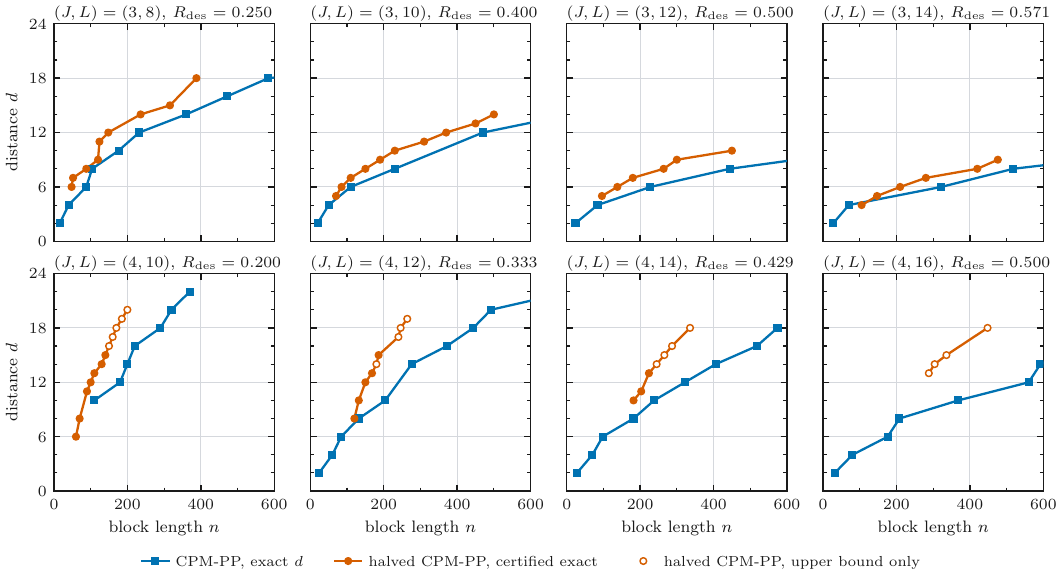}
    \caption{\textbf{Halving at matched degree parameters.}
    The halved CPM-PP frontier (red) against the CPM-PP frontier (blue) at
    matched $(J,L)$, where the halved construction improves the blocklength needed for a given code distance and rate, at the cost of being non-CSS.}
    \label{fig:halved-vs-css}
\end{figure*}

\subsection{Halved Pair-Partition Code Instances}

We instantiate this template by varying the degree parameters and lift size according to the design flow of Sec.~\ref{sec:construction}. We show the Pareto frontier for eight $(J,L)$ families in Fig.~\ref{fig:halved-frontier}, which contains compact instances with parameters such as $[[90,21,11]]$, $[[140,31,15]]$, and $[[200,43,20]]$. As before, increasing column weight improves the distance achievable at a given blocklength. Representative instances of each family are shown in Table~\ref{tab:noncss-code-catalogue}, and a more complete table can be found in Table~\ref{tab:noncss-code-catalogue-full}, see also the accompanying repository for details~\cite{designprinciplesdata}.

At fixed degree parameters, halved instances reach a given distance at a shorter blocklength than the corresponding CSS pair-partition instances. Figure~\ref{fig:halved-vs-css} shows this comparison, where we find a median best improvement at matched $(J,L)$ of 1.6, with the improvement more prominent for the $J=4$ family.

\begin{table*}[t]
\begin{ruledtabular}
\begin{tabular}{lcccccc}
Code & $k$ & Shots & Failures & $\varepsilon$ & $\varepsilon_{\mathrm{round}}$ & $\varepsilon_{\mathrm{round}}^{\mathrm{logical}}$ \\
\colrule
$[[186,65,14]]$ & 65 & $5.58\times 10^{8}$ & 0 & $<5.4\times 10^{-9}$ & $<2.7\times 10^{-10}$ & $<4.1\times 10^{-12}$ \\
$[[95,22,11]]$  & 22 & $3.73\times 10^{8}$ & 10 & $2.7^{+2.2}_{-1.4}\times 10^{-8}$ & $1.3^{+1.1}_{-0.7}\times 10^{-9}$ & $6.1\times 10^{-11}$ \\
\colrule
$[[121,1,11]]$ & 1 & $3.41\times 10^{8}$ & 38 & $1.1^{+0.4}_{-0.3}\times 10^{-7}$ & $5.6^{+2.1}_{-1.6}\times 10^{-9}$ & $5.6\times 10^{-9}$ \\
\end{tabular}
\end{ruledtabular}
\caption{Circuit-level memory simulations of two halved instances at physical
error rate $p=0.1\%$ over $N_{\mathrm r}=20$ syndrome extraction rounds, with a
distance-matched rotated surface code for comparison. $\varepsilon$ is the block
logical error rate per shot, $\varepsilon_{\textrm{round}}$, $\varepsilon_{\textrm{round}}^{\textrm{logical}}$ the corresponding per-round and per-logical-per-round error rates. All numbers use $95\%$ confidence intervals. The $[[186,65,14]]$ residual is empty, so its rates are $95\%$ upper bounds from the rule of three.}
\label{tab:noncss-circuit}
\end{table*}

\begin{figure}[t]
    \centering
    \includegraphics[width=\columnwidth]{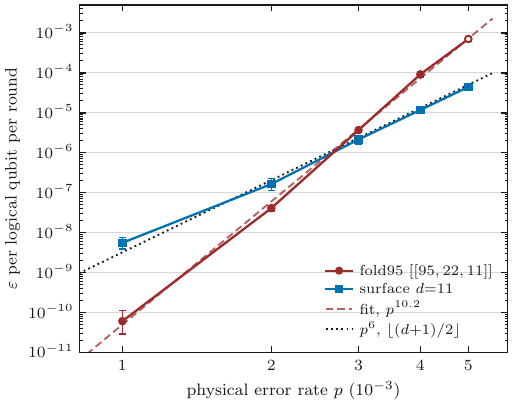}
    \caption{\textbf{Circuit-level $[[95,22,11]]$ memory comparison with the surface code.}
    Logical error rate per logical qubit per round over
    $N_{\mathrm r}=20$ rounds. We compare the halved CPM-PP $[[95,22,11]]$ with a hierarchical decoder (red), against a
    $d=11$ rotated surface code with belief matching~\cite{higgott2023improved1}
    (blue), under the same noise model.}
    \label{fig:fold95-ler}
\end{figure}

\subsection{Circuit-Level Simulations}

To test whether the parameter advantage survives practical noise, we perform circuit-level memory simulations of these non-CSS codes. We construct syndrome extraction schedules by searching for lowest-depth schedules that satisfy the commutation constraints. We use a circuit-level depolarizing noise model at physical error rate $p=0.1\%$, with depolarizing noise on two-qubit gates, resets, and measurements, no idling error, and noiseless state preparation and final measurement due to the lack of transversal initialization and measurement for non-CSS codes. We decode with a hierarchical decoder consisting of multiple tiers of BP and relay-BP~\cite{muller2025improved}, followed by integer programming fallback~\cite{zhao2026towards,bhardwaj2026high,delfosse2020hierarchical}.
Table~\ref{tab:noncss-circuit} reports results for two halved instances over $N_{\mathrm r}=20$ syndrome extraction rounds, together with the surface-code baseline, and Fig.~\ref{fig:fold95-ler} shows the $[[95,22,11]]$ instance across physical error rates.
We note that at $p=0.1\%$, where the decoder parameters were tuned at, all failures we observe for the $[[95,22,11]]$ occur at earlier tiers of the hierarchical decoder, indicating room for further improvement.
We also compare a $d=11$ rotated surface code simulated under the same noise model, the same $N_{\mathrm r}=20$, and the same noiseless preparation and readout, decoded by belief matching~\cite{higgott2023improved1}.
Despite using $28\times$ fewer qubits to encode the same number of logical qubits, the $[[95,22,11]]$ code achieves a better per-round logical error rate than the surface code at $p=0.1\%$.
The halved CPM-PP code also shows a steeper slope in this range, which may be partially due to its decoder being tuned at $p=0.1\%$.
These results suggest that the compact parameters of the halved codes survive noisy syndrome extraction, although a more careful accounting of state preparation and readout would be desirable in future work.

To close this section, we briefly comment on the architectural implications of using general stabilizer codes. CSS codes possess transversal $X$- and $Z$-basis preparation and readout, as well as transversal CNOT gates, which can be convenient for block-parallel operations. However, for selective gates, the most common approach is to utilize code surgery constructions~\cite{surgery2022, williamson2024low,swaroop2024universal,yuan2026parsimoniousquantumlowdensityparitycheck}, which can be performed on both CSS and non-CSS codes, presenting a promising approach to logical computation with high rate and distance at compact blocklengths. We leave detailed analysis of the surgery overheads and architectural trade-off comparisons to future work.

\begin{figure*}[!t]
    \centering
    \includegraphics[width=0.9\textwidth]{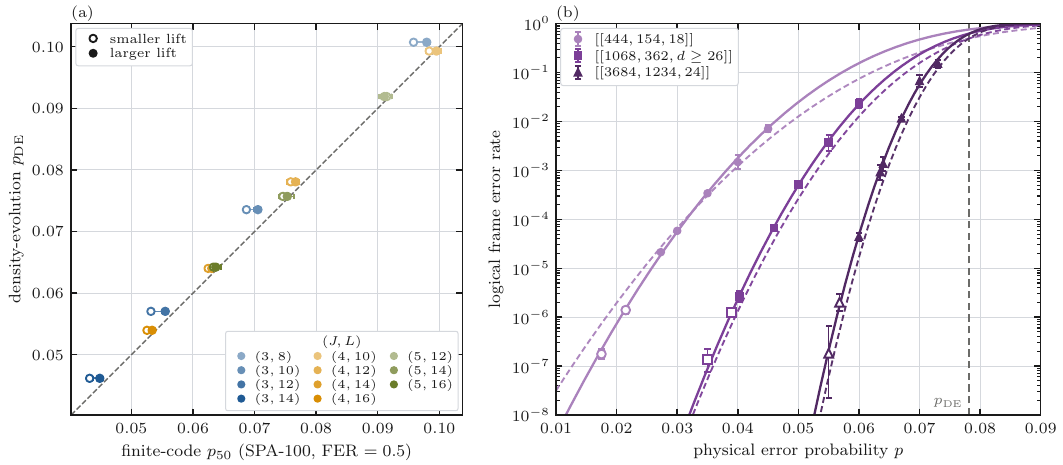}
    \caption{\textbf{Density evolution as a predictor of finite-code
    performance.}
    (a) CPM-PP physical error rate $p_{50}$ that results in $\textrm{LER}=0.5$, versus the regular
    ensemble density-evolution threshold, showing good agreement.  Each point
    pools three independently constructed matrices at fixed $(J,L,P)$.  Open
    and filled circles denote $P=97$ and $P=307$, respectively, and colors
    identify $(J,L)$.  The diagonal is $p_{50}=p_{\mathrm{DE}}$, and
    horizontal error bars are 95\% intervals.  (b) Code-capacity frame error
    rate for three different $(J,L)=(4,12)$ codes. Solid
    curves are Gaussian fits
    $\mathrm{FER}(p)=\Phi\bigl((p-p_{50})/\sigma\bigr)$ with $p_{50}$ and
    $\sigma$ fitted independently for each code.  Dashed curves use the common density evolution threshold
    $p_{\mathrm{DE}}$ and a single width prefactor $\sigma=\alpha/\sqrt{n}$
    shared by the three codes, fitted here with $\alpha=0.265$.
    The vertical line is the $(4,12)$ regular density-evolution heuristic
    $p_{\mathrm{DE}}=0.078$.}
    \label{fig:cpmpp-de}
\end{figure*}

\section{Ensemble Guidance for Degree Selection}
\label{sec:de}

The construction template specifies how stabilizer commutation is enforced, while the degree parameters provide further choices for code design.
At fixed design rate, increasing the average number of checks incident on each qubit also increases the average check weight.
Choosing these degrees therefore requires balancing their effects on decoding and distance.
Classical ensemble methods provide a starting point by studying codes with prescribed degree distributions rather than resolving every individual instance~\cite{richardson2001capacity,richardson2008modern,Gallager1960}.

We therefore study two asymptotic ensemble calculations inspired by their classical counterparts: A heuristic generalization of density evolution to quantum CSS codes gives the belief-propagation threshold under depolarizing code-capacity noise, which influences code properties in the high physical error waterfall regime. In turn, the ensemble weight-growth exponent gives an estimate of the typical minimum-distance scale, which influences code properties in the low physical error rate error-floor regime. These motivate the choices of degree distributions used in finite-size code constructions.
Both methods focus on the classical codes associated with a CSS quantum code and thus serve as heuristic guides that we compare concrete instances against.

\subsection{Quantum Density Evolution and BP Thresholds}

Density evolution is a well-established method for analyzing random LDPC codes under belief propagation (BP) decoding~\cite{richardson2001capacity,richardson2008modern,Gallager1960}. Instead of tracking individual messages on a particular code, it considers a random ensemble with a prescribed degree distribution. We associate a random message variable with each type of qubit-to-check and check-to-qubit connection and update its probability distribution using the BP rules. At each iteration, we approximate the neighborhood of a typical edge as tree-like, allowing us to treat incoming messages as independent. The asymptotic BP threshold is the largest channel error rate for which these distributions converge to a zero-error fixed point as the number of iterations grows.

We apply density evolution to the quantum problem by applying the same idea on the joint $X/Z$ decoding graph; see Appendix~\ref{sec:de_details} for details of its application to CPM-PP and lifted product codes. Before comparing with numerical results, it is worth noting some of the limitations of this method. First, our implementation of density evolution ignores CSS orthogonality and pair-partition constraints, which in particular may miss some of the local structure imposed by the CSS condition. Second, it does not provide accurate predictions in the regime of very low error rates (below the logical error rates in Fig.~\ref{fig:cpmpp-de}(b)), which may be dominated by distance or trapping sets. Finally, it only provides an approximation for BP convergence, which may not be representative of what can be achieved with more accurate decoders.

We compare the density-evolution prediction with simulations on the depolarizing code-capacity channel using joint-Pauli sum-product algorithm belief propagation. Let $p_{\mathrm{DE}}$ denote the population
density-evolution threshold and let $p_{50}(n)$ denote the physical error rate
at which a finite code of blocklength $n$ reaches logical frame error rate
$0.5$. The numerical definitions, simulation settings, and CPM-PP matrices
used in these comparisons are given in Appendix~\ref{sec:de-matrices}.

Figure~\ref{fig:cpmpp-de}(a) compares $p_{50}$ with $p_{\mathrm{DE}}$ across the degree distributions considered here and at two lift sizes, averaging over three randomly-drawn instances at each. The observed BP threshold shows good agreement with the density evolution predictions, with the agreement improving as the block size increases. We also show the logical error rate across a wider range of physical error rates in Fig.~\ref{fig:cpmpp-de}(b) and Appendix~\ref{sec:de-matrices}, where we find that the logical error rate across a wide range of instances is well-described by a single Gaussian function with width inverse-proportional to $\sqrt{n}$. These results indicate that density evolution provides a simple, predictive design tool to determine code capacity BP thresholds and waterfall-region logical error rates for quantum codes.

\subsection{Ensemble Distance Trends}
\label{sec:distance}

Density evolution characterizes the logical error rates in the waterfall regime close to the threshold, in which entropic effects dominate and the combination of many logical error contributions result in a Gaussian shape (Fig.~\ref{fig:cpmpp-de}(b)).
At low physical error rates, the logical error rate is instead dominated by minimum-weight logical operators with weight close to the distance or decoder trapping sets~\cite{pacenti2026trapping}, motivating the ensemble analysis of code distance.

To this end, we adapt Gallager's weight-enumerator argument~\cite{Gallager1960},
which we review in Appendix~\ref{sec:ensemble-distance-derivation}.  The ensemble estimate counts, on average, how many classical codewords occur at each weight and uses a saddle-point approximation to locate where this count changes from exponentially suppressed to exponentially large, providing a heuristic distance scale for the quantum codes.
As with the previous section, we apply this calculation to the classical code
associated with a CSS quantum code.  The resulting scale is therefore a heuristic guide: it does not capture the quotient by stabilizer
degeneracies or the permanent-based upper bounds specific to the CPM-PP
construction (Appendix~\ref{sec:cpmpp-structural-bounds}).

\begin{figure}[!t]
    \centering
    \includegraphics[width=\columnwidth]{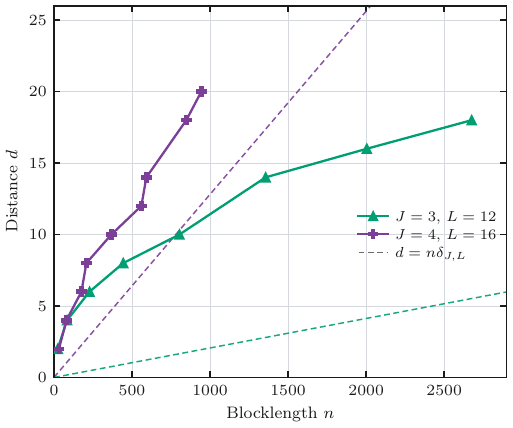}
    \caption{\textbf{Column weight at fixed design rate.}  Exact-distance
    Pareto frontiers of the two CPM-PP families whose design rate $1-2J/L$ is
    exactly $1/2$. Raising the column weight from
    $J=3$ to $J=4$ reaches $d=18$ at $n=848$ rather than $n=2676$, a factor of
    $3.2$ fewer qubits at the same rate, at the cost of check weight $16$
    instead of $12$.  Dashed lines are the regular-ensemble classical estimates
    $d=n\delta_{J,L}$.}
    \label{fig:distance-trend}
\end{figure}

We compare the exact distances of two CPM-PP code families with design rate $1/2$ and column weight $J=3, 4$ in Fig.~\ref{fig:distance-trend}.  Raising $J$ from 3 to 4 reaches distance 18 with a factor of 3.2 fewer qubits, consistent with the qualitative expectations of the ensemble estimate.
For $J=2$, each column of a binary parity-check matrix can be
viewed as an edge connecting two check nodes. The associated
classical code is a cycle code, whose minimum distance equals
half the Tanner girth. In a CSS code, the quantum distance
instead depends on cycles that are nontrivial modulo the
stabilizer space.

The classical ensemble calculation therefore motivates choosing
$J\geq3$ for the regular CPM-PP families studied here.
For basis-symmetric regular CSS codes, the design-rate relation
$R_{\mathrm{des}}=1-2J/L$ then gives $L\geq12$ at
$R_{\mathrm{des}}=1/2$.

While these results are qualitatively predictive, there is still a notable deviation between the linear ensemble distance estimate and the actual code distances, indicating the limitations of this method in predicting quantum code distances.

\section{Finite-Size Comparisons and Design Tradeoffs}
\subsection{Comparison Across Construction Families}
\label{sec:comparison}

\begin{figure*}[t]
    \centering
    \includegraphics[width=\textwidth]{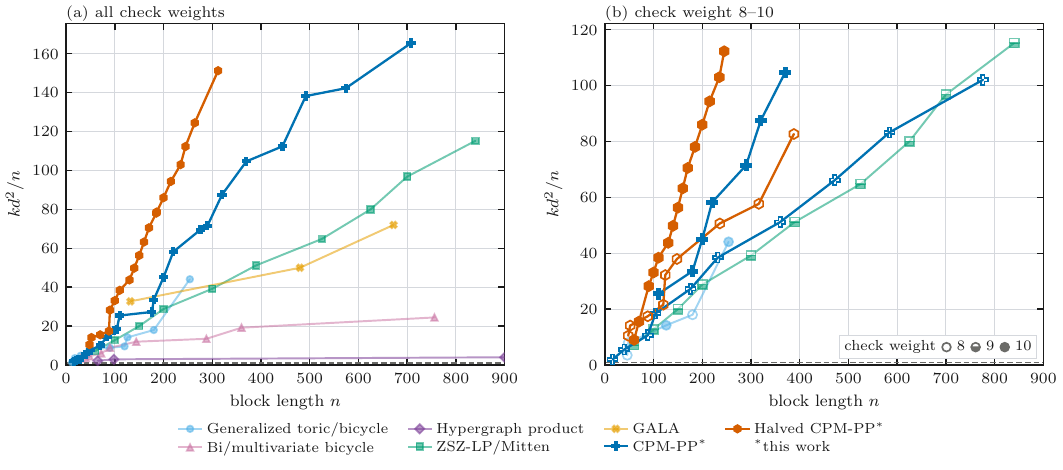}
    \caption{\textbf{Encoding efficiency across construction families.}
    $kd^2/n$ frontier as a function of blocklength $n$ for different code families~\cite{liang2025generalized,panteleev2019degenerate,bravyi2024high,voss2024multivariate,roffe2020decoding,pecorari2025high,hong2026quantum,bhardwaj2026high,yang2026designer}.
    (a) All check weights.
    (b) The same comparison restricted to check weights eight to ten, with the
    pair-partition templates $(3,8)$ and $(4,10)$ drawn as separate series and
    check weight distinguished by marker fill.
    The dashed line marks the rotated surface code, $kd^2/n=1$.
    Each code contributes the distance its source quotes, which is an upper
    bound wherever the distance is not yet exact; this applies to our
    families and to the comparison families alike.}
    \label{fig:family-kd2n}
\end{figure*}

We now compare our constructions against existing families, and show how the templates we design are able to significantly improve the Pareto frontier of code parameters.
Motivated by the BPT bound~\cite{bravyi2010tradeoffs}, we quantify the efficiency of our codes with the parameter $kd^2/n$, which is 1 for the surface code, and aim to minimize the blocklength required to achieve a given value.

The results are shown in Fig.~\ref{fig:family-kd2n}(a). Earlier bivariate bicycle codes achieve $kd^2/n$ of around 25. This is improved upon by the recent ZSZ-LP/Mitten code construction~\cite{bhardwaj2026high,hong2026quantum}, in which the parameter frontier continues growing as the blocklength increases, reaching $kd^2/n\sim 100$ around $n\sim 800$. In contrast, our constructions reach this frontier much earlier, with the CSS CPM-PP construction reaching $kd^2/n\sim 100$ around a blocklength of 370 and the halved version reaching it at blocklength just over 200. At blocklength 200, our best construction improves the $kd^2/n$ value by around a factor of 3. Also shown are the GALA codes~\cite{yang2026designer}, a recent family co-designed for reconfigurable neutral-atom arrays, whose distances are certified exactly; at check weight 12 they reach $kd^2/n=72$ at $n=672$ within this window.

The preceding comparison includes all check weight values, where a higher check weight is in general expected to lead to better parameters. We therefore specialize the comparison to codes with comparable check weight of 8-10 in Fig.~\ref{fig:family-kd2n}(b). We find that the advantages afforded by our construction templates persist: at matched check weight the halved-PP family reaches $kd^2/n=112$ at $n=245$ and the CSS family reaches $kd^2/n=105$ at $n=370$, whereas the non-Abelian lifted products need $n=840$ to reach $kd^2/n=115$.

\subsection{Tradeoff Between Distance and Check Weight}
\label{sec:check_weight}

The ensemble calculations suggest the data qubit degree as an important design parameter. At fixed design rate, increasing this degree requires heavier stabilizer checks and reduces thresholds, but can also support a larger distance at the same blocklength. The density-evolution analysis of Sec.~\ref{sec:de} predicts one cost of this choice: larger degrees reduce the belief-propagation code-capacity threshold. Under circuit-level noise, the penalty can be even stronger, since a heavier check requires more two-qubit gates and therefore introduces more fault locations during syndrome extraction.

To examine this tradeoff, we compare two CSS pair-partition codes with similar blocklengths and encoding rates but different degree profiles and distance: a $[[530,216,12]]$ code with check weight 10, constructed with $(J,L)=(3,10)$, and a $[[518,228,16]]$ code with check weight 14, constructed with $(J,L)=(4,14)$. We use a syndrome extraction schedule with the same depth as the check weight, found via block-wise scheduling of the protograph, and optimized for both circuit-level distance and 4-cycle count. We find a circuit-level distance upper bound of 10 and 14, respectively, two below the nominal code distances. We simulate 20 rounds of syndrome extraction at physical error rate $p=0.1\%$, using the same idling-free circuit-level noise model as Ref.~\cite{zhao2026towards}. We perform decoding using the hierarchical decoding approach described in Ref.~\cite{zhao2026towards,delfosse2020hierarchical,bhardwaj2026high}, which consists of fast belief propagation (BP), relay-BP~\cite{muller2025improved}, and integer programming stages, using the single basis detector error model (DEM). Utilizing the joint DEM can improve performance, at the cost of increased runtime and reduced convergence.

For the $[[530,216,12]]$ code, we observe 13 failures in $3.144\times 10^8$ shots, corresponding to a per-round logical error rate (LER) of $2.1^{+0.7}_{-0.6}\times10^{-9}$ and per-logical-per-round LER of $9.6^{+3.5}_{-2.6}\times10^{-12}$. All failures happen at earlier stages, although some of them persist even under integer programming decoding of the single basis DEM.
For the $[[518,228,16]]$ code, we observe 3 failures in $10^9$ shots, corresponding to a per-round LER of $1.5^{+1.5}_{-0.8}\times10^{-10}$ and per-logical-per-round LER of $7^{+6}_{-4}\times10^{-13}$, all occurring as time-outs of integer programming decoding.
All error bars quoted are $1\sigma$.
Despite its heavier checks, the $[[518,228,16]]$ code has a lower logical error rate, showing that the distance gain can outweigh the additional circuit faults in this operating regime.

This ordering can be understood from a simple threshold-versus-distance estimate. Suppose that the circuit-level threshold scales inversely with check weight, $p_{\mathrm{th}}\propto 1/w$, where $w$ denotes the check weight. Anchoring this relation to the effective threshold $p_{\mathrm{th}}\approx 0.7\%$ observed for the weight-9 mitten codes~\cite{bhardwaj2026high} gives an estimated threshold $p_{\mathrm{th}}^{(530)}\approx 0.63\%$ and $p_{\mathrm{th}}^{(518)}\approx 0.45\%$.

As a simple proxy, we use the leading distance scaling
\begin{equation}
    p_{\mathrm{L}}\propto
    \left(\frac{p}{p_{\mathrm{th}}}\right)^{d_{\mathrm{circ}}/2},
    \label{eq:check-weight-distance-scaling}
\end{equation}
where $p_{\mathrm{L}}$ is the per-round logical error rate and
$d_{\mathrm{circ}}$ the circuit-level distance.

At $p=0.1\%$, the two codes are factors of 6.3 and 4.5 below their
estimated thresholds, with estimated circuit-level distance 10 and 14.
Estimating the ratio between their logical error rates isolates the tradeoff,
\begin{equation}
    \frac{p_{\mathrm{L}}^{(530)}}{p_{\mathrm{L}}^{(518)}}\approx
    \frac{\left(1/6.3\right)^{5}}{\left(1/4.5\right)^{7}}\approx 3.8,
    \label{eq:check-weight-logical-ratio}
\end{equation}
where the two additional powers of suppression gain a factor $6.3^{2}\approx 40$, while the reduced threshold costs a factor $(6.3/4.5)^{7}\approx 11$. This is qualitatively consistent with the performance being better for the higher check weight code.
Under the same assumptions, the two estimates cross at $p_\times = [(p_{\mathrm{th}}^{(518)})^7/(p_{\mathrm{th}}^{(530)})^5]^{1/2}\approx 1.9\times 10^{-3}$, above which the ordering reverses and the lighter-check code is favored.
We observe that the integer programming stage of the decoder becomes very expensive at $p=0.2\%$, making direct verification of this estimate challenging.
The margin at $p=0.1\%$ is modest because the operating point sits only a factor of two below this crossing; the advantage of the higher-distance code grows rapidly as $p$ is reduced further.

Our analysis suggests that at commonly assumed physical error rates, an increased check weight may still be favorable in terms of logical error performance. However, it is worth emphasizing that there are many other factors that may favour a lower check weight:  increased check weights lead to a longer syndrome extraction cycle, which increases the physical runtime and space-time volume of the computation; it also increases the decoding time, which may become particularly challenging at higher physical error rates; finally, gadgets to perform logical operations may lead to a further increase in the check weight in practice.

\section{Designing Low Weight Canonical Bases}
\label{sec:logical_profile}

The preceding sections establish favorable encoding rates, distances, and circuit-level memory performance for the proposed codes.
These results do not, however, specify how the logical operators are organized or how costly they are to manipulate.
Addressing these questions is an essential step to go from a quantum memory to a processor.
In high-rate LDPC codes, logical operations can be implemented through generalized lattice surgery, whose resource requirements depend on the weight of the logical operator being addressed~\cite{surgery2022,williamson2024low,swaroop2024universal, yuan2026parsimoniousquantumlowdensityparitycheck}.
This motivates finding a complete, canonically paired logical basis with low-weight physical representatives, together with physical symmetries that permit surgery gadgets to be reused across different logical qubits~\cite{webster2025explicitconstructionlowoverheadgadgets,zheng2026logical,bhardwaj2026high,blue2026extractorslogicalprocessinghypergraph}. For example, such representatives have been developed for two-block group-algebra and bivariate bicycle codes~\cite{eberhardt2024logicaloperatorsfoldtransversalgates}.

In this section, we develop a toolbox for constructing symmetry-aware logical bases and optimizing their physical weights in any group-valued CSS code.
We first decompose the logical space into representation-theoretic sectors, called \emph{packets}, reducing the logical operator calculation to smaller problems over finite fields~\cite{panteleev2022quantum,lee2026logicalspectroscopyliftedproductcodes}.
This decomposition also identifies which sectors pair under logical commutation (Sec.~\ref{subsec:canonical-orbit-bases}).
We then provide a procedure to recombine packet logicals into canonically paired families generated by translated logical seeds.

This symmetry-aware organization provides a structured starting point for low-weight searches by exposing seed choices and packet recombinations compatible with canonical pairing.
We optimize these choices together with the physical representatives within stabilizer cosets.
In the examples studied below, this approach leads to lower-weight canonical bases than generic binary basis construction followed by stabilizer-coset minimization.
We call the resulting symmetry decomposition, orbit-span ranks, commutation pairings, and representative weights the \emph{logical profile}.
The construction does not require a lifted-product factorization: we demonstrate it on PP codes (Sec.~\ref{subsec:pp-logical-profile}) and extend the analysis to their halved non-CSS descendants (Sec.~\ref{subsec:halved-pp-profile}), where folding changes the commutation form and can alter physical weights along translation orbits.

\subsection{Logical modules and packet decomposition}

We now describe how the algebraic structure of the lift group $G$ organizes the search for a logical basis.
Let $G$ be a finite group whose right-regular action on every qubit and check block preserves the CSS incidence structure, and let $R_G=\F_2[G]$.
Identifying each regular $G$-orbit with $G$ gives a chain complex of free right $R_G$-modules,
\begin{align}
R_G^{r_Z}\xrightarrow{\ H_Z^\dagger\ }R_G^L\xrightarrow{\ H_X\ }R_G^{r_X},\qquad H^{\vphantom{\dagger}}_X  H_Z^\dagger &= 0.\label{eq:group-css-complex}
\end{align}
We emphasize $H_X$ and $H_Z$ are matrices over $R_G$. Here $L$ counts the regular qubit orbits, while $r_X$ and $r_Z$ count the regular orbits of $X$- and $Z$-check generators.
Vectors are columns, check-matrix entries act from the left, and $G$ acts by right translation.
The adjoint is $A^\dagger=(A^*)^{\mathsf T}$, where $*$ extends group inversion $g\mapsto g^{-1}$ linearly.

Since the check maps are right $R_G$-linear, their kernels and images are invariant under translation.
The logical spaces therefore inherit right $R_G$-module structures,
\begin{align}
{\cal L}_Z &= \ker H_X/\operatorname{im}H_Z^\dagger, & {\cal L}_X &= \ker H_Z/\operatorname{im}H_X^\dagger.\label{eq:logical-modules-short}
\end{align}
These quotients can be computed directly over $R_G$, but the presence of zero divisors means that ordinary Gaussian elimination cannot be applied unchanged\footnote{Replacing each group element by its left-regular binary permutation matrix gives the binary check matrices $\mathsf{H}_X$ and $\mathsf{H}_Z$.
This permits binary linear algebra, although a generic binary logical basis need not expose the module structure explicitly.}.

For Abelian $G$ of odd order, the commutative semisimple algebra $R_G$ decomposes into finite fields via
\begin{align} \label{eq:packet-decomposition-short}
\Phi:R_G &\xrightarrow{\sim}\prod_\Omega K_\Omega,\qquad K_\Omega\cong\F_{2^{f_\Omega}}.
\end{align}
The field factors are indexed by Frobenius orbits of characters, called \emph{packets} $\Omega$, and their sizes $f_\Omega=|\Omega|$ determine the corresponding field degrees~\cite{lee2026logicalspectroscopyliftedproductcodes}.
Write $\Phi_\Omega$ for the projection onto $K_\Omega$.
We identify $R_G=\F_2[u]/(u^P-1)$ and factor $u^P-1=\prod_\Omega g_\Omega(u)$ into irreducible polynomials.
The Chinese Remainder Theorem realizes the packet map as
\begin{align}
\Phi(h) &= \bigl(h\bmod g_\Omega\bigr)_\Omega,\quad K_\Omega=\F_2[u]/(g_\Omega(u)), \label{eq:crt-map-short}
\end{align}
Multiplication by $u$ implements cyclic translation, whose action in packet $\Omega$ is multiplication by $\alpha_\Omega:=\Phi_\Omega(u) \in K_\Omega$.

Applying $\Phi_\Omega$ entrywise to the check maps gives a complex over $K_\Omega$.
Using $A_\Omega:=\Phi_\Omega(A)$, its $Z$-logical homology is $\calL_{Z,\Omega}=\ker H_{X,\Omega}/\operatorname{im}(H_Z^\dagger)_\Omega$, with dimension
\begin{align}
m_{Z,\Omega} &= L-\rank_{K_\Omega}H_{X,\Omega}-\rank_{K_\Omega}(H_Z^\dagger)_\Omega. \label{eq:packet-rank-short}
\end{align}
Here, the adjoint $\dagger$ transposes the check matrix and replaces the shift variable $u$ by $u^{-1}$.
Evaluating at $u=\alpha_\Omega$ therefore gives $(H_Z^\dagger)_\Omega=H_Z(\alpha_\Omega^{-1})^{\mathsf T}$.
Let $m_\Omega := m_{Z,\Omega}$. The map $\Phi$ therefore induces the logical decomposition $\calL_Z \simeq \oplus_\Omega \calL_{Z,\Omega}$, where each $\calL_{Z,\Omega} \simeq K_\Omega^{m_\Omega}$.

Note that the multiplicity $m_\Omega$ counts independent logical vectors over $K_\Omega$, not over $\F_2$.
Choose a basis $\{ \ell_{i,\Omega} \}_{i=1}^{m_\Omega}$ of $\calL_{Z,\Omega}$ over $K_\Omega$.
Every packet logical class is a unique combination $\sum_i c_i \ell_{i,\Omega}$ with coefficients $c_i\in K_\Omega$.
In the cyclic case, $1,\alpha_\Omega,\ldots,\alpha_\Omega^{f_\Omega-1}$ form an $\F_2$-basis of $K_\Omega$, so each coefficient has a unique binary expansion $c_i = \sum_{j=0}^{f_\Omega-1}b_{ij}\alpha_\Omega^j$ with $b_{ij}\in\F_2$.
Consequently, the $m_\Omega f_\Omega$ vectors $\alpha_\Omega^j \ell_{i,\Omega}$, with $1\leq i\leq m_\Omega$ and $0\leq j<f_\Omega$, form a binary basis of $\calL_{Z,\Omega}$.
For each fixed $i$, these vectors are proportional over $K_\Omega$ but linearly independent over $\F_2$.
Thus the packet contains $m_\Omega$ independent field directions, each carrying $f_\Omega$ independent binary logical directions, and the total number of logical qubits is $k=\sum_\Omega f_\Omega m_\Omega$.
The $X$-logical calculation follows by interchanging the checks.

Accordingly, for logical seeds $v_1,\ldots,v_s$, let $v_{i,\Omega}=\Phi_\Omega(v_i)$ and let $r_\Omega$ be the dimension over $K_\Omega$ of their span in packet $\Omega$.
The binary dimension generated by all translates of these seeds is $\sum_\Omega f_\Omega r_\Omega$.
In particular, a single seed generates a \emph{regular logical orbit}, consisting of $|G|$ independent translates modulo stabilizers, exactly when it has a nonzero component in every packet.

The inverse map uses idempotents $e_\Omega\in R_G$ satisfying $\Phi_\Lambda(e_\Omega)=\delta_{\Lambda\Omega}$, so multiplication by $e_\Omega$ retains only packet $\Omega$.
For each $a_\Omega\in K_\Omega$ with its polynomial representative $a_\Omega(u)$ of degree less than $f_\Omega$, we have
\begin{align}
\Phi^{-1}\bigl((a_\Omega)_\Omega\bigr)
&= \sum_\Omega a_\Omega(u) e_\Omega(u)
\pmod{u^P-1}.\label{eq:inverse-packet-short}
\end{align}
Applying this map entrywise to packet kernel representatives reconstructs logical representatives in binary qubit coordinates.

\subsection{Canonical orbit bases and physical weights}
\label{subsec:canonical-orbit-bases}

Building upon packet decomposition, we now show how to construct canonically paired logical operators generated from a few seeds using group symmetry, and how to optimize their physical representatives while preserving this pairing.
For group-valued logical representatives $x,z\in R_G^L$, write $B(x,z):=\widehat{x}^{\mathsf T}\widehat{z}\in\F_2$ for their binary commutation pairing, which is well defined modulo stabilizers.
We call logical bases $\{x_a\}_{a=1}^k$ and $\{z_a\}_{a=1}^k$ \emph{canonically paired} when $B(x_a,z_b)=\delta_{ab}$.
Thus each $X$-logical basis operator anticommutes with its matching $Z$ operator and commutes with all other $Z$-basis operators.

We first show that translation invariance restricts logical commutation to reciprocal packets~\cite{lee2026logicalspectroscopyliftedproductcodes}.
Since the group action permutes physical qubits, $B(xg,z)=B(x,zg^{-1})$.
Extending this identity linearly gives $B(xa,z)=B(x,za^*)$ for $a\in R_G$, where $*$ is the group-algebra involution induced by inversion.
Let $e_\Omega$ denote the projector onto packet $\Omega$.
Group inversion sends this projector to that of the reciprocal packet, so $e_\Omega^*=e_{\Omega^{-1}}$.
Consequently,
\begin{align}
B(xe_\Omega,ze_\Lambda)
&= B(x,ze_\Lambda e_{\Omega^{-1}}) = 0,  \quad \forall \Lambda \neq \Omega^{-1}.
\end{align}
Because the full logical pairing is nondegenerate and no other packets can pair, its restriction to each reciprocal packet pair is also nondegenerate.
In particular, $m_{X,\Omega^{-1}}=m_{Z,\Omega}$, and dual generators can be constructed independently within each reciprocal packet pair~\cite{lee2026logicalspectroscopyliftedproductcodes}.

\vspace{5pt} \noindent \emph{Recombination into regular orbits}. Each regular logical orbit supplies one field direction in every packet.
The maximum number of mutually independent regular orbits on each Pauli side is therefore
\begin{align}
q &= \min_\Omega m_{Z,\Omega} = \min_\Omega m_{X,\Omega}.
\label{eq:canonical-regular-existence}
\end{align}
For each $\Omega$, choose representatives $x_{i,\Omega^{-1}}\in\ker H_{Z,\Omega^{-1}}$ of $q$ independent $X$-logical classes over $K_{\Omega^{-1}}$.
Find their partners $z_{j,\Omega}\in K_\Omega^L$ by solving
\begin{align}
H_{X,\Omega}z_{j,\Omega} &= 0, \qquad (x_{i,\Omega^{-1}}^*)^{\mathsf T}z_{j,\Omega} = \delta_{ij}.
\label{eq:packet-dual-system}
\end{align}
Here $*:K_{\Omega^{-1}}\to K_\Omega$ is the field isomorphism induced by group inversion.

Once $q$ canonical pairs across all packets are obtained, for each seed index $i \in \{1,...,q\}$, apply the inverse packet map in Eq.~\eqref{eq:inverse-packet-short} to obtain $q$ pairs of group-valued logical representatives $x_i,z_i\in R_G^L$:
\begin{align}
x_i &= \Phi^{-1}\bigl((x_{i,\Omega})_\Omega\bigr), \quad z_i = \Phi^{-1}\bigl((z_{i,\Omega})_\Omega\bigr).
\label{eq:canonical-seed-recombination}
\end{align}
Eq.~\eqref{eq:packet-dual-system} makes every packet component of $C(x_i,z_j):=x_i^\dagger z_j$ equal to $\delta_{ij}$.
Since the packet map is an algebra isomorphism, $C(x_i,z_j)=\delta_{ij}1$; see Appendix~\ref{app:pairing-correlations}.
Since the coefficients of this correlation record the commutation bits at all relative translations, this implies
\begin{align}
B(x_i g,z_j h)
&= \delta_{ij}\delta_{g,h}.
\label{eq:canonical-orbit-pairing}
\end{align}
Thus the recombined seeds generate $q|G|$ canonical logical pairs, arranged into $q$ regular orbits on each Pauli side.
These pairing relations also certify independence modulo stabilizers.
The remaining $k-q|G|$ logical pairs are chosen in the pairing-orthogonal complements of the regular submodules.

\vspace{5pt} \noindent \emph{Physical weights}. First we note that translating a logical \emph{seed} generates an entire family of equal-weight representatives without changing canonical pairing. Therefore, by optimizing $q$ sets of canonical logical seeds, one can achieve the minimum average weights among $q|G|$ canonical logical pairs. To proceed, for each packet $\Omega$, let $U_\Omega=[x_{1,\Omega} \cdots x_{q,\Omega}]$ and $V=[z_{1,\Omega} \cdots z_{q,\Omega}]$, so $U_\Omega^\dagger V_\Omega^{\vphantom{\dagger}}=I_q$
The simultaneous basis changes $U_\Omega\mapsto U_\Omega A_\Omega$ and $V_\Omega \mapsto V_\Omega(A_\Omega^\dagger)^{-1}$ preserve canonical pairing within the fixed regular submodules for every $A_\Omega\in\operatorname{GL}_q(R_G)$.
More generally, varying $U_\Omega$ and resolving Eq.~\eqref{eq:packet-dual-system} can also change the selected regular submodules.
The physical weights of both seed families are evaluated after CRT recombination, where contributions from different packets can cancel, and their representatives are further optimized by stabilizer dressing.
This separates the choice of canonically paired logical generators from weight reduction within their fixed stabilizer cosets.

\vspace{5pt}\noindent\emph{Even lift sizes.}
For even $P$, logical profiling remains possible, but the finite-field packet decomposition must be replaced by the primary-packet decomposition described in Appendix~D of Ref.~\cite{lee2026logicalspectroscopyliftedproductcodes}.
Its factors are local rings rather than fields and retain the repeated-root structure lost under ordinary packet evaluation.
We discuss the resulting logical modules and canonical pairing in Appendix~\ref{sec:even_P_canonical}.

\subsection{Application: Pair-partition codes}
\label{subsec:pp-logical-profile}

We now apply the preceding methods to the CPM-PP codes introduced in Sec.~\ref{sec:cpmpp}. The pair-partition construction gives a simple packet multiplicity profile.
Consider a fully populated $(J,L)$ PP array over an odd-order Abelian group $G$ and assume that both check matrices have row rank $J$ in every nontrivial packet.
Each array entry is a group element, so both check matrices evaluate to an all-one matrix of rank one at the trivial packet.
Equation~\eqref{eq:packet-rank-short} therefore gives, on either Pauli side,
\begin{align}
m_{\mathrm{triv}} = L-2,\quad m_\Omega=L-2J\quad(\Omega\neq\mathrm{triv}). \label{eq:pp-packet-count-short}
\end{align}
With the assumption, the previous packet recombination implies that, for either logical space $\calL=\calL_X$ or $\calL_Z$,
\begin{align}
\calL &\cong R_G^{\oplus(L-2J)}\oplus\one^{\oplus(2J-2)}. \label{eq:strict-pp-profile-short}
\end{align}
where $\one$ denotes the one-dimensional trivial module.
Each regular summand combines one field direction from every packet, leaving $2J-2$ translation-invariant logical directions.
The preceding construction provides canonically paired regular orbits, while their physical weights remain to be optimized.
For $(J,L)=(3,8)$, the decomposition becomes $\calL \cong R_G^2\oplus\one^4$.

\begin{figure}[!t]
\centering
\includegraphics[width=0.9\columnwidth]{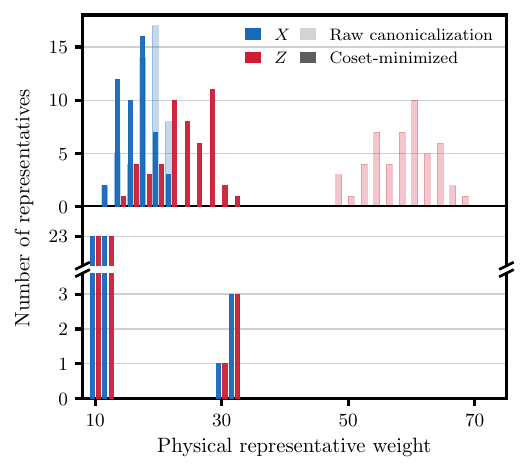}
\caption{\textbf{Logical profile of $[[184,50,10]]$.}
(Top) weight-ordered Gaussian-elimination initialization followed by Gram canonicalization, with the initial $X$ basis held fixed, gives a mean physical weight of $38$.
Coset minimization reduces this to $20$.
(Bottom) Packet-guided canonical basis optimization and coset minimization give $23$ regular logical pairs of weights $(10,12)$ and $23$ of weights $(12,10)$.
The four residual pairs have weights $(30,32)$, $(32,30)$, $(32,32)$, and $(32,32)$, yielding an overall mean weight of $12.6$.
Each displayed basis satisfies $\langle X_i,Z_j\rangle=\delta_{ij}$.
}
\label{fig:p23-logical-profile}
\end{figure}

The $[[184,50,10]]$ PP code with $(J,L,P)=(3,8,23)$ illustrates the construction. For $P=23$, Frobenius squaring returns each nontrivial $23$rd root of unity to itself after eleven steps, since $\operatorname{ord}_{23}(2)=11$.
The $22$ nontrivial roots therefore form two degree-eleven packets.
Root inversion exchanges these packets because $-1$ is not a power of $2$ modulo $23$.
Together with the trivial packet, this gives
\begin{align}
R_{C_{23}} &\cong \F_2\oplus K_+\oplus K_-,\qquad K_\pm\cong\F_{2^{11}}. \label{eq:P23_module}
\end{align}
Both check matrices have packet ranks $(1,3,3)$, giving logical multiplicities $(6,2,2)$ and binary dimensions $(6,22,22)$. Consequently,
\begin{align}
\calL &\cong \F_2^6\oplus K_+^2\oplus K_-^2\cong R_{C_{23}}^2\oplus\one^4. \label{eq:p23-recombination-short}
\end{align}
The logical basis can therefore be organized into two regular length-$23$ orbits and four translation-invariant classes on each Pauli side.

\vspace{5pt}\noindent\emph{Canonical regular pairing.}
For this code, a check-exchanging qubit involution $\pi$ satisfies $\pi T=T^{-1}\pi$, where $T$ is cyclic translation, and every $T^t\pi$ is fixed-point free.
These properties allow both $Z$ seeds to be obtained from two $X$-logical seeds $a,b$. Choose the latter to satisfy
\begin{align}
B(b,T^t\pi a)
&=\delta_{t,0},\qquad 0\leq t<P.
\label{eq:pp-canonical-seed-search}
\end{align}
The paired seeds $(x_1,z_1)=(a,\pi b)$ and $(x_2,z_2)=(b,\pi a)$ then generate canonical translated families.
Indeed, the fixed-point-free involution identities
$B(v,T^t\pi v)=0$ and
$B(v,T^t\pi w)=B(w,T^t\pi v)$
give $B(T^r x_i,T^s z_j)=\delta_{ij}\delta_{r,s}$.
Optimizing the seeds subject to these constraints yields verified representatives of weights $10$ and $12$, hence $23$ canonical pairs of weights $(10,12)$ and $23$ of weights $(12,10)$.
These regular weights are optimal under the full-translation-orbit constraint.
An exhaustive census of minimum-weight logical classes excludes canonical pairing between two weight-$10$ translation orbits.
Moreover, each check matrix has column degree three, so its row sum is the all-one vector and every normalizer word has even weight.
Together with $d=10$, this establishes a minimum maximum seed weight of $12$ and a minimum paired seed-weight sum of $22$, both attained by the construction. See Appendix~\ref{sec:explicit_canonical_illustration} for details.

 \vspace{5pt}\noindent\emph{Residual pairing and complete basis.}
The four remaining logical directions on each side are chosen orthogonal to the regular families.
They are translation-invariant as logical classes, although their physical representatives need not be pointwise invariant.
We search for low-weight representatives of all nonzero residual $X$ classes and obtain the corresponding $Z$-side weight table by reflection.
Selecting dual bases from these tables gives canonical pairs with representative weights
$(30,32)$, $(32,30)$, $(32,32)$, and $(32,32)$.

\vspace{5pt} \noindent \emph{Comparison}. We compared $100$ initializations per method, each given $30$ CPU seconds for initialization, complete-basis construction, and coset minimization.
The mean physical weight over the $2k=100$ representatives in each basis was $23.84 \pm 1.33$ for binary Gram canonicalization~\cite{Wilde2009_gram} with fixed-class coset minimization, versus $14.91 \pm 2.02$ for the packet-guided procedure.
The best complete-basis means were $20.26$ and $12.94$, respectively; Fig.~\ref{fig:p23-logical-profile} shows their weight distributions.
In a separate longer-budget benchmark, the packet-guided search recovered the regular profile $(10,12)^{23},(12,10)^{23}$ in all tested runs, while naive approach stayed stuck.

\subsection{Application: Halved pair-partition codes}
\label{subsec:halved-pp-profile}

Sec.~\ref{sec:noncss} showed that symplectic halving identifies the folded logical space with the parent logical module, while pulling the parent CSS pairing back to an internal symplectic form.
We now determine how the parent lift symmetry is transported through this identification and use it to construct low-weight canonical logical frames. The packet multiplicities are inherited from the parent, whereas the halving involution determines how these packets pair and how physical Pauli weights behave along translation orbits.

\vspace{5pt}\noindent\emph{Non-CSS canonical pairing.}
Let $\pi$ be the fixed-point-free, check-exchanging involution defining the folding map $F_\pi$ in Eq.~\eqref{eq:folded-parent-map}.
Since $\pi$ preserves the parent binary dot product and exchanges $S_X$ with $S_Z$, it also satisfies $\pi(S_Z^\perp)=S_X^\perp$.
It therefore maps both the numerator and denominator of $\calL_X=S_Z^\perp/S_X$ onto those of $\calL_Z=S_X^\perp/S_Z$, inducing an isomorphism $\pi:\calL_X\to\calL_Z$.
Eq.~\eqref{eq:folded-stabilizer-normalizer} shows that $[a]\mapsto[F_\pi(a)]$ induces the quotient isomorphism
\begin{align}
\mathcal Q_{\mathrm{fold}}:=N_{\mathrm{fold}}/S_{\mathrm{fold}} &\cong S_Z^\perp/S_X=\calL_X. \label{eq:folded-parent-quotient-isomorphism}
\end{align}
Thus $2k_{\mathrm{fold}}=k_{\mathrm{parent}}$: the inherited binary dimension counts folded logical Pauli directions, not folded logical qubits.

Under this identification, Eq.~\eqref{eq:folded-twisted-pairing} gives the folded commutation form
\begin{align}
B_\pi(a,b)
&:=B_{\mathrm{sp}}(F_\pi(a),F_\pi(b)) =a^{\mathsf T}\pi(b).
\label{eq:folded-pulled-back-form}
\end{align}
Thus folded commutation is computed by pairing a parent word with the reflected second word.
The induced isomorphism $\pi:\calL_X\to\calL_Z$ makes $B_\pi$ well defined and nondegenerate.
It is alternating because each $\pi$-pair contributes twice to $B(a,\pi a)$, giving zero over $\F_2$.

A canonical folded frame therefore consists of logical classes satisfying
\begin{align}
B_\pi([a_i],[a_j])
&=B_\pi([b_i],[b_j])=0, \nonumber\\
B_\pi([a_i],[b_j])
&=\delta_{ij}.
\label{eq:canonical-frame-conditions}
\end{align}
Each family must be \emph{isotropic}, meaning that its operators mutually commute.
Their physical representatives are $\overline X_i=F_\pi(a_i)$ and $\overline Z_i=F_\pi(b_i)$.
These labels specify canonical roles; both representatives can contain mixed Pauli components.
All weights below use the folded Pauli weight of Eq.~\eqref{eq:folded-pauli-weight}.

\vspace{5pt}\noindent\emph{Inherited translation and packet pairing.}
Let $T$ be a parent qubit automorphism of odd order $P$ preserving both check spaces.
The induced action $K=F_\pi T F_\pi^{-1}$ preserves the folded stabilizer and normalizer spaces, and therefore acts on $\mathcal Q_{\mathrm{fold}}$ with the same packet multiplicities as the parent $\calL_X$.
We call the fold \emph{plain} when $\pi T\pi=T$ and \emph{reversing} when $\pi T\pi=T^{-1}$.
The adjoint relation
\begin{align}
B_\pi([Ta],[b])
&=B_\pi([a],[\pi T^{-1}\pi b])
\label{eq:translation-adjoint}
\end{align}
makes $K$ symplectic in the plain case and self-adjoint in the reversing case.

For a plain fold, parent translation maps each folded qubit pair to another folded qubit pair.
Therefore, the induced action $K$ acts by a folded-qubit permutation, possibly accompanied by local Hadamards when the chosen $X$ and $Z$ endpoints are exchanged, and preserves physical Pauli weight.
For a reversing fold, suppose $\pi$ exchanges two distinct parent translation orbits.
This is automatic for odd $|G|$, since a reversal within a single odd-length orbit would have a fixed point.
Label a parent translation orbit by $q_t=T^tq_0$ and assign $q_t$ and $\pi(q_t)$ as the $X$ and $Z$ endpoints of folded qubit $t$, respectively.
The relation $\pi T\pi=T^{-1}$ means that translation sends $q_t$ to $q_{t+1}$ but $\pi(q_t)$ to $\pi(q_{t-1})$.
Thus an $X$ component at folded qubit $t$ moves to $t+1$, whereas a $Z$ component moves to $t-1$.
The two components of a $Y$ operator can therefore separate onto different qubits, changing the support overlap and hence the physical Pauli weight.
Moreover, a nontrivial reversing-fold logical action is nonsymplectic for odd $P$, so it is an algebraic action rather than a commutation-preserving physical symmetry.

The packet pairing also depends on the involution.
A plain $\pi$ preserves packet labels, retaining the parent's reciprocal-packet pairing.
A reversing $\pi$ exchanges reciprocal labels, so the pulled-back pairing is internal and nondegenerate within each packet.
Table~\ref{tab:fold-comparison} summarizes these differences.

\begin{table}[t]
\centering
\begin{tabular}{lll}
\hline
Property & Plain & Reversing \\
\hline
Logical action & Symplectic & Self-adjoint \\
Pauli weight & Preserved & Can vary \\
Packet pairing & $\Omega\leftrightarrow\Omega^{-1}$ & $\Omega\leftrightarrow\Omega$ \\
Orbit isotropy & Must be checked \,\, & Automatic \\
Cross-Gram matrix\,\, & Circulant & Reverse-circulant \\
\hline
\end{tabular}
\caption{Inherited translation for plain and reversing folds.}
\label{tab:fold-comparison}
\end{table}

\vspace{5pt}\noindent\emph{Canonical orbit families.} For folded logical seeds $p,q$, define the binary cross-Gram matrix by $M_{rs}=B_{\mathrm{sp}}(K^rp,K^sq)$.
The adjoint relation gives
\begin{align}
M_{rs}
&=
\begin{cases}
B_{\mathrm{sp}}(p,K^{s-r}q), & \text{plain},\\
B_{\mathrm{sp}}(p,K^{r+s}q), & \text{reversing}.
\end{cases}
\label{eq:folded-orbit-pairing}
\end{align}
For a plain fold, orbit isotropy requires $B_{\mathrm{sp}}(p,K^tp)=0$ for every $t$. This holds automatically for seeds contained in a non-self-reciprocal packet, but must otherwise be imposed.
For a reversing fold, it is automatic:
\begin{align}
0 &=B_{\mathrm{sp}}(K^rp,K^rp) =B_{\mathrm{sp}}(p,K^{2r}p).
\label{eq:folded-self-gram}
\end{align}
Since $P$ is odd, $t\in\mathbb Z_P$ can be written as $t=2r$ modulo $P$, which gives $B_{\mathrm{sp}}(p,K^tp)=0$ for every $t$.

When both orbit families are isotropic and $M$ is invertible, their $2P$ vectors span a nondegenerate symplectic subspace.
Binary basis changes acting on the rows $A,B\in\operatorname{GL}_P(\F_2)$ give a canonical frame precisely when
\begin{align}
A M B^{\mathsf T} &= I_P. \label{eq:folded-orbit-canonicalization}
\end{align}
These two orbit families constitutes $P$ folded logical pairs.
General recombination can change their physical weights and need not retain the translated form of the representatives.

\vspace{5pt}\noindent\emph{A common P13 parent.}
The $[[104,30,8]]$ PP code with $(J,L,P)=(3,8,13)$ admits both types of involution.
The plain and reversing folds have exact parameters $[[52,15,7]]$ and $[[52,15,6]]$, respectively, with check weights at most eight.
Both inherit the $R_G$-module
\begin{align}
\mathcal Q_{\mathrm{fold}} &\cong R_{C_{13}}^2\oplus\one^4. \label{eq:p13-folded-module}
\end{align}
The unique nontrivial degree-$12$ packet is self-reciprocal because $2^6\equiv-1\pmod{13}$.
Thus both folds pair internally within each packet; their distinction lies in the translation action and physical weight function.

For the plain fold, optimization finds two rank-$13$ isotropic orbits with seed weights eight and nine.
Their cross-Gram matrix is a cyclic permutation matrix, so a cyclic relabeling gives representatives $p,q$ satisfying
\begin{align}
\overline X_i&=K^ip,\quad
\overline Z_i=K^iq, \quad B_{\mathrm{sp}}(\overline X_i,\overline Z_j) =\delta_{ij}.
\label{eq:p13-folded-canonical-orbits}
\end{align}
Weight preservation gives thirteen canonical pairs of uniform weights $(8,9)$.
Their four-dimensional symplectic complement is fixed by logical translation and forms two additional pairs.
The achieved representative-weight profile is
\begin{align}
\bigl((8,9)^{\times13},(17,13)^{\times2}\bigr),
\label{eq:p13-folded-weight-profile}
\end{align}
where superscripts indicate pair multiplicities.

The regular mean operator weight $8.5$ is optimal among two rank-$13$ isotropic translation orbits with a cyclic-permutation cross-Gram matrix.
Exhaustive low-weight enumeration shows that every weight-$7$ logical orbit is nonisotropic and excludes a compatible $(8,8)$ orbit pair.
This restricted optimality does not establish a globally minimum-weight complete frame; the residual weights are achieved representative weights. 
For the reversing fold, the representative weights vary along the regular orbits. Their mean is 8.54, which remains comparable to that of the plain fold despite the smaller code distance. Appendix~\ref{sec:reverse-fold-canonical} presents a detailed study of a separate reversing-fold example.

\section{Discussion}

We have developed a framework for navigating the design space of ultra-high-rate quantum codes. By separating the construction template, degree distribution, and lift size, we provide common coordinates for comparing different code families and navigating the tradeoff between rate, check weight, distance, and blocklength.
Guided by these principles, we introduced a new construction template based on the CPM-PP construction and a symplectic halving transformation that produces still more compact non-CSS codes, resulting in code constructions with significantly improved parameters, such as $[[90,21,11]]$, $[[140,31,15]]$, $[[200,43,20]]$. We complement these constructions with circuit-level simulations, low-weight canonical logical bases, and analysis motivated by classical ensemble heuristics.

Our work also opens up many interesting avenues of future research. Although the construction templates studied here already improve the finite-size frontier, further gains may be possible by changing both the template and its degree distribution. In classical LDPC codes, irregular degree distributions are central to optimizing the tradeoff among rate, belief-propagation threshold, and distance~\cite{richardson2001design,richardson2008modern,etsi}. Developing quantum-compatible irregular templates that maintain commutation and compactness is a natural direction for extending the design principles introduced here.

Extending these code families from quantum memories to processors requires logical operations with low space-time overhead and compatibility with architectural constraints.
The low-weight, symmetry-respecting canonical bases constructed here provide concrete inputs to generalized surgery schemes~\cite{williamson2024low,swaroop2024universal,yuan2026parsimoniousquantumlowdensityparitycheck} and associated ancillary-gadget constructions~\cite{webster2025explicitconstructionlowoverheadgadgets,bhardwaj2026high,zheng2025high,zheng2026logical,cross2024improved,he2025extractors}.
A promising direction is to combine the packet structure of group-valued constructions with projection.
Packet-resolved logical coordinates could help identify commuting sets of logical operators that can be addressed simultaneously using gadgets constructed for smaller projected codes and coupled to the original code through compatible transfer maps~\cite{hirasaki2026liftingliftedproductcodes}.
In parallel, architectural constraints could be incorporated directly into code and gadget design, exploring how the cyclic-lift structure can support implementations on reconfigurable neutral-atom and trapped-ion platforms~\cite{zhao2026towards,tripier2026fault}.

\section*{Data availability}
Supporting code instances and details will be made available in the following repository~\cite{designprinciplesdata}.

\section*{Acknowledgements}
We acknowledge helpful discussions with John Blue, Arpit Dua, Zhiyang He and Yuta Hirasaki. 
H.Z. is supported by NSF CUA, DOE C2QA, DARPA DSO, the DARPA HARQ program, and QuEra Computing. 
J.Y.L is supported by the IBM-Illinois Discovery Accelerator Institute and the Quantum Universe Center scholar program at KIAS. The authors acknowledge the use of AI tools from OpenAI and Anthropic for fleshing out construction details from our intuitions, performing simulations, making figures, and editing the text. The authors have reviewed all content and take full responsibility of the results. This manuscript subsumes the earlier results of Ref.~\cite{okada2026pair}.

\addtocontents{toc}{\protect\tocstop}%
\bibliography{cite}

\appendix

\section{Structural Bounds on the Code Distance of CPM-PP Codes}
\label{sec:cpmpp-structural-bounds}

The polynomial representation of a CPM matrix gives a family-level
upper bound on the code distance, by the argument that bounds the minimum Hamming distance of a
classical quasi-cyclic LDPC code with an all-one base matrix by
$(J+1)!$~\cite{mackay2001evaluation,smarandache2012quasi}.  Let
$R_P=\mathbb{F}_2[x]/(x^P-1)$ and write
\[
    \widehat H_X(x)=(x^{e_{i\ell}})_{i,\ell},\qquad
    \widehat H_Z(x)=(x^{d_{j\ell}})_{j,\ell}.
\]
For a set $S$ of $J+1$ block columns, define
$c_Z^S(x)\in R_P^L$ by
\begin{equation}
    (c_Z^S(x))_\ell=
    \begin{cases}
    \operatorname{perm}\!\left(
      \widehat H_Z(x)_{[:,S\setminus\{\ell\}]}
    \right),&\ell\in S,\\
    0,&\ell\notin S.
    \end{cases}
    \label{eq:cpmpp-permanent-vector}
\end{equation}
The maximal-minor identity in characteristic two gives
$\widehat H_Z(x)c_Z^S(x)^{\mathsf T}=0$.  Each of its $J+1$ possible
nonzero coordinates contains at most $J!$ monomials, so its expanded binary
weight is at most $(J+1)!$.

Suppose that $P$ is odd,
$L\geq2J+1$, and both binary check matrices have the natural maximum rank
\begin{equation}
    \operatorname{rank}H_X=\operatorname{rank}H_Z=J(P-1)+1.
    \label{eq:cpmpp-maximum-rank}
\end{equation}
The rank bound follows because the $P$ rows in every block-row family sum to
the same all-one vector, producing $J-1$ unavoidable row dependencies.  In
the Chinese-remainder decomposition of $R_P$, equality in
Eq.~\eqref{eq:cpmpp-maximum-rank} forces full row rank $J$ in every
nontrivial component.  The binary span $\mathcal C_Z$ of all vectors
$c_Z^S(x)$ and their cyclic shifts consequently has dimension
$(L-J)(P-1)$.  Since
\[
    (L-J)(P-1)>J(P-1)+1=\operatorname{rank}H_X,
\]
$\mathcal C_Z$ cannot be contained in $\operatorname{rowspan}H_X$.
At least one permanent vector is therefore an $X$-logical vector, and
interchanging $X$ and $Z$ gives a $Z$-logical vector.  Thus
\begin{equation}
    d_X\leq(J+1)!,\qquad d_Z\leq(J+1)!,\qquad d\leq(J+1)!.
    \label{eq:cpmpp-factorial-bound}
\end{equation}
The bound is $24$ for $J=3$ and $120$ for $J=4$.  It is independent of $P$;
under the stated rank and size conditions, a fully populated CPM family with
fixed column weight therefore does not have unbounded minimum distance.
Monomial cancellations can make a permanent vector lighter than the
factorial bound.  Instance-specific vectors are checked directly for zero
syndrome and nonmembership in the opposite stabilizer row space and generally
give sharper upper bounds.

\section{Symmetry-Informed Branch and Bound for Minimum-Distance Verification}
\label{sec:distance-verification}

We describe a branch-and-bound procedure for certifying distance lower bounds, first for CSS codes and then for general stabilizer codes. The search uses the irreducible-cluster principle of Dumer, Kovalev, and Pryadko~\cite{dumer2016distanceverificationclassicalquantum}, growing candidate supports through unsatisfied checks and testing zero-syndrome candidates for logical nontriviality. We incorporate cyclic-symmetry reduction for CPM lifts and admissible pruning bounds; for symplectically halved codes, certified parent-distance lower bounds provide an additional pruning criterion. Complementary upper bounds are obtained from explicit logical operators, which can be found using QDistRnd, sQetch, and other existing distance finders~\cite{pryadko2022qdistrnd,bhardwaj2026high,webster2026distance}. The distance is determined exactly when the lower and upper bounds coincide.


\subsection{CSS codes}

For binary matrices $H$ and $G$ satisfying
$\operatorname{rowspan}G\subseteq\ker H$, define
\begin{equation}
    d(H,G)=\min\{\operatorname{wt}x:Hx=0,
    \ x\notin\operatorname{rowspan}G\}.
    \label{eq:fixed-css-distance}
\end{equation}
The $X$-distance is obtained from $(H,G)=(H_Z,H_X)$ and the $Z$-distance
from $(H_X,H_Z)$.  The quantum distance is the smaller of the two.  We use a
complete search that either returns a vector in Eq.~\eqref{eq:fixed-css-distance}
of weight at most a prescribed limit $W$, or proves that no such vector
exists.

A search state consists of a selected support $S$, a locally forbidden set
$F$, and the syndrome $s(S)$ of its indicator vector, so that $|s(S)|$ counts
the unsatisfied checks.  If the syndrome is nonzero, the algorithm
chooses an unsatisfied check with the fewest available neighboring variables.
At least one further variable incident to this check must be selected to
restore even parity.  The search branches over these variables, adding each
previously considered choice to $F$ so that every support is represented by a
single ordered branch.

If $c$ is the selected unsatisfied check and $N(c)$ is the set of variables
incident to it, its available set is $A_c(S,F)=N(c)\setminus(S\cup F)$.  A branch fails immediately when this set
is empty.  When the syndrome becomes zero, the corresponding vector is
reduced against a row-echelon basis of $G$: a nonzero remainder is a logical
operator, whereas a zero remainder is a stabilizer and the branch terminates.

Two lower bounds prune a branch without excluding a possible solution.  If $\Delta$ is
the maximum column weight, at least
\begin{equation}
    b_1=\left\lceil\frac{|s(S)|}{\Delta}\right\rceil
    \label{eq:distance-pruning-bound}
\end{equation}
additional variables are required.  A second bound $b_2$ is obtained from any
set of unsatisfied checks whose available-variable sets are pairwise
disjoint, since distinct future variables are then needed for those checks.
The branch is stopped when $|S|+\max\{b_1,b_2\}>W$.  In the generic search,
each variable is used once as the least element of a possible support.

For binary matrices $H,G$ with
$\operatorname{rowspan}G\subseteq\ker H$, the ordered search described above
returns no vector if and only if $d(H,G)>W$.  Running it to completion for
$(H_Z,H_X)$ and $(H_X,H_Z)$ therefore proves a quantum-distance lower bound.

Completeness follows from two properties of a minimum-weight logical operator.
A minimum-weight member of
$\ker H\setminus\operatorname{rowspan}G$ can be chosen with Tanner-connected
support and contains no proper nonempty zero-syndrome subset.  Along the
ordered branch corresponding to this support, every proper partial support
therefore has a nonzero syndrome.  The selected unsatisfied check always has
an available variable in the remaining support.  Each added variable can
repair at most $\Delta$ unsatisfied checks, and checks used for $b_2$ require
distinct added variables, so neither lower bound can prune this branch.
The search reaches the full support and returns the logical operator.  The
converse, soundness of the search, follows because every returned vector has zero syndrome, lies
outside $\operatorname{rowspan}G$, and has weight at most $W$.

A generic search may require every one of the $LP$ variables as a root.  A
CPM lift is invariant under the common translation
$(\ell,a)\mapsto(\ell,a+c)$ of all lift indices.  Translating the least
variable in a support to lift index zero leaves only the $L$ roots
$(\ell,0)$, one for each block-column index.  For the recorded searches we use
prime $P$ and $W<P$ when applying this reduction.

In matrix form, if $Q$ is the common qubit shift by one lift index, there are
check-row permutation matrices $R_X,R_Z$ such that
\begin{equation}
    H_XQ=R_XH_X,\qquad H_ZQ=R_ZH_Z.
    \label{eq:cpmpp-cyclic-automorphism}
\end{equation}
Thus the shift preserves both kernels and both stabilizer row spaces.  Every
logical support has a translated representative rooted at some $(\ell,0)$,
which is why the reduction retains completeness.

For fixed $J,L,W$, the number of support states visited by the complete search
is bounded by
\begin{equation}
    L\sum_{t=0}^{W-1}(L-1)^t,
    \label{eq:distance-state-bound}
\end{equation}
which is independent of the lift size $P$.  Matrix construction, rank and
row-space preprocessing, syndrome updates, and row-space membership tests
still act on length-$LP$ objects and can grow with $P$.

The rows in any one block-row family of a fully populated CPM matrix sum to
the all-one vector.  Hence every vector in its kernel has even weight.  For an
even target $d_0$, completed searches through $W=d_0-2$ on both CSS sides
prove $d\geq d_0$. More generally, exclusion through $w_X-1$ together with an $X$-logical
vector of weight $w_X$ proves $d_X=w_X$, and the analogous statement holds
for $d_Z$.

\subsection{General stabilizer codes}

For a commuting stabilizer check matrix $H=(A\mid B)$, a Pauli vector
$e=(x\mid z)$ has syndrome $s(e)=Bx+Az$ and weight
$\operatorname{wt}_{\mathrm P}(e)=|\operatorname{supp}x\cup\operatorname{supp}z|$,
so a $Y$ entry counts once.  The distance is
\begin{equation}
    d=\min\{\operatorname{wt}_{\mathrm P}(e):Bx+Az=0,
    \ e\notin\operatorname{rowspan}H\}.
    \label{eq:noncss-search-distance}
\end{equation}
The search branches over Pauli labels on unused qubits.  The syndrome
columns for $X$, $Z$, and $Y$ on qubit $q$ are $B_{:,q}$, $A_{:,q}$,
and $A_{:,q}+B_{:,q}$, respectively.  We select the unsatisfied check
with the fewest available actions and try its anticommuting labels,
grouped by qubit.  Only the tried actions are forbidden in later
branches; labels commuting with this check remain available.  Each
qubit receives at most one label.  At zero syndrome, reduction against
the row space of $H$ tests logical nontriviality, and the branch ends.

The bounds $b_1$ and $b_2$ apply with $\Delta$ the largest single-qubit
syndrome weight and disjointness measured on available qubit sets.
For a further bound, let $c_q$ be the maximum number of currently
unsatisfied checks flipped by an allowed label on unused qubit $q$.
Order these values decreasingly and set
\begin{equation}
    b_3=\min\left\{t:\sum_{i=1}^{t}c_{(i)}\geq|s(e)|\right\}.
    \label{eq:noncss-coverage-bound}
\end{equation}
Any extension on $t$ qubits covers at most this sum of unsatisfied
checks, so $b_3$ is admissible.  Insufficient total coverage makes a
branch infeasible; otherwise we prune when
$\operatorname{wt}_{\mathrm P}(e)+\max\{b_1,b_2,b_3\}>W$.
Trying every least occupied qubit and its three labels shows
completeness.

For symplectically halved codes, Eq.~\eqref{eq:folded-stabilizer-normalizer}
identifies folded logical classes with parent $X$-logical classes, which we use to reduce the cost of the search.
An operator $e=F_\pi(a)$ of Pauli weight $w$ with $y$ entries equal to
$Y$ lifts to parent weight $\operatorname{wt}(a)=w+y$ by
Eq.~\eqref{eq:folded-pauli-weight}.  A certified parent $X$-distance
lower bound $D$ therefore requires $w+y\geq D$ for a folded logical.
Each further qubit adds at most two to the parent weight, giving the
additional pruning rule
\begin{equation}
    w+y+2(W-w)<D.
    \label{eq:noncss-parent-pruning}
\end{equation}
This uses a verified logical-class identification and a certified
parent lower bound. The CSS even-weight shortcut does not extend to
Pauli weight, since two occupied parent coordinates can fold to one
$Y$.

\section{Symplectic Halving of Rate-$1/5$ Balanced Product Codes}
\label{sec:zsz-halving}

Section~\ref{sec:noncss} states the halving condition for the pair-partition
template, but the requirement it implements---a fixed-point-free ZX
duality---is a property of the CSS parent alone and is therefore family
independent.  We illustrate the transfer on a second template, the rate-$1/5$
five-block codes of Refs.~\cite{hong2026quantum,bhardwaj2026high}.

Let $G=\mathbb{Z}_{\ell_1}\rtimes_q\mathbb{Z}_{\ell_2}
=\langle x,y\mid x^{\ell_1}=y^{\ell_2}=1,\;yxy^{-1}=x^{q}\rangle$
be a metacyclic group of order $\ell=\ell_1\ell_2$, and write $L[u]$ and $R[u]$
for the left- and right-regular representations of $u\in\mathbb{F}_2[G]$.  For
four three-term elements $a_1,a_2,b_1,b_2$, set $A=L[a_1]$, $B=L[a_2]$,
$C=R[b_1]$, $D=R[b_2]$ and
\begin{equation}
\begin{aligned}
    H_X&=\begin{pmatrix}A&0&B&0&C^{T}\\0&A&0&B&D^{T}\end{pmatrix},\\[2pt]
    H_Z&=\begin{pmatrix}C&D&0&0&A^{T}\\0&0&C&D&B^{T}\end{pmatrix}.
\end{aligned}
    \label{eq:zsz-blocks}
\end{equation}
The code has
$n=5\ell$, $k=\ell$, hence design rate exactly $1/5$, with check weight $9$ and
data-qubit degree $3$ per basis on the four plain block columns and $6$ on the
transposed fifth.  The template has a distance upper bound when using an Abelian group, motivating the use of a non-Abelian group $G$.

Reference~\cite{hong2026quantum} builds ZX dualities on this template by tying
the seeds to one another through an involutive automorphism $\varphi$ of $G$,
\begin{equation}
    b_1=\varphi(\bar a_1),\qquad b_2=\varphi(\bar a_2),
    \label{eq:zsz-duality-seeds}
\end{equation}
where $\bar u=\sum_{g\in\mathrm{supp}(u)}g^{-1}$ is the antipode.  The map
$\theta(g)=\varphi(g^{-1})$ is then an antiautomorphism, and conjugating by it
exchanges left with right multiplication.  Applying $\theta$ within each
of the five qubit blocks and exchanging the second and third of them gives a
qubit involution $\pi$, in the role of Eq.~\eqref{eq:halving_involution},
carrying $H_X$ to $H_Z$ row by row. However, this duality acts as a fixed point within three of the five qubit blocks, and therefore can not directly be used to perform symplectic halving.

\begin{table}[!t]
\centering
\footnotesize
\setlength{\tabcolsep}{4pt}
\begin{ruledtabular}
\begin{tabular}{lcc}
 & instance I & instance II\\
\colrule
$G$ & $\mathbb{Z}_{4}\rtimes_{3}\mathbb{Z}_{16}$
    & $\mathbb{Z}_{22}\rtimes_{9}\mathbb{Z}_{5}$\\
$\ell=\lvert G\rvert$ & $64$ & $110$\\
$\lvert Z(G)\rvert$ & $16$ & $2$\\
\colrule
$a_1$ & $1+x^{2}y^{2}+xy^{3}$ & $1+x^{21}+x^{6}y^{4}$\\
$a_2$ & $1+xy^{5}+x^{2}y^{5}$ & $1+x^{16}y^{2}+x^{9}y^{3}$\\
$b_1=\varphi(\bar a_1)$ & $1+x^{2}y^{7}+x^{2}y^{10}$ & $1+x^{21}+x^{18}y$\\
$b_2=\varphi(\bar a_2)$ & $1+x^{2}y+xy^{9}$ & $1+x^{17}y^{2}+x^{6}y^{3}$\\
$\varphi(x)$ & $x^{3}y^{8}$ & $x^{21}$\\
$\varphi(y)$ & $x^{3}y^{11}$ & $x^{8}y$\\
admissible $z$ & $x^{2},\,x^{2}y^{8}$ & $x^{11}$\\
$z$ used & $x^{2}$ & $x^{11}$\\
\colrule
parent & $[[320,64,\leq12]]$ & $[[550,110,\leq18]]$\\
parent girth & $6$ & $4$\\
parent check weight & $9$ & $9$\\
\colrule
fold & $[[160,32,\leq11]]$ & $[[275,55,\leq14]]$\\
fixed points of $\pi$ & $0$ & $0$\\
fold check weight & $9$ & $7$--$9$\\
fold qubit degree & $6$--$12$ & $5$--$12$\\
$Y$ positions & $0$ of $1152$ & $12$ of $1968$\\
\end{tabular}
\end{ruledtabular}
\caption{Two halved instances of the five-block template of
Eq.~\eqref{eq:zsz-blocks}, obtained by imposing
Eqs.~\eqref{eq:zsz-duality-seeds} and~\eqref{eq:zsz-twist} in the search.  The seeds $b_1,b_2$ and the
listed admissible twists are derived from $(\varphi,a_1,a_2)$.  Parent and fold distances are upper bounds from explicit
logical operators.  The last row counts $Y$ positions against every
non-identity position of a folded check, a $Y$ arising where a parent check
meets both members of a pair.}
\label{tab:zsz-fold-instances}
\end{table}

The obstruction is removed by twisting the within-block map,
\begin{equation}
    \theta_z(g)=z\,\varphi(g^{-1}),
    \label{eq:zsz-twist}
\end{equation}
for a group element $z$, keeping the involutive automorphism $\varphi$
and the block exchange above fixed.  Applying the map twice gives
\begin{equation}
    \theta_z^2(g)=z\,g\,\varphi(z)^{-1}.
    \label{eq:zsz-twist-square}
\end{equation}
Evaluating at the identity shows that an involution requires
$\varphi(z)=z$; the remaining condition $zgz^{-1}=g$ for all $g$ is
exactly $z\in Z(G)$.  These two conditions are also sufficient.
Centrality preserves the seed duality of
Eq.~\eqref{eq:zsz-duality-seeds} on the right-regular blocks.
A fixed point satisfies $g=z\varphi(g)^{-1}$, or equivalently
$g\varphi(g)=z$.  Thus an involutive twist is fixed-point-free precisely
when $z$ lies outside the image of $\mu(g)=g\varphi(g)$.
Since $b_1$ and $b_2$ do not depend on $z$, the twist leaves the CSS parent
unchanged and changes only the pairing; different admissible pairings
can yield inequivalent non-CSS codes.

Within this family, Eq.~\eqref{eq:zsz-twist} yields a necessary condition
on the lift group alone.  Let $K=Z(G)\cap\mathrm{Fix}\,\varphi$, which
contains every candidate involutive twist $z$.  If $|Z(G)|$ is odd,
then $K$ is an Abelian group of odd order, and squaring is a bijection
on $K$.  Every $z\in K$ therefore has a square root $h\in K$, for which
$\mu(h)=h\varphi(h)=h^2=z$.  Every candidate twist then has a fixed point.
Hence
\begin{equation}
    \lvert Z(G)\rvert\ \text{even}
    \label{eq:zsz-even-centre}
\end{equation}
is necessary for a fixed-point-free twist of the form
Eq.~\eqref{eq:zsz-twist}, for any involutive $\varphi$.
This gives a filter at Step~1 of the design flow in
Sec.~\ref{sec:construction}: a group failing
Eq.~\eqref{eq:zsz-even-centre} can be discarded from this twist search
before any lift is generated.  For example, the lift group of the
$[[150,30,10]]$ code of Ref.~\cite{bhardwaj2026high} has centre
$\mathbb{Z}_5$, excluding this twist construction.  The even-centre condition is
not sufficient.  For a chosen involutive $\varphi$ and seeds satisfying
Eq.~\eqref{eq:zsz-duality-seeds}, this construction requires a
$z\in K\setminus\mu(G)$; the resulting parent distance also depends on
the seed choice.

Table~\ref{tab:zsz-fold-instances} summarizes two instances found by imposing
Eqs.~\eqref{eq:zsz-duality-seeds} and~\eqref{eq:zsz-twist} during the search process, where we find that the blocklength drops by half, the encoding rate remains the same, while the distance drops from 12 to 11 in instance I and from 18 to 14 in instance II.

\section{Further details of constructing canonical bases} \label{app:profiling}

In this section, we provide further details of constructing logical bases for various codes introduced in this manuscript. Supporting code instances and additional details will be made available in the accompanying repository~\cite{designprinciplesdata}.

\subsection{Pairing structure}
\label{app:pairing-correlations}

To construct canonical logical orbit families, we must control how every translated $X$ seed commutes with every translated $Z$ seed.
The binary pairing, group-valued correlation, and packet correlation organize different levels of this calculation.

Let $G=\mathbb{Z}_P$, with $P$ odd, act by qubit permutations on $L$ regular orbits.
Writing $u$ for its generator, identify $R_G=\F_2[u]/(u^P-1)$.
For group-valued representatives $x,z\in R_G^L$, let $\widehat{x}$ and $\widehat{z}$ denote their binary expansions as column vectors. Their commutation bit is
\begin{align}
B(x,z) &:=  \widehat{x}^T \widehat{z} \in\F_2.
\end{align}
Right translation on group-valued vectors is represented on binary coordinates by a permutation matrix $P_g$: $\widehat{x g} = P_g \widehat{x}$.
Therefore, we obtain
\begin{align}
B(xg,zh) &= \widehat{x}^T P_g^T P_h^{\vphantom{T}} \widehat{z} = B(xgh^{-1},z).
\end{align}
Thus the commutation bit depends only on the relative translation of the two representatives.

To encode all relative-translation commutators at once, define the group-valued correlation. In particular, for $x,z \in R^L_G$, the group-algebra correlation is defined as
\begin{align}
C(x,z) &: = x^\dagger z = \sum_{\ell=1}^{L}x^*_\ell z_\ell,
\label{eq:app-correlation-product}
\end{align}
where $*$ is group inversion. Write $x_\ell(u)=\sum_{a=0}^{P-1}x_{\ell,a}u^a$ and similarly for $z_\ell(u)$.
First, note that
\begin{align}
x^*_l z_l = \sum_{a,b} x_{l,a} z_{l,b} u^{b-a} = \sum_{r} u^r \bigg( \sum_a x_{l,a} z_{l,a+r} \bigg)
\end{align}
The inner sum is the contribution of block $\ell$ to $B(xu^r,z)$.
Summing over all blocks therefore gives
\begin{align}
    C(x,z) = \sum_r B(x u^r, z) u^r.
\end{align}
This shows that each monomial $u^r$ labels a relative translation, and its coefficient records the corresponding binary commutator.
In particular, the desired canonical relation
$B(x_i g,z_j h)=B(x_i g h^{-1},z_j)=\delta_{ij}\delta_{g,h}$
is equivalent to $C(x_i,z_j) = \delta_{ij} 1$. Here $1$ is the group-algebra identity, whose only nonzero coefficient is at the identity group element.

Now, consider the projection onto the packet $\Omega$. Write $x_\Omega=\Phi_\Omega(x)$ and $z_\Omega=\Phi_\Omega(z)$, with the packet map applied entrywise.
The group inversion induces a field isomorphism
$*:K_{\Omega^{-1}}\to K_\Omega$, characterized by
$\Phi_\Omega(a^*)=(\Phi_{\Omega^{-1}}(a))^*$.
It need not be the identity even for a self-reciprocal packet.
Applying $\Phi_\Omega$ to $C(x,z)=x^\dagger z$ gives
\begin{align}
C_\Omega(x,z)
&:=\Phi_\Omega(C(x,z)) = (x_{\Omega^{-1}}^*)^{\mathsf T}z_\Omega.
\label{eq:app-correlation-packet}
\end{align}
Thus the $\Omega$ component of the correlation pairs the
$X$ component in $\Omega^{-1}$ with the $Z$ component in $\Omega$.
Since $\Phi$ is an isomorphism and maps $1$ to one in every field factor, normalization in every packet reconstructs the group-algebra identity. More precisely, the packet conditions imply that $C(x_i,z_j)-\delta_{ij}1$ vanishes modulo every irreducible factor $g_\Omega(u)$.
Since these factors are pairwise coprime, it also vanishes modulo their product $u^P-1$.
Hence $C(x_i,z_j)=\delta_{ij}1$ in $R_G$.
Combining this with the coefficient interpretation of $C$, we obtain
\begin{align}
& C_\Omega(x_i,z_j)=\delta_{ij}
\quad\text{for every }\Omega \nonumber\\
&\quad\Longleftrightarrow\quad
C(x_i,z_j)=\delta_{ij}1 \nonumber\\
&\quad\Longleftrightarrow\quad
B(x_i g,z_j h)=\delta_{ij}\delta_{g,h}
\quad\text{for all }g,h\in G.
\end{align}
Thus field-valued duality in every reciprocal packet pair reconstructs canonically paired binary translation orbits.

For completeness, the unshifted binary pairing can also be recovered from the packet correlations using field traces. For $a\in K_\Omega\cong\F_{2^{f_\Omega}}$, define the trace map
\begin{align}
\operatorname{Tr}_{K_\Omega/\F_2}(a)
&:= \sum_{r=0}^{f_\Omega-1}a^{2^r}\in\F_2.
\label{eq:app-field-trace}
\end{align}
The sum is evaluated in $K_\Omega$ but lies in $\F_2$\footnote{The sum is evaluated in $K_\Omega$, but squaring permutes its terms because $a^{2^{f_\Omega}}=a$.
It is therefore fixed by squaring and hence belongs to $\F_2 \subseteq K_\Omega$.}. Then
\begin{align}
B(x,z)
&= \sum_\Omega
\operatorname{Tr}_{K_\Omega/\F_2}
\bigl(C_\Omega(x,z)\bigr).
\end{align}
The packet correlation $C_\Omega(x,z)$ is not itself a binary commutation bit.
For example, if the correlation equals one in a single packet and vanishes in all others, then
$B(x,z)=\operatorname{Tr}_{K_\Omega/\F_2}(1)=f_\Omega\bmod2$,
which is zero for even-degree packets.
By contrast, correlation one in every packet reconstructs $C(x,z)=1$ and therefore gives canonical pairing between the translated families.

\subsection{Canonicalization of $[[184,50,10]]$} \label{sec:explicit_canonical_illustration}

The two exponent matrices of the chosen instance are
\begin{align}
E_X &= \begin{pmatrix}0&0&0&0&0&0&0&0\\0&12&8&21&6&1&19&15\\0&9&18&11&7&17&10&4\end{pmatrix},\\
E_Z &= \begin{pmatrix}0&15&7&22&7&0&22&15\\0&1&2&4&0&1&2&4\\0&5&17&6&6&17&5&0\end{pmatrix}. \label{eq:expanded-p23-exponent-matrices}
\end{align}
Label the physical qubits by $(c,u)$, where $c=0,\ldots,7$ is the qubit-block index and $u=0,\ldots,22$ is the position within that block.
Likewise, label the checks of type $A\in\{X,Z\}$ by $(r,s)$, where $r=0,1,2$ is the check-block index and $s=0,\ldots,22$ is its position.
The group-valued entry $(H_A)_{r,c}=t^{(E_A)_{r,c}}$ specifies the relative cyclic shift between these blocks.
Check $(r,s)$ acts on one qubit in each block $c$, with total support
\begin{align}
\operatorname{supp}(A_{r,s}) &= \bigl\{(c,s-(E_A)_{r,c}):c=0,\ldots,7\bigr\}.
\end{align}
The packet fields follow from
\begin{align}
t^{23}-1 &= (t+1)p_+(t)p_-(t),\\
p_+(t) &= t^{11}+t^9+t^7+t^6+t^5+t+1,\\
p_-(t) &= t^{11}+t^{10}+t^6+t^5+t^4+t^2+1. \label{eq:expanded-p23-factors}
\end{align}
The CRT map is given as
\begin{align}
    \Phi(a) = (a \,\,\,\textrm{mod}\,(t+1), a \,\,\,\textrm{mod}\, p_+, a \,\,\,\textrm{mod}\, p_- ).
\end{align}
The degree-11 factors are reciprocal, while the trivial factor is self-reciprocal.
Thus $R_{23}\cong\F_2\times K_+\times K_-$ with $K_\pm=\F_2[t]/(p_\pm)\cong\F_{2^{11}}$.
For these factors, the CRT projectors can be written explicitly as
\begin{align}
e_0 &= \sum_{s=0}^{22}t^s,\quad e_+=\sum_{s\in\mathcal E_+}t^s,\quad e_-=1+e_0+e_+,\\
\mathcal E_+ &= \{0,1,2,3,4,6,8,9,12,13,16,18\}. \label{eq:expanded-p23-explicit-projectors}
\end{align}
Under the CRT map, these projectors satisfy $\Phi(e_0) = (1,0,0)$, $\Phi(e_+)=(0,1,0)$, $\Phi(e_-)=(0,0,1)$.
Multiplication by each projector therefore retains its designated packet and removes the other two.
For this matrix, the normalizer and stabilizer ranks give the logical packet multiplicities $(m_0,m_+,m_-) = (6,2,2)$ and $k = 6+2\cdot11+2\cdot11=50$.
The largest possible regular multiplicity is therefore $q=2$.
The number two counts the independent field directions in each nontrivial packet; the number eleven is the binary dimension contributed by each such direction.
Two selected trivial directions accompany the two directions in each nontrivial packet, giving $2(1+11+11)=46$ regular logical dimensions.
The remaining four dimensions lie in the trivial packet.

\vspace{5pt}\noindent\emph{Constructing an initial packet-dual frame.}
We first construct an initial frame.
In each nontrivial packet, the $3\times3$ submatrices formed by the first three columns are invertible.
Write $H_{A,\Omega}=(\mathsf{C}_{A,\Omega}\, \mathsf{D}_{A,\Omega})$, where $\mathsf{C}_{A,\Omega}$ is $3\times3$ and $\mathsf{D}_{A,\Omega}$ is $3\times5$.
Since the fields have characteristic two~\footnote{Under characteristic of two, we have $-C^{-1} D = C^{-1} D$.}, bases of the two normalizers are
\begin{align}
N_{X,\Omega} &= \mqty( \mathsf{C}_{Z,\Omega}^{-1}\mathsf{D}_{Z,\Omega} \\ I_5 ), \label{eq:expanded-p23-explicit-kernel-bases}
\end{align}
and similarly for $N_{Z,\Omega}$.
For example, $H_{Z,\Omega}N_{X,\Omega}=\mathsf{D}_{Z,\Omega}+\mathsf{D}_{Z,\Omega}=0$.
The five columns correspond to choosing one of the free physical coordinates $c=3,\ldots,7$ to be one and the other free coordinates to be zero.
For this instance, the first two columns of each normalizer basis are independent modulo the corresponding three-dimensional stabilizer space.
This is verified by adjoining them to the stabilizer columns and obtaining rank five.
Choose these columns as $U_\Omega^{\mathrm{in}}$ and as a provisional reciprocal $Z$ basis $W_\Omega^{\mathrm{in}}$, respectively.
Their mutual Gram matrices need not initially be identities.
Correct by
\begin{align}
F_\Omega^{\mathrm{in}}
&=\bigl((U_{\Omega^{-1}}^{\mathrm{in}})^*\bigr)^{\mathsf T}
W_\Omega^{\mathrm{in}},\quad V_\Omega^{\mathrm{in}}=W_\Omega^{\mathrm{in}}(F_\Omega^{\mathrm{in}})^{-1}. \label{eq:expanded-p23-initial-nontrivial-duals}
\end{align}
Each inverse is a $2\times2$ inverse over $\F_{2^{11}}$, not a binary inverse on the 22-dimensional packet expansion.

The trivial packet is handled separately because its check rank is one rather than three.
Let $b_c$ be the standard unit vector of $\F_2^8$ at physical block $c$.
The trivial normalizer is the even-parity subspace of $\F_2^8$, and the stabilizer space is spanned by the all-ones vector.
The six vectors $b_0+b_c$ for $c=1,\ldots,6$ form a basis of its logical quotient.
Select
\begin{align}
U_0^{\mathrm{in}} &= (b_0+b_1\quad b_0+b_2),\qquad W_0^{\mathrm{in}}=U_0^{\mathrm{in}},\\
F_0^{\mathrm{in}} &= (U_0^{\mathrm{in}})^{\mathsf T}W_0^{\mathrm{in}}=\begin{pmatrix}0&1\\1&0\end{pmatrix},\\
V_0^{\mathrm{in}} &= W_0^{\mathrm{in}}(F_0^{\mathrm{in}})^{-1}=(b_0+b_2\quad b_0+b_1). \label{eq:expanded-p23-initial-trivial-duals}
\end{align}
This particular two-dimensional provisional $Z$ subspace already pairs nondegenerately with the selected $X$ subspace.
For other trivial-packet choices, one can instead use all six provisional $Z$ generators and solve the $2\times6$ right-inverse problem.

The resulting initial frame satisfies the three packet identities
\begin{align}
(U_0^{\mathrm{in}})^{\mathsf T}V_0^{\mathrm{in}} = I_2, \qquad \bigl((U_\mp^{\mathrm{in}})^*\bigr)^{\mathsf T}V_\pm^{\mathrm{in}}
&= I_2,
\label{eq:expanded-p23-three-grams}
\end{align}
as well as all packet normalizer equations.
Each $U_\Omega^{\mathrm{in}},V_\Omega^{\mathrm{in}}$ has eight rows and two columns.
CRT recombination and translation therefore produce 46 canonical pairs before any weight optimization is performed.
For the stated deterministic choices, the two initial seed pairs have weights
\begin{align}
\bigl(\wt(x_1^{\mathrm{in}}),\wt(z_1^{\mathrm{in}})\bigr) &= (46,54),\\
\bigl(\wt(x_2^{\mathrm{in}}),\wt(z_2^{\mathrm{in}})\bigr)&=(48,52). \label{eq:expanded-p23-initial-seed-weights}
\end{align}
These are canonical regular generators obtained directly from the checks.

\vspace{5pt}\noindent\emph{Coset optimization after reconstruction.}
Let $S_X,S_Z\in\F_2^{67\times184}$ contain row-space bases of the binary check matrices $\widehat H_X,\widehat H_Z$.
For each reconstructed seed, fixed-class optimization minimizes
\begin{align}
w_X([x_i^{\mathrm{in}}]) &= \min_{a\in\F_2^{67}}\wt\bigl(x_i^{\mathrm{in}}+S_X^{\mathsf T}a\bigr),\\
w_Z([z_i^{\mathrm{in}}]) &= \min_{b\in\F_2^{67}}\wt\bigl(z_i^{\mathrm{in}}+S_Z^{\mathsf T}b\bigr). \label{eq:expanded-p23-fixed-class-objectives}
\end{align}
Stabilizer dressing preserves the logical classes and all canonical pairings, so the four seeds can be optimized independently.
Translating each improved seed then supplies its entire equal-weight orbit.

For syndrome decoding, choose $C_X,C_Z\in\F_2^{117\times184}$ whose rows form bases of $\ker S_X$ and $\ker S_Z$, respectively.
Then $\ker C_X=\operatorname{im}S_X^{\mathsf T}$ and similarly for $Z$, so fixed-class optimization is equivalent to finding low-weight solutions of
\begin{align}
C_X y &= C_X x_i^{\mathrm{in}},\qquad C_Z v=C_Z z_i^{\mathrm{in}}. \label{eq:expanded-p23-coset-syndrome}
\end{align}
These matrices constrain the optimizer and are not additional physical checks.
Using only the normalizer equations and the 46 regular pairing constraints would give rank 113 rather than 117, leaving four residual logical directions free.
Such constraints preserve canonical regular pairing but do not fix the logical class.

We generate candidates using BP--OSD with varied priors, verify their syndromes exactly, and apply weight-decreasing additions of single check generators or pairs of check generators.
Starting from the initial pair profiles $(46,54),(48,52)$, 48 decoder trials per seed with this local descent yield
\begin{align}
\bigl(\wt(x_1^{\mathrm{coset}}),\wt(z_1^{\mathrm{coset}})\bigr) &= (30,32),\\
\bigl(\wt(x_2^{\mathrm{coset}}),\wt(z_2^{\mathrm{coset}})\bigr) &= (32,34). \label{eq:expanded-p23-coset-reduced-weights}
\end{align}
Explicit stabilizer-difference witnesses verify preservation of the initial logical classes, and the translated frame retains Gram matrix $I_{46}$.
The achieved weights are upper bounds on the corresponding coset minima.

\vspace{5pt}\noindent\emph{Changing the generators to obtain the sparse frame.}
Further optimization can vary the logical generators rather than remaining within the four initial stabilizer cosets.
Paired module basis changes, new trivial-packet attachments, and incremental selection of reciprocal pairs provide this additional freedom while maintaining canonicality.
Our earlier blind packet-constrained search used this broader optimization and found regular pair profiles $(10,12),(12,10)$.

To summarize, at each of the three stages, the achieved seed-pair weights are
\begin{center}
\begin{tabular}{lcc}
Stage & Orbit 1 wt. & Orbit 2 wt. \\
\hline
Initial frame & $(46,54)$ & $(48,52)$ \\
Coset reduction & $(30,32)$ & $(32,34)$ \\
Joint generator search & $(10,12)$ & $(12,10)$
\end{tabular}
\end{center}
For the remainder of the worked example, $U_\Omega,V_\Omega$ denote this optimized frame rather than the deterministic initial frame.
Each matrix $U_\Omega,V_\Omega$ has eight rows and two columns.
For example, the verified sparse frame used below has
\begin{align}
U_0 &= \begin{pmatrix}1&0\\1&1\\1&0\\1&0\\1&1\\0&0\\1&0\\0&0\end{pmatrix},\qquad V_0=\begin{pmatrix}0&0\\0&0\\1&1\\0&1\\0&1\\0&1\\0&1\\1&1\end{pmatrix},\qquad U_0^{\mathsf T}V_0=I_2. \label{eq:expanded-p23-trivial-frames}
\end{align}
The row index is the physical block $c=0,\ldots,7$.
The trivial-packet checks are all-ones rows, and the displayed columns satisfy their even-parity normalizer constraints.

Represent physical block $c$ by $a_c(t)=\sum_{s=0}^{22}a_{c,s}t^s$, where $t$ is a formal translation variable and $s$ labels a qubit position.
The overall binary coordinate is $8s+c$.
For the first $X$ seed in block $c=0$, the three packet coordinates are
\begin{align}
u_{1,0}^{(0)} &= 1,\\
u_{1,+}^{(0)} &= t^9+t^7+t^6+t^4+t^3+t^2+t,\\
u_{1,-}^{(0)} &= t^{10}+t^9+t^7+t^6+t^5+t+1. \label{eq:expanded-p23-one-block-residues}
\end{align}
They are elements of three different fields and are not added as raw coefficient strings of lengths one, eleven, and eleven.
First multiply each polynomial lift by its CRT projector in $R_{23}$.
The resulting three length-23 vectors all refer to the same block of physical qubits.

Let $\mathcal A_\Omega$ denote the exponent support of the embedded component $e_\Omega\widetilde u_{1,\Omega}^{(0)}$ in this block.
Explicit CRT expansion gives
\begin{align}
\mathcal A_0 &= \{0,1,\ldots,22\},\\
\mathcal A_+ &= \{0,3,4,7,9,14,15,16,17,18,20,22\},\\
\mathcal A_- &= \{1,2,5,6,8,10,11,12,13,14,19,21\}. \label{eq:expanded-p23-one-block-supports}
\end{align}
The two nontrivial supports cover all 23 positions and intersect only at position 14.
Every position other than 14 therefore occurs in exactly two of the three supports, whereas position 14 occurs in all three.
Modulo two, only position 14 survives:
\begin{align}
\mathcal A_0\mathbin{\triangle}\mathcal A_+\mathbin{\triangle}\mathcal A_- &= \{14\},\qquad u_1^{(0)}(t)=t^{14}. \label{eq:expanded-p23-one-block-xor}
\end{align}
Here $\triangle$ denotes symmetric difference, which is XOR on indicator vectors.
This is an explicit physical-coordinate example.

\vspace{5pt}\noindent\emph{The complete recombined seeds.}
The same calculation in all eight physical blocks yields the four columns below.
All entries belong to $\F_2[t]/(t^{23}-1)$, and an entry $t^a+t^b$ indicates support at positions $a,b$ within that block.
Each column is one physical seed, with the first and second $X$ seeds paired respectively with the first and second $Z$ seeds.

\begin{center}
\begin{tabular}{c|cccc}
Block $c$ & $u_1$ & $v_1$ & $u_2$ & $v_2$ \\
\hline
0 & $t^{14}$ & $t^{15}+t^{20}$ & $0$ & $t+t^2$ \\
1 & $t^{10}$ & $t^8+t^{16}$ & $t^2$ & $t^2+t^7$ \\
2 & $t^7$ & $t^7$ & $t^5+t^{17}$ & $t^6$ \\
3 & $t^4$ & $t^9+t^{16}$ & $t+t^3$ & $t$ \\
4 & $t^{19}$ & $t^8+t^{20}$ & $t^{18}$ & $t^{18}$ \\
5 & $t^7+t^8$ & $0$ & $t^{12}+t^{17}$ & $t^{18}$ \\
6 & $t^9$ & $t^7+t^9$ & $t+t^{17}$ & $t^6$ \\
7 & $t^{10}+t^{15}$ & $t^{15}$ & $t+t^9$ & $t^7$ \\
\hline
Physical weight & $10$ & $12$ & $12$ & $10$
\end{tabular}
\end{center}

The three embedded components of $x_1=\operatorname{coef}(u_1)$ have weights $138,96,96$, respectively, but their XOR has weight ten.
The corresponding component weights of $z_1=\operatorname{coef}(v_1)$ are $46,84,84$, and their XOR has weight twelve.
These cancellations occur after physical expansion and are not visible from the packet dimensions alone.

Let $T$ multiply every block polynomial by $t$ modulo $t^{23}-1$.
Translating the displayed seeds gives
\begin{align}
B(T^a x_i,T^b z_j ) &= \delta_{ij}\delta_{ab}, \label{eq:expanded-p23-physical-canonicality}
\end{align}
for $i,j\in\{1,2\}$ and $a,b\in\{0,\ldots,22\}$.
Thus $(T^a x_1,T^a z_1)$ gives 23 canonical pairs of weights $(10,12)$, and $(T^a x_2,T^a z_2)$ gives 23 canonical pairs of weights $(12,10)$.
The four invariant residual pairs are obtained from the orthogonal complements of these selected regular sectors.
They are not included in the two low-weight regular families.

\subsection{Even P: $[[400,86,22]]$}
\label{sec:even_P_canonical}

For even-order lifts, the large regular orbit families may persist, but the residual logical directions need not be fixed by translation.
The appropriate replacement is the primary-packet decomposition developed in Appendix~D of Ref.~\cite{lee2026logicalspectroscopyliftedproductcodes}, which retains the repeated-root structure lost under ordinary packet evaluation~\cite{evenP}.

Consider the PP code with $(J,L,P)=(4,10,40)$.
Both binary check matrices have rank $157$, so its parent CSS code encodes $400-2\cdot157=86$ logical qubits.
Writing $R_{40}:=\F_2[u]/(u^{40}-1)$, we have
\begin{align}
u^{40}-1 &= (u+1)^8\Phi_5(u)^8,
\label{eq:p40-primary-factorization}
\end{align}
where $\Phi_5(u)=u^4+u^3+u^2+u+1$ is irreducible.
The Chinese remainder theorem separates the two coprime primary factors, but each retains its eighth power.
Their logical components therefore consist of cyclic blocks $\F_2[u]/(f^a)$, where the exponent $a$ records information that evaluation at a root alone cannot recover.

Let $T$ denote translation on the logical quotient.
Computing the kernels of successive powers $f(T)^r$ determines these block lengths.
On each Pauli side, the $u+1$ component has lengths $(8,8,2,2,1,1)$, while the $\Phi_5$ component has lengths $(8,8)$.
One length-eight block from each component recombines into a regular copy of $R_{40}$.
Thus, with $U_2:=\F_2[u]/((u+1)^2)$,
\begin{align}
\calL_X\cong\calL_Z &\cong R_{40}^2\oplus U_2^2\oplus\one^2.
\label{eq:p40-parent-logical-module}
\end{align}
Its binary dimension is $2\cdot40+2\cdot2+2=86$.

Each $U_2$ describes a pair of logical classes exchanged by translation.
Indeed, for a generator $v$, the nonzero class $w=(T+I)v$ satisfies
\begin{align}
Tv &= v+w, \qquad Tw=w, \qquad T^2v=v.
\label{eq:p40-length-two-action}
\end{align}
The independent classes $v$ and $Tv$ form a period-two orbit, but only their sum is fixed.
Consequently, $U_2$ cannot be replaced by two independent fixed directions.
These relations hold modulo stabilizers.

Choosing one seed in each summand gives the orbit-rank profile
\begin{align}
(40,40,2,2,1,1).
\label{eq:p40-parent-orbit-profile}
\end{align}
The six residual dimensions therefore comprise two translated pairs and two fixed directions.
Including the invariant direction inside each regular summand, the full fixed subspace has dimension six.
An unjustified substitution into the odd-order formula would instead predict $R_{40}^2\oplus\one^6$, with the same total dimension but eight fixed directions.

The coefficient interpretation of the group-valued correlation remains valid for even $P$.
For the regular orbit seeds at $P=40$, canonicality is equivalent to
$C(x_i,z_j)=\delta_{ij}1$ modulo both $(u+1)^8$ and $\Phi_5(u)^8$.
Normalization only in the residue fields does not establish these identities, because it discards the nilpotent information needed to determine all translated binary pairings.
The residual sectors must be chosen orthogonally to the opposite regular sectors and here cannot be restricted to fixed classes. 

A preliminary optimization produced $40$ canonical pairs of weights $(24,44)$ and the other $40$ canonical pairs of weights $(44,24)$ in two regular sectors, along with average weight 60 residual pairs. All together, they constitutes 86 canonical pairs of average weight $35.8$.
Independently, the weight-$22$ witnesses and the separate lower-bound certificate establish the code distance.

\subsection{Reverse fold: $[[92,25,8]]$ from $[[184,50,10]]$}
\label{sec:reverse-fold-canonical}

We now apply the reversing fold to the parent code studied in Appendix~\ref{sec:explicit_canonical_illustration}.
The folded logical Pauli space inherits one parent logical module,
\begin{align}
\mathcal Q_{\mathrm{fold}}
&\cong\calL_X
\cong R_{C_{23}}^2\oplus\one^4.
\end{align}
Its binary dimension is $50$, corresponding to $25$ folded logical qubits on $92$ physical qubits.
The trivial packet and the two reciprocal degree-$11$ packets therefore have dimensions
\begin{align}
\bigl(
\dim\mathcal V_{\mathrm{triv}},
\dim\mathcal V_{\Omega_+},
\dim\mathcal V_{\Omega_-}
\bigr)
&=(6,22,22).
\label{eq:p23-folded-packet-profile}
\end{align}
The reversing $\pi$ sends the parent $X$ packet $\Lambda$ to the parent $Z$ packet $\Lambda^{-1}$.
Since the parent CSS pairing couples reciprocal packets, the folded pairing instead couples matching packets.
Thus distinct folded packets are mutually orthogonal, and the symplectic form restricted to each packet is nondegenerate.

\vspace{5pt}\noindent\emph{Canonical orbits inherited from the parent.}
The canonical parent frame is reflection-compatible.
After choosing translation origins, its two $X$ seeds $a,b$, of weights $10$ and $12$, have canonical $Z$ partners $\pi(b)$ and $\pi(a)$, respectively.\footnote{For the seed table in Appendix~\ref{sec:explicit_canonical_illustration}, take $a=T^{17}x_1$ and $b=T^{20}x_2$.
Their corresponding $Z$ partners satisfy $T^{17}z_1=\pi(b)$ and $T^{20}z_2=\pi(a)$.}
For $i\in\mathbb Z_{23}$, define the directly folded representatives
\begin{align}
\overline X_{i+1}^{(0)}
&:=F_\pi(T^i a),&
\overline Z_{i+1}^{(0)}
&:=F_\pi(T^{-i}b).
\label{eq:p23-direct-fold-orbits}
\end{align}
Using $\pi T^{-j}=T^j\pi$ and parent canonicality gives
\begin{align}
B_{\mathrm{sp}}\bigl(
\overline X_{i+1}^{(0)},
\overline Z_{j+1}^{(0)}
\bigr)
&=B(T^i a,\pi T^{-j}b)
\nonumber\\
&=B(T^i a,T^j\pi(b))
=\delta_{ij}.
\label{eq:p23-direct-fold-canonicality}
\end{align}
The within-family pairings vanish by the off-diagonal parent CSS pairing conditions.
The forward orbit of $F_\pi(a)$ and the reverse orbit of $F_\pi(b)$ therefore form $23$ canonical pairs without further Gram-matrix inversion.

These two regular families span a $46$-dimensional nondegenerate symplectic subspace with packet dimensions $(2,22,22)$.
Its symplectic complement is four dimensional and lies entirely in the trivial packet.
It is therefore fixed by $K$ and supplies the remaining two canonical pairs.
Thus the parent regular sector yields $23$ folded logical qubits, while the orbit-rank profile of the folded Pauli space is $(23,23,1,1,1,1)$.

\vspace{5pt}\noindent\emph{Folded weights and coset optimization.}
Although the canonical pairing is inherited directly, physical weights must be evaluated after folding.
For a parent binary word $w$, let $c_\pi(w)$ count the $\pi$-pairs on which both coordinates are one.
Writing $\wt$ for parent Hamming weight and $\wt_{\mathrm P}$ for folded Pauli weight gives
\begin{align}
\wt_{\mathrm P}\bigl(F_\pi(w)\bigr)
&=\wt(w)-c_\pi(w).
\label{eq:folded-weight-map}
\end{align}
A doubly occupied pair becomes one folded $Y$ operator.
Consequently,
\begin{align}
\wt_{\mathrm P}\bigl(K^tF_\pi(w)\bigr)
&=\wt(w)-c_\pi(T^t w).
\end{align}
Parent translation preserves Hamming weight but can change the number of doubly occupied pairs, so folded Pauli weight need not be constant along an orbit.

For the chosen translation origins, $F_\pi(a)$ and $F_\pi(b)$ have folded weights $8$ and $12$.
The first orbit has weights $8,9,10$ with multiplicities $1,4,18$, respectively.
The second has weights $10,11,12$ with multiplicities $2,4,17$.
Together, the $46$ directly folded representatives have total weight $492$ and mean operator weight $492/46\approx10.696$.

Let $\mathcal S_X:=\operatorname{im}\widehat H_X^{\mathsf T}$ be the parent binary $X$-stabilizer space.
The minimum folded weight of a logical class $[w]\in\calL_X$ is
\begin{align}
w_{\mathrm{fold}}([w])
&:=\min_{s\in\mathcal S_X}
\wt_{\mathrm P}\bigl(F_\pi(w+s)\bigr).
\label{eq:p23-folded-coset-weight}
\end{align}
Because $K$ does not preserve folded Pauli weight, optimizing one seed representative need not optimize the other classes in its orbit.
We therefore search these stabilizer cosets independently.
The improved representatives retain
\begin{align}
[\overline X_{i+1}]=K^i[F_\pi(a)], \qquad [\overline Z_{i+1}]=K^{-i}[F_\pi(b)].
\label{eq:p23-dressed-canonical-orbits}
\end{align}
These are relations between logical classes; the physical representatives need not remain literal translates of a single representative.
Stabilizer dressing preserves canonical pairing, the regular logical subspace, and its symplectic complement. The search reduces the $Z$ representatives of pairs $i=12$ and $22$ from weight $12$ to weight $11$. The regular total weight therefore decreases from $492$ to $490$.

\vspace{5pt}\noindent\emph{Completing the canonical frame.}
We choose and optimize two canonical pairs in the four-dimensional fixed complement, obtaining pair weights $(26,21)$ and $(25,23)$.
Together with the regular sector, they give a complete frame satisfying
\begin{align}
B_{\mathrm{sp}}(\overline X_i,\overline X_j)
&=B_{\mathrm{sp}}(\overline Z_i,\overline Z_j)=0,
\nonumber\\
B_{\mathrm{sp}}(\overline X_i,\overline Z_j)
&=\delta_{ij},
\qquad 1\leq i,j\leq25.
\label{eq:p23-complete-canonical-frame}
\end{align}
Table~\ref{tab:p23-complete-canonical-weights} lists the achieved pair weights.
Including the four residual operators gives total weight $585$ and mean operator weight $585/50=11.70$.
All weights count a site carrying $Y$ once.

\begin{table}[t]
\centering
\begin{tabular}{rc@{\quad}rc@{\quad}rc@{\quad}rc}
\toprule
$i$ & $p_i$ & $i$ & $p_i$ & $i$ & $p_i$ & $i$ & $p_i$ \\
\midrule
1 & $(8,12)$  & 7  & $(10,11)$ & 13 & $(10,11)$ & 19 & $(9,10)$  \\
2 & $(9,12)$  & 8  & $(9,12)$  & 14 & $(10,12)$ & 20 & $(10,12)$ \\
3 & $(10,12)$ & 9  & $(10,12)$ & 15 & $(10,12)$ & 21 & $(10,12)$ \\
4 & $(10,12)$ & 10 & $(9,11)$  & 16 & $(10,12)$ & 22 & $(10,11)$ \\
5 & $(10,12)$ & 11 & $(10,10)$ & 17 & $(10,12)$ & 23 & $(10,12)$ \\
6 & $(10,12)$ & 12 & $(10,11)$ & 18 & $(10,11)$ &    &           \\
\midrule
24 & $(26,21)$ & 25 & $(25,23)$ & & & & \\
\bottomrule
\end{tabular}
\caption{Canonical pair weights $p_i=(\wt_{\mathrm P}(\overline X_i),\wt_{\mathrm P}(\overline Z_i))$ for the folded $[[92,25,8]]$ code.
Pairs $1$--$23$ are obtained by directly folding the reflection-compatible parent frame and then reducing weights within folded stabilizer cosets.
Pairs $24$ and $25$ lie in the fixed complement.
A site carrying $Y$ contributes one to the weight.}
\label{tab:p23-complete-canonical-weights}
\end{table}

\vspace{5pt}\noindent\emph{Distance certificate.}
The directly folded representative $F_\pi(a)$ has weight eight, giving $d_{\mathrm{fold}}\leq8$.
For the matching lower bound, suppose a folded Pauli has $y$ sites carrying $Y$ and $s$ sites carrying only $X$ or only $Z$.
Its folded weight is $y+s$, while its inverse image under $F_\pi$ has parent weight $2y+s$.
Every nontrivial folded logical lifts to a nontrivial parent $X$-logical class, so the parent distance $10$ excludes candidates with $2y+s<10$.
It therefore suffices to exclude nontrivial folded logicals in the region
\begin{align}
y+s\leq7, \qquad 2y+s\geq10.
\end{align}
Ordering the $92$ folded qubits partitions the candidates into $3\times92=276$ branches according to the first occupied site and its Pauli symbol.
An exhaustive branch-and-bound calculation completes all branches and excludes every element of $N(S_{\mathrm{fold}})\setminus S_{\mathrm{fold}}$ in this region.
Together with the weight-eight witness, this establishes
\begin{align}
\boxed{[[92,25,8]]}.
\label{eq:p23-folded-parameters}
\end{align}
The distance certificate is exact, whereas the displayed basis weights are achieved representative weights, not certified coset minima.

\section{Details of the Ensemble Studies}
\label{sec:de-matrices}

This appendix summarizes the numerical protocols and code instances used to compare
density evolution with finite-size instance BP simulations in Sec.~\ref{sec:de}. We
also provide further details behind the ensemble-distance
comparison of Sec.~\ref{sec:distance}.

\subsection{Finite-code midpoint comparisons}

We use the depolarizing code-capacity channel and joint-Pauli SPA belief
propagation.  The density-evolution value $p_{\mathrm{DE}}$ is the bisection
boundary at which 180-iteration population density evolution, using 30,000
messages, reaches residual component error rate $10^{-5}$.  For a finite code
of blocklength $n$, $p_{50}(n)$ is the physical error rate at which the full
logical frame error rate equals $0.5$ under SPA with at most 100 iterations and
syndrome-based early stopping.  We obtain $p_{50}$ by linear interpolation
between adjacent measured points that bracket frame error rate $0.5$.

For the degree-distribution comparison, we evaluate all 11 regular ensembles
with $3\leq J\leq5$, $L\leq16$, and positive design rate at most $0.60$.  For each
$(J,L)$, we construct three independent matrices at $P=97$ and three at
$P=307$.  Every matrix is checked for CSS orthogonality, binary rank, the
specified row and column weights, distinct permutation assignments, and the
absence of Tanner 4-cycles.  The finite-code campaign uses 177,750 frames.
Each point in Fig.~\ref{fig:cpmpp-de}(a) pools the three matrices at fixed
$(J,L,P)$ and carries a 95\% interval.  Increasing the lift reduces the RMSE
between $p_{50}$ and $p_{\mathrm{DE}}$ from $0.00276$ to $0.00152$ and the mean
signed gap $p_{\mathrm{DE}}-p_{50}$ from $0.00229$ to $0.00119$.  The absolute
gap is smaller at the larger lift for 10 of the 11 degree distributions.

We separately test the representative degree distribution $(J,L)=(4,12)$ at
$n=444$, $732$, $1164$, $1812$, $2676$, and $3684$.  At each length we
construct three independent codes without Tanner 4-cycles.  A direct search
finds no nontrivial logical operator of weight at most 12 in any of the 18
selected codes.  The measured midpoint increases from $p_{50}=0.07321$ at
$n=444$ to $0.07792$ at $n=3684$, approaching
$p_{\mathrm{DE}}=0.07814$.  The remaining difference at the largest length is
$2.2\times10^{-4}$.

\subsection{Low-FER comparison}

The low-FER study in Fig.~\ref{fig:cpmpp-de}(b) uses three CPM-PP codes with degree profile $(J,L)=(4,12)$:
$[[444,154,18]]$, $[[1068,362,d\geq26]]$, and $[[3684,1234,24]]$.

One frozen decoder schedule is used throughout.  It begins with sum-product belief propagation (SPA) for at
most 100 iterations, followed by the bounded CPM-aware residual repair of
Appendix.~\ref{sec:cpmpp-code-capacity-decoder}.  A remaining nonzero residual is
processed by continued SPA up to 1000 iterations and another repair, followed
by normalized min-sum (NMS) with 100 iterations, normalization factor $0.8$,
and another repair.  The prescribed low-$b$ residual class receives a final
NMS--repair stage with normalization factor $0.75$.

All error bars are exact 95\%
Clopper--Pearson intervals, and the deepest measurement of each code contains
15, 3 and 1 final failures at $n=444$, $1068$ and $3684$.  The vertical reference is the finite-iteration
heuristic $p_{\mathrm{DE}}=0.07814$; the depolarizing hashing bound~\cite{Hashing1996} at design
rate $R=1/3$, where $R=1-h_2(p)-p\log_2 3$, is $p_{\mathrm{hash}}=0.10835$ and
lies outside the plotted range.

Each waterfall is described by a Gaussian in the physical error rate.  Fitting
$\mathrm{FER}(p)=\Phi\bigl((p-p_{50})/\sigma\bigr)$ to each code separately by
binomial maximum likelihood over the 17 fully postprocessed points gives
$(p_{50},\sigma)=(0.0721,0.0109)$, $(0.0752,0.0077)$ and $(0.0768,0.0043)$ at
$n=444$, $1068$ and $3684$.  The description is accurate across the whole
measured range, with residual deviance $10.7$ on $11$ degrees of freedom
($p=0.46$).

Two trends follow.  The midpoints increase with blocklength toward the
density-evolution value, $p_{\mathrm{DE}}-p_{50}$ falling from $0.0060$ to
$0.0029$ to $0.0014$, and the widths narrow approximately as $n^{-1/2}$, with
$\sigma\sqrt{n}$ equal to $0.230$, $0.251$ and $0.260$.  The $n^{-1/2}$ law is
the finite-length scaling form expected of a random-like ensemble, but it does
not hold exactly over this range.  Imposing a single $\sigma=\alpha/\sqrt{n}$
while leaving the midpoints free gives $\alpha=0.249$.  The discrepancy is more visible and carried by the shortest code,
whose fitted width is $7\%$ narrower than the common law predicts.

\subsection{Derivation of the regular-ensemble distance scale}
\label{sec:ensemble-distance-derivation}

Consider the standard $(J,L)$-regular ensemble and its average
weight enumerator~\cite{Gallager1960,richardson2008modern}.  It has $n$
variable nodes and
\begin{equation}
    m=\frac{nJ}{L}
\end{equation}
check nodes.  Hence the blocklengths in the ensemble are restricted by the
integrality of $nJ/L$. If we imagine the variable and check sides to have ``sockets" that correspond to an incidence relation, for a fixed support of $w$ variables, its $wJ$
incident variable sockets are matched uniformly to a subset of the $nJ$
check sockets.  The support is a codeword precisely when every check receives
an even number of these sockets.  It follows that the ensemble-average weight
enumerator is
\begin{equation}
    \mathbb{E}[A_w]
    =\binom{n}{w}\frac{N_{\mathrm{even}}(wJ)}{\binom{nJ}{wJ}},
    \label{eq:regular-average-enumerator}
\end{equation}
where
\begin{align}
    q(x)&=\sum_{\substack{0\leq i\leq L\\ i\ \mathrm{even}}}
        \binom{L}{i}x^i
        =\frac{(1+x)^L+(1-x)^L}{2},\\
    N_{\mathrm{even}}(a)&=[x^a]q(x)^m.
\end{align}
In particular, $N_{\mathrm{even}}(wJ)=0$ when $wJ$ is odd.

Define the binary entropy
\begin{equation}
    h_2(\omega)=-\omega\log_2\omega
    -(1-\omega)\log_2(1-\omega).
\end{equation}
For $0<\omega\leq 1/2$, take integer weights $w_n$ along blocklengths for
which $nJ/L$ is integral and $Jw_n$ is even, with $w_n/n\to\omega$, and
define
\begin{equation}
    G_{J,L}(\omega)
    =\lim_{n\to\infty}\frac{1}{n}
        \log_2\mathbb{E}[A_{w_n}],
    \label{eq:regular-distance-exponent-definition}
\end{equation}
along such admissible sequences.  The choice of even-lattice approximants
affects only subexponential prefactors.  Saddle-point evaluation of the
coefficient in Eq.~\eqref{eq:regular-average-enumerator}, with
$w_n/n\to\omega$, then gives
\begin{equation}
    \frac{s q'(s)}{q(s)}=L\omega
    \label{eq:regular-distance-saddle}
\end{equation}
and
\begin{equation}
    G_{J,L}(\omega)
    =(1-J)h_2(\omega)+\frac{J}{L}\log_2 q(s)
    -J\omega\log_2 s.
    \label{eq:regular-distance-growth-rate}
\end{equation}
Here $s\in(0,1]$.  The left-hand side of
Eq.~\eqref{eq:regular-distance-saddle} is the mean even subset size under the
tilted weights proportional to $\binom{L}{i}s^i$.  Its derivative with
respect to $\log s$ is the corresponding variance and is strictly positive,
so the saddle is unique.  It increases from zero to $L/2$ as $s$ increases
from zero to one.  At $\omega=1/2$, $s=1$ and
\begin{equation}
    G_{J,L}(1/2)=1-\frac{J}{L}.
\end{equation}

Near zero, the exponent has the expansion
\begin{equation}
    G_{J,L}(\omega)
    =\left(\frac{J}{2}-1\right)\omega\log_2\omega+O(\omega).
\end{equation}
Thus it is negative immediately above zero for $J\geq3$.  For $J=2$ the
leading term cancels, and the sharper expansion is
\begin{equation}
    G_{2,L}(\omega)=\omega\log_2(L-1)+O(\omega^2)>0
\end{equation}
for sufficiently small positive $\omega$ when $L>2$.  We define
$\delta_{J,L}$ as the first positive zero of $G_{J,L}$, with
$\delta_{2,L}=0$.  Numerically, Eq.~\eqref{eq:regular-distance-saddle} is
solved by bisection.

For any fixed admissible relative weight below the first positive zero where
$G_{J,L}(\omega)<0$, Eq.~\eqref{eq:regular-average-enumerator} shows that the
expected number of such codewords vanishes exponentially.  Markov's
inequality therefore excludes them with high probability.  This is only a
lower-side exclusion: the first-moment argument neither proves the existence
of codewords above $\delta_{J,L}$ nor establishes two-sided concentration of
the minimum distance.

The calculation above concerns the unconditioned classical ensemble
$\ker H$ for one random parity-check matrix.  In a CSS code, by contrast,
$X$- and $Z$-type logical operators are quotient spaces, for example
$\ker H_Z/\operatorname{row}(H_X)$ and
$\ker H_X/\operatorname{row}(H_Z)$.  Stabilizer rows are removed by these
quotients, and the constraint $H_XH_Z^{\mathsf T}=0$ conditions the two
matrices on orthogonality.  The CPM-PP codes impose the further structure of
circulant-permutation lifts and therefore are not draws from the configuration
ensemble.  Consequently, $n\delta_{J,L}$ is used here only as a heuristic
screen for degree distributions; it is not a theorem for the logical distance
of the CSS quotient or for a structured CPM-PP lift.

\subsection{Ensemble-distance comparison}

\begin{figure*}[!t]
    \centering
    \includegraphics[width=\textwidth]{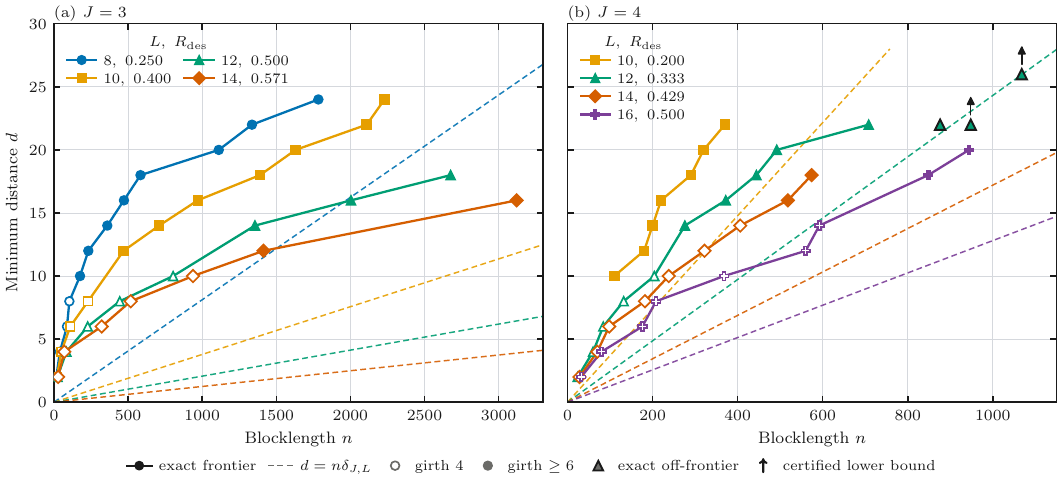}
    \caption{\textbf{CPM-PP distance frontiers by degree distribution.}
    Exact-distance Pareto frontiers, grouped by
    $(J,L)$, with the corresponding regular-ensemble scales
    $d=n\delta_{J,L}$. Filled markers have Tanner girth at least six, and open
    markers have girth four. The upward arrows in
    panel (b) give the proved lower bounds $d\geq22,26$ at $n=948,1068$,
    respectively. Dashed lines show
    $d=n\delta_{J,L}$.}
    \label{fig:cpmpp-distance-frontier}
\end{figure*}

We show additional comparisons of the per-family $(n,d)$ Pareto frontiers against the ensemble distance estimate in
Fig.~\ref{fig:cpmpp-distance-frontier}. Within each $(J,L)$ family, the best certified distance increases
with blocklength, while the slopes and absolute levels of the finite-code
frontiers vary with the permutation assignments; $\delta_{J,L}$ depends only on
the regular degree distribution. We find that while the qualitative trends agree, the ensemble estimate does not accurately predict the quantitative distance value.

\section{CPM-aware residual-syndrome decoding}
\label{sec:cpmpp-code-capacity-decoder}

\noindent\textbf{Decoder.}
Consider one binary component of CSS decoding.  Let $H$ be the corresponding
parity-check matrix, $s$ the measured syndrome, and $\widehat e$ the estimate
from an iterative decoder.  Its residual syndrome is
\begin{equation}
    \rho=s+H\widehat e.
    \label{eq:cpmpp-residual-syndrome}
\end{equation}
For a prescribed weight budget $w$, the postprocessor seeks a correction $c$
satisfying
\begin{equation}
    Hc=\rho,\qquad \operatorname{wt}(c)\leq w,
    \label{eq:cpmpp-residual-search}
\end{equation}
and replaces $\widehat e$ by $\widehat e+c$ only after this equation has been
verified directly.  It occupies the same place in the pipeline as
ordered-statistics post-processing~\cite{panteleev2019degenerate}, but returns
only corrections of weight at most $w$.  The procedure is applied separately
to the two CSS components.

Let $Q_\tau$ and $R_\tau$ denote simultaneous cyclic shifts of the qubit and
check-row lift indices by $\tau\in\mathbb Z_P$.  The CPM structure gives
\begin{equation}
    HQ_\tau=R_\tau H.
    \label{eq:cpmpp-decoder-equivariance}
\end{equation}
Consequently, $Q_\tau c$ solves $R_\tau\rho$ whenever $c$ solves $\rho$.
We represent every orbit $\{R_\tau\rho:\tau\in\mathbb Z_P\}$ by its
lexicographically smallest sparse syndrome and denote it by
$\operatorname{can}(\rho)$.

The bounded repair first maps $\rho$ to $\operatorname{can}(\rho)$ and
records the inverse shift.  It then queries a table of pre-enumerated
connected elementary trapping-set
supports~\cite{richardson2003error,chilappagari2010on,raveendran2021trapping} indexed by canonical
syndrome.  A unique table hit is shifted back and accepted only after direct syndrome
verification.  If the lookup fails or is ambiguous, deterministic
iterative-deepening search is performed through weights $1,\ldots,w$.  At a
nonzero search syndrome, the algorithm selects an unsatisfied check and
branches over its incident variables.  A minimum-weight solution of weight
at most $w$ can be chosen with every connected component touching the
residual syndrome; each such component lies within Tanner radius $2w-1$.
The search may therefore be restricted to that neighborhood.

Every unsuccessful state is memoized using
\begin{equation}
    \bigl(\operatorname{can}(\rho'),r\bigr),
    \label{eq:cpmpp-decoder-memo-key}
\end{equation}
where $\rho'$ is the current search syndrome and $r$ is the remaining weight
budget.  A branch is pruned when
\begin{equation}
    |\rho'|>r\Delta,
    \label{eq:cpmpp-decoder-pruning}
\end{equation}
where $\Delta$ is the maximum column weight of $H$.  Variable reliabilities
from the preceding iterative decoder determine branch order but not
acceptance.  A prescribed state budget $B$ makes the search terminate with
either a verified correction or a declared failure.

The wrapper may follow either SPA, the sum-product BP
decoder~\cite{Gallager1960,richardson2008modern}, or
NMS~\cite{chen2002density}.
After
each iterative-decoder stage, a zero residual is accepted and a nonzero
residual is passed to the bounded repair.  The fixed sequence and budgets
used for the code-capacity measurements are stated in
Appendix~\ref{sec:de-matrices}.

\noindent\textbf{Complexity.}
Let $E_{\mathrm{FG}}$ be the number of edges in the joint factor graph.  An
SPA or NMS stage with iteration cap $T$ requires
$O(TE_{\mathrm{FG}})$ message updates; for a fully populated $(J,L)$ CPM-PP
code, $E_{\mathrm{FG}}=2Jn$.  A direct unrestricted support search through
weight $w$ would examine $\sum_{i=0}^{w}\binom{n}{i}$ candidates.  The
online bounded search instead visits at most $B$ states per CSS component.
If $C_{\mathrm{can}}$ denotes the cost of canonicalizing one sparse residual
syndrome, incremental syndrome updates give online time
$O\bigl(B(C_{\mathrm{can}}+\Delta)\bigr)$ in addition to the iterative
decoder.  The failed-state table stores at most $B$ entries, while the
depth-first stack has depth at most $w$.  The ETS table is constructed once
for a fixed parity-check matrix and reused for every received syndrome.

\section{Detailed Derivations for Density Evolution}
\label{sec:de_details}

\begin{figure}[!t]
    \centering
    \begingroup
\tikzset{
  decheck/.style={draw=black,fill=black!10,rectangle,inner sep=0pt,
    minimum size=3.0mm},
  devar/.style={draw=black,fill=white,circle,inner sep=0pt,
    minimum size=5.4mm,font=\scriptsize},
  depsi/.style={draw=black,fill=orange!70!red,fill opacity=0.30,
    draw opacity=1,rectangle,inner sep=0pt,minimum size=4.6mm},
  deedge/.style={draw=black!60,line width=0.5pt},
  destrong/.style={draw=black,line width=0.9pt},
  debox/.style={draw=black!70,rounded corners=1pt,fill=blue!5,
    inner sep=2.2pt,font=\tiny,align=center},
  deflow/.style={-{Latex[length=1.5mm,width=1.3mm]},draw=black!70,
    line width=0.5pt},
  delab/.style={font=\scriptsize},
  detiny/.style={font=\tiny},
  deaxis/.style={draw=black!45,line width=0.4pt},
  decurve/.style={draw=blue!55!black,line width=0.7pt},
}
\newcommand{\dedensity}[5]{%
  \begin{scope}[shift={(#1,#2)}]
    \draw[deaxis] (-0.60,0) -- (0.60,0);
    \draw[decurve] plot[domain=-0.58:0.58,samples=50]
      (\x, {#3*exp(-30*(\x-0.28)*(\x-0.28))+#4*exp(-40*(\x+0.30)*(\x+0.30))});
    \node[detiny] at (0,-0.25) {#5};
  \end{scope}}
\begin{tabular}{@{}c@{}}
\begin{tikzpicture}
  \foreach \y in {0.72,0,-0.72} {
    \node[decheck] at (-2.30,\y) {};
    \draw[deedge] (-2.30,\y) -- (-1.15,0);
  }
  \foreach \y in {0.72,0,-0.72} {
    \node[decheck] at (2.30,\y) {};
    \draw[deedge] (2.30,\y) -- (1.15,0);
  }
  \draw[dashed,draw=black!45,rounded corners=2pt]
    (-1.62,-0.36) rectangle (1.62,0.36);
  \node[devar] at (-1.15,0) {$x_j$};
  \node[depsi] at (0,0) {};
  \node[font=\scriptsize] at (0,0) {$\psi$};
  \node[devar] at (1.15,0) {$z_j$};
  \draw[destrong] (-0.88,0) -- (-0.23,0);
  \draw[destrong] (0.23,0) -- (0.88,0);
  \node[delab,align=center] at (-2.30,1.30) {$H_Z$ rows\\[-2pt]{\tiny(detect $x$)}};
  \node[delab,align=center] at (2.30,1.30) {$H_X$ rows\\[-2pt]{\tiny(detect $z$)}};
  \node[detiny,anchor=north] at (-1.15,-0.52) {$M_{x,j}(x)$};
  \node[detiny,anchor=north] at (1.15,-0.52) {$M_{z,j}(z)$};
  \node[detiny,anchor=south] at (0,0.40) {qubit $j$};
  \node[delab] at (-3.90,1.30) {(a)};
\end{tikzpicture}
\\[2.5mm]
\begin{tikzpicture}
  \dedensity{-2.70}{0}{0.44}{0.30}{$a^{(\ell)}$}
  \node[debox] (chk) at (-0.85,0.10) {check node\\[-1pt]$L-1$ inputs};
  \dedensity{1.20}{0}{0.50}{0.21}{$b^{(\ell)}$}
  \node[debox] (var) at (-0.85,-1.15) {variable node\\[-1pt]$J-1$ inputs,\\[-1pt]then $\psi$};
  \dedensity{-2.70}{-1.24}{0.56}{0.10}{$a^{(\ell+1)}$}
  \draw[deflow] (-2.05,0.10) -- (chk.west);
  \draw[deflow] (chk.east) -- (0.55,0.10);
  \draw[deflow,rounded corners=3pt]
    (1.20,-0.38) -- (1.20,-1.15) -- (var.east);
  \draw[deflow] (var.west) -- (-2.05,-1.15);
  \draw[deflow,rounded corners=3pt]
    (-3.34,-1.24) -- (-3.80,-1.24) -- (-3.80,0) -- (-3.36,0);
  \node[detiny,rotate=90] at (-3.96,-0.62) {$\ell\to\ell+1$};
  \node[delab] at (-3.90,0.62) {(b)};
\end{tikzpicture}
\end{tabular}
\endgroup
    \caption{\textbf{Butterfly factor graph and the density-evolution
    recursion.}  (a) The gadget at one qubit $j$.  The $X$ component $x_j$ is a
    variable node of the Tanner graph of $H_Z$, which detects $X$ errors, and
    the $Z$ component $z_j$ a variable node of the Tanner graph of $H_X$.  The
    two component graphs share no edge and are joined only by the coupling
    factor $\psi_{xz}=\Pr(x,z)$, which carries the joint Pauli prior; the
    extrinsic beliefs $M_{x,j}$ and $M_{z,j}$ arriving from the two sides are
    combined by Eq.~\eqref{eq:de-joint-belief}.  (b) One iteration of the
    recursion.  Density evolution propagates message distributions rather than
    messages: the variable-to-check density $a^{(\ell)}$ passes through the
    check-node update with $L-1$ inputs to give $b^{(\ell)}$, then through the
    variable-node update with $J-1$ inputs followed by the coupling factor to
    give $a^{(\ell+1)}$.  The curves are schematic.  The recursion is run on
    the same pair of message densities on both component graphs.}
    \label{fig:butterfly-de}
\end{figure}

Quantum density evolution runs on the butterfly factor graph used by the joint $X/Z$ decoder. Here, we focus on the depolarizing code-capacity channel. A Pauli error $X^xZ^z$ on qubit $j$ is represented by two binary variables, $x_j$ and $z_j$, each of which belongs to its own Tanner graph. The two variables are joined by a single coupling factor $\psi_{xz}=\Pr(x,z)$ that contains the joint Pauli prior. There is one copy of this factor for every qubit, and there are no other edges between the two component graphs. Under depolarizing noise of strength $p$, its four entries are $\psi_{00}=1-p$ and $\psi_{01}=\psi_{10}=\psi_{11}=p/3$. Figure~\ref{fig:butterfly-de}(a) shows the gadget at one qubit.

Let $M_{x,j}(x)$ and $M_{z,j}(z)$ denote the extrinsic beliefs supplied to
qubit $j$ by the two Tanner graphs.  The butterfly factor combines them into
the joint belief
\begin{equation}
    b_j(x,z)\propto \psi_{xz}M_{x,j}(x)M_{z,j}(z).
    \label{eq:de-joint-belief}
\end{equation}
This equation summarizes the role of the joint decoder. Within each Tanner
graph, the messages are updated using ordinary binary BP. The coupling factor
then allows the evidence about one Pauli component to influence the other
while preserving the correlation in the channel prior.

Density evolution tracks the probability distributions of these messages
rather than the messages on a particular finite graph.  At each iteration, the
ordinary BP rules update these distributions according to the prescribed
degrees, and the butterfly factor combines the two sides through
Eq.~\eqref{eq:de-joint-belief}, as sketched in
Fig.~\ref{fig:butterfly-de}(b).  We represent the evolving distributions by
large populations of messages.  For the regular pair-partition ensembles
considered here, all check nodes are statistically equivalent, as are all
qubit positions, and the two bases are symmetric.  The ensemble is therefore
characterized by one evolving variable-to-check distribution
$a^{(\ell)}$; the check-to-variable distribution $b^{(\ell)}$ is obtained
from it by the check update and does not need to be independently tracked.  The
resulting four-outcome Pauli decisions estimate the residual error
probability.  Repeating this procedure at fixed $p$ determines whether the
error probability approaches zero, and a bisection in $p$ gives the
density-evolution threshold.
The same population method extends to lifted-product graphs with non-uniform degree distribution by employing a separate message distribution for each edge type~\cite{richardson2008modern}. The single- and multi-edge recursions are
derived in Secs.~\ref{sec:de-single-edge} and~\ref{sec:de-multi-edge} below.

\subsection{Single-Edge-Type Derivation for Pair-Partition Codes}
\label{sec:de-single-edge}

We first give the log-likelihood-ratio recursion used for the regular
pair-partition ensembles in Sec.~\ref{sec:de}.  For a binary message $m$, we
use the convention
\begin{equation}
    \mathcal L(m)=\log\frac{m(0)}{m(1)},
    \label{eq:de-llr-convention}
\end{equation}
so that a positive value favors the absence of the corresponding Pauli
component.  It is convenient to condition on the sampled error and work
with adjusted messages.  If the true component on a variable is $u\in
\{0,1\}$, the adjusted LLR is $\widetilde{\mathcal L}=(-1)^u\mathcal L$.
This change of variables removes the measured syndrome from the distributional
check update.

Let $V_s^{(\ell)}$ and $C_s^{(\ell)}$ denote the adjusted variable-to-check
and check-to-variable LLRs at iteration $\ell$ on side $s\in\{x,z\}$.
Independent copies of a random message are indicated by a second subscript.
For a $(J,L)$-regular ensemble, the check update is
\begin{equation}
    C_s^{(\ell)} \mathrel{\overset{\mathrm d}{=}}
    2\operatorname{atanh}\!\left[
        \prod_{r=1}^{L-1}
        \tanh\!\left(\frac{V_{s,r}^{(\ell)}}{2}\right)
    \right].
    \label{eq:de-single-check}
\end{equation}
Indeed, the unadjusted check message contains the syndrome sign.  The
syndrome is the parity of the true neighboring bits, so this sign cancels the
signs introduced by adjusting the $L-1$ incoming messages and the outgoing
message.

The correlation between the two Pauli components enters through the
butterfly factor.  Given an unadjusted extrinsic LLR $t$ for the opposite
component, the factor-to-variable messages are
\begin{align}
    \gamma_x(t)
    &=\log\frac{\psi_{00}+\psi_{01}e^{-t}}
                     {\psi_{10}+\psi_{11}e^{-t}},
    \label{eq:de-gamma-x}\\
    \gamma_z(t)
    &=\log\frac{\psi_{00}+\psi_{10}e^{-t}}
                     {\psi_{01}+\psi_{11}e^{-t}}.
    \label{eq:de-gamma-z}
\end{align}
For the depolarizing prior, $\psi_{00}=1-p$ and
$\psi_{01}=\psi_{10}=\psi_{11}=p/3$, and hence
$\gamma_x=\gamma_z$.  To update an $x$-side edge, we first draw the true pair
$(x,z)$ from $\psi$ and form the full adjusted extrinsic message from the
opposite side,
\begin{equation}
    B_z^{(\ell)}=\sum_{r=1}^{J}C_{z,r}^{(\ell)}.
    \label{eq:de-single-opposite-belief}
\end{equation}
The quantity supplied to $\gamma_x$ is $(-1)^zB_z^{(\ell)}$, which converts
this adjusted sum back to an ordinary LLR.  The complete variable update is
therefore
\begin{align}
    V_x^{(\ell+1)}
    &\mathrel{\overset{\mathrm d}{=}}
      (-1)^x\gamma_x\!\left((-1)^zB_z^{(\ell)}\right)
      +\sum_{r=1}^{J-1}C_{x,r}^{(\ell)},
    \label{eq:de-single-variable-x}\\
    V_z^{(\ell+1)}
    &\mathrel{\overset{\mathrm d}{=}}
      (-1)^z\gamma_z\!\left((-1)^xB_x^{(\ell)}\right)
      +\sum_{r=1}^{J-1}C_{z,r}^{(\ell)}.
    \label{eq:de-single-variable-z}
\end{align}
The opposite-side sum contains all $J$ check messages because the target edge
belongs to the other Tanner graph.  The same-side sum contains only $J-1$
messages, as required for an extrinsic variable-to-check update.  The initial
populations are obtained by setting the opposite-side belief to zero,
\begin{equation}
    V_x^{(0)}\mathrel{\overset{\mathrm d}{=}}(-1)^x\gamma_x(0),
    \qquad
    V_z^{(0)}\mathrel{\overset{\mathrm d}{=}}(-1)^z\gamma_z(0).
    \label{eq:de-single-initialization}
\end{equation}

The residual error is evaluated at a full variable node.  With
$B_s=\sum_{r=1}^{J}C_{s,r}$ and unadjusted extrinsic LLRs
$L_x=(-1)^xB_x$ and $L_z=(-1)^zB_z$, the four posterior log scores, up to a
common normalization, are
\begin{equation}
    Q_{ab}=\log\psi_{ab}-aL_x-bL_z,
    \qquad (a,b)\in\{0,1\}^2.
    \label{eq:de-posterior-scores}
\end{equation}
The decoder chooses the pair with the largest score, and density evolution
tracks the probability that this pair differs from the sampled $(x,z)$.  In
the population implementation, every random variable on the right-hand sides
of Eqs.~\eqref{eq:de-single-check}--\eqref{eq:de-single-variable-z} is sampled
independently from its current population.  The LLRs are clipped at magnitude
30, and the argument of $\operatorname{atanh}$ is clipped below unity to
avoid numerical saturation.

\subsection{Multi-Edge-Type Derivation for Lifted Product Codes}
\label{sec:de-multi-edge}

Lifted-product graphs generally contain inequivalent sets of sockets, so a
single message distribution no longer specifies the local neighborhood.  We
use the multi-edge-type ensemble description~\cite{richardson2008modern}.
Let $t\in\{1,\ldots,T\}$ label the edge types.  A variable-node type $v$ has
node fraction $\nu_v$ and degree vector
$\boldsymbol d_v=(d_{v,1},\ldots,d_{v,T})$, while a check-node type $c$ has
$\mu_c$ checks per variable node and degree vector
$\boldsymbol e_c=(e_{c,1},\ldots,e_{c,T})$.  The number of sockets per
variable node of each type is
\begin{equation}
    \epsilon_t=\sum_v\nu_v d_{v,t}
              =\sum_c\mu_c e_{c,t}.
    \label{eq:de-met-socket-balance}
\end{equation}
Consequently, a uniformly selected edge of type $t$ sees the node types with
edge-perspective probabilities
\begin{equation}
    \lambda_{v\mid t}=\frac{\nu_vd_{v,t}}{\epsilon_t},
    \qquad
    \rho_{c\mid t}=\frac{\mu_ce_{c,t}}{\epsilon_t}.
    \label{eq:de-met-edge-probabilities}
\end{equation}

Let $V_{s,t}^{(\ell)}$ and $C_{s,t}^{(\ell)}$ be the adjusted message LLRs on
side $s$ and edge type $t$.  For the message leaving a check along an edge of
type $t$, we draw the check type $c$ according to $\rho_{c\mid t}$.  Its
distribution is then
\begin{equation}
    C_{s,t}^{(\ell)}\mathrel{\overset{\mathrm d}{=}}
    2\operatorname{atanh}\!\left[
      \prod_{t'=1}^{T}
      \prod_{r=1}^{e_{c,t'}-\delta_{tt'}}
      \tanh\!\left(\frac{V_{s,t',r}^{(\ell)}}{2}\right)
    \right].
    \label{eq:de-met-check}
\end{equation}
Thus the check combines all socket classes present at the sampled node, with
only the outgoing socket omitted.

For a variable-to-check message of type $t$, we instead draw the variable
type $v$ according to $\lambda_{v\mid t}$.  Define its same-side extrinsic
sum and its full opposite-side sum by
\begin{align}
    A_{s,v,t}^{(\ell)}
    &=\sum_{t'=1}^{T}
      \sum_{r=1}^{d_{v,t'}-\delta_{tt'}}C_{s,t',r}^{(\ell)},
    \label{eq:de-met-same-sum}\\
    B_{s,v}^{(\ell)}
    &=\sum_{t'=1}^{T}
      \sum_{r=1}^{d_{v,t'}}C_{s,t',r}^{(\ell)}.
    \label{eq:de-met-full-sum}
\end{align}
The coupled variable updates are
\begin{align}
    V_{x,t}^{(\ell+1)}
    &\mathrel{\overset{\mathrm d}{=}}
      (-1)^x\gamma_x\!\left((-1)^zB_{z,v}^{(\ell)}\right)
      +A_{x,v,t}^{(\ell)},
    \label{eq:de-met-variable-x}\\
    V_{z,t}^{(\ell+1)}
    &\mathrel{\overset{\mathrm d}{=}}
      (-1)^z\gamma_z\!\left((-1)^xB_{x,v}^{(\ell)}\right)
      +A_{z,v,t}^{(\ell)}.
    \label{eq:de-met-variable-z}
\end{align}
The opposite-side belief is assembled from the same variable type $v$ as the
outgoing message.  This conditioning preserves correlations between a qubit
block and the socket classes to which it belongs.  Mixing the opposite-side
beliefs over all variable types would erase the lifted-product block
structure.  Initialization again sets the opposite-side sums to zero.  For
the error estimate, the variable type is drawn with its node probability
$\nu_v$, the full sums in Eq.~\eqref{eq:de-met-full-sum} are formed on both
sides, and the four scores in Eq.~\eqref{eq:de-posterior-scores} are compared.

For a lifted product constructed from an all-monomial
$m\times n$ matrix $A$ over the lift ring.  With $A^\dagger$ denoting the
antipode transpose, its two check matrices have the block form
\begin{align}
    H_X&=\left(A\otimes I_n\ \middle|\ I_m\otimes A^\dagger\right),
    \nonumber\\
    H_Z&=\left(I_n\otimes A\ \middle|\ A^\dagger\otimes I_m\right).
    \label{eq:de-lp-check-matrices}
\end{align}
There are $n^2$ qubits in the first block, $m^2$ in the second block, and
$mn$ checks on each CSS side.  Writing $N_0=n^2+m^2$, the corresponding
two-edge ensemble is
\begin{align}
    (\nu_1,\boldsymbol d_1)
      &=\left(\frac{n^2}{N_0},(m,0)\right),
    &
    (\nu_2,\boldsymbol d_2)
      &=\left(\frac{m^2}{N_0},(0,n)\right),
    \label{eq:de-lp-variable-types}\\
    (\mu,\boldsymbol e)
      &=\left(\frac{mn}{N_0},(n,m)\right).
    \label{eq:de-lp-check-type}
\end{align}

\clearpage
\newpage

\onecolumngrid

\section{Complete Catalogue of Halved Non-CSS Code Instances}
\label{sec:noncss-full-catalogue}

\begin{table*}[!h]
    \caption{Halved non-CSS pair-partition codes parameter frontiers for each $(J,L)$. We focus on the instances that achieve a given distance at the smallest block size we can find for each $(J,L)$. Distances written $\leq d$ are upper bounds from an
    explicit logical operator; unqualified distances are exact. Below distance
    eight the instances are small demonstrations rather than frontier points.
    The $J=3$ families are listed on the left and the $J=4$ families on the
    right. The fold column distinguishes the plain involution (\foldP, blue) from the reversing one (\foldR, red).}
    \label{tab:noncss-code-catalogue-full}
    \centering
    \scriptsize
    \setlength{\tabcolsep}{2.2pt}
    \renewcommand{\arraystretch}{0.88}
    \begin{ruledtabular}
    \begin{tabular}{@{}ccccc|ccccc@{}}
        $[[n,k,d]]$ & $(J,L)$ & $P$ & $k/n$ & Fold &
        $[[n,k,d]]$ & $(J,L)$ & $P$ & $k/n$ & Fold\\
        \colrule
        $[[88,24,8]]$ & $(3,8)$ & $22$ & $0.273$ & \foldR & $[[70,17,8]]$ & $(4,10)$ & $14$ & $0.243$ & \foldR\\
        $[[120,32,9]]$ & $(3,8)$ & $30$ & $0.267$ & \foldR & $[[90,21,11]]$ & $(4,10)$ & $18$ & $0.233$ & \foldR\\
        $[[124,33,11]]$ & $(3,8)$ & $31$ & $0.266$ & \foldP & $[[100,23,12]]$ & $(4,10)$ & $20$ & $0.230$ & \foldR\\
        $[[148,39,12]]$ & $(3,8)$ & $37$ & $0.264$ & \foldP & $[[110,25,13]]$ & $(4,10)$ & $22$ & $0.227$ & \foldR\\
        $[[236,61,14]]$ & $(3,8)$ & $59$ & $0.258$ & \foldP & $[[130,29,14]]$ & $(4,10)$ & $26$ & $0.223$ & \foldR\\
        $[[316,81,15]]$ & $(3,8)$ & $79$ & $0.256$ & \foldP & $[[140,31,15]]$ & $(4,10)$ & $28$ & $0.221$ & \foldR\\
        $[[388,99,18]]$ & $(3,8)$ & $97$ & $0.255$ & \foldP & $[[150,33,\leq 16]]$ & $(4,10)$ & $30$ & $0.220$ & \foldR\\
        $[[150,62,8]]$ & $(3,10)$ & $30$ & $0.413$ & \foldP & $[[160,35,\leq 17]]$ & $(4,10)$ & $32$ & $0.219$ & \foldR\\
        $[[190,78,9]]$ & $(3,10)$ & $38$ & $0.411$ & \foldP & $[[170,37,\leq 18]]$ & $(4,10)$ & $34$ & $0.218$ & \foldR\\
        $[[230,94,10]]$ & $(3,10)$ & $46$ & $0.409$ & \foldP & $[[185,40,\leq 19]]$ & $(4,10)$ & $37$ & $0.216$ & \foldR\\
        $[[310,126,11]]$ & $(3,10)$ & $62$ & $0.406$ & \foldP & $[[200,43,20]]$ & $(4,10)$ & $40$ & $0.215$ & \foldR\\
        $[[370,150,12]]$ & $(3,10)$ & $74$ & $0.405$ & \foldP & $[[215,46,\leq 21]]$ & $(4,10)$ & $43$ & $0.214$ & \foldR\\
        $[[450,182,13]]$ & $(3,10)$ & $90$ & $0.404$ & \foldR & $[[235,50,\leq 22]]$ & $(4,10)$ & $47$ & $0.213$ & \foldR\\
        $[[500,202,14]]$ & $(3,10)$ & $100$ & $0.404$ & \foldP & $[[245,52,\leq 23]]$ & $(4,10)$ & $49$ & $0.212$ & \foldR\\
        $[[264,134,8]]$ & $(3,12)$ & $44$ & $0.508$ & \foldR & $[[120,43,8]]$ & $(4,12)$ & $20$ & $0.358$ & \foldR\\
        $[[300,152,9]]$ & $(3,12)$ & $50$ & $0.507$ & \foldR & $[[132,47,10]]$ & $(4,12)$ & $22$ & $0.356$ & \foldR\\
        $[[450,227,10]]$ & $(3,12)$ & $75$ & $0.504$ & \foldR & $[[150,53,12]]$ & $(4,12)$ & $25$ & $0.353$ & \foldR\\
        $[[420,242,8]]$ & $(3,14)$ & $60$ & $0.576$ & \foldR & $[[168,59,13]]$ & $(4,12)$ & $28$ & $0.351$ & \foldR\\
        $[[476,274,9]]$ & $(3,14)$ & $68$ & $0.576$ & \foldR & $[[180,63,\leq 14]]$ & $(4,12)$ & $30$ & $0.350$ & \foldR\\
         &  &  &  &  & $[[186,65,15]]$ & $(4,12)$ & $31$ & $0.349$ & \foldP\\
         &  &  &  &  & $[[186,65,14]]$ & $(4,12)$ & $31$ & $0.349$ & \foldR\\
         &  &  &  &  & $[[240,83,\leq 17]]$ & $(4,12)$ & $40$ & $0.346$ & \foldR\\
         &  &  &  &  & $[[246,85,\leq 18]]$ & $(4,12)$ & $41$ & $0.346$ & \foldR\\
         &  &  &  &  & $[[264,91,\leq 19]]$ & $(4,12)$ & $44$ & $0.345$ & \foldR\\
         &  &  &  &  & $[[312,107,\leq 21]]$ & $(4,12)$ & $52$ & $0.343$ & \foldR\\
         &  &  &  &  & $[[182,81,10]]$ & $(4,14)$ & $26$ & $0.445$ & \foldR\\
         &  &  &  &  & $[[203,90,11]]$ & $(4,14)$ & $29$ & $0.443$ & \foldR\\
         &  &  &  &  & $[[224,99,13]]$ & $(4,14)$ & $32$ & $0.442$ & \foldR\\
         &  &  &  &  & $[[245,108,\leq 14]]$ & $(4,14)$ & $35$ & $0.441$ & \foldR\\
         &  &  &  &  & $[[266,117,\leq 15]]$ & $(4,14)$ & $38$ & $0.440$ & \foldR\\
         &  &  &  &  & $[[287,126,\leq 16]]$ & $(4,14)$ & $41$ & $0.439$ & \foldR\\
         &  &  &  &  & $[[336,147,\leq 18]]$ & $(4,14)$ & $48$ & $0.438$ & \foldR\\
         &  &  &  &  & $[[288,147,\leq 13]]$ & $(4,16)$ & $36$ & $0.510$ & \foldR\\
         &  &  &  &  & $[[304,155,\leq 14]]$ & $(4,16)$ & $38$ & $0.510$ & \foldR\\
         &  &  &  &  & $[[336,171,\leq 15]]$ & $(4,16)$ & $42$ & $0.509$ & \foldR\\
    \end{tabular}
    \end{ruledtabular}
\end{table*}

\newpage

\section{Complete Catalogue of CPM-PP Code Instances}
\label{sec:cpmpp-full-catalogue}

\begin{table}[!h]
    \caption{CPM-based CSS pair-partition codes parameter frontier for each $(J,L)$. We focus on the instances that achieve a given distance at the smallest block size we can find for each $(J,L)$.}
    \label{tab:cpmpp-code-catalogue-full}
    \centering
    \scriptsize
    \setlength{\tabcolsep}{2.2pt}
    \renewcommand{\arraystretch}{0.88}
    \begin{ruledtabular}
    \begin{tabular}{@{}ccccc@{}}
        $[[n,k,d]]$ & girth & $(J,L)$ & $P$ & Rate $k/n$\\
        \colrule
        $[[176,48,10]]$ & $6$ & $(3,8)$ & $22$ & $0.273$\\
        $[[232,62,12]]$ & $6$ & $(3,8)$ & $29$ & $0.267$\\
        $[[360,94,14]]$ & $6$ & $(3,8)$ & $45$ & $0.261$\\
        $[[472,122,16]]$ & $8$ & $(3,8)$ & $59$ & $0.258$\\
        $[[584,150,18]]$ & $8$ & $(3,8)$ & $73$ & $0.257$\\
        $[[776,198,20]]$ & $8$ & $(3,8)$ & $97$ & $0.255$\\
        $[[1112,282,20]]$ & $8$ & $(3,8)$ & $139$ & $0.254$\\
        $[[1336,338,22]]$ & $8$ & $(3,8)$ & $167$ & $0.253$\\
        $[[1784,450,24]]$ & $8$ & $(3,8)$ & $223$ & $0.252$\\
        $[[470,192,12]]$ & $6$ & $(3,10)$ & $47$ & $0.409$\\
        $[[530,216,12]]$ & $6$ & $(3,10)$ & $53$ & $0.408$\\
        $[[710,288,14]]$ & $6$ & $(3,10)$ & $71$ & $0.406$\\
        $[[970,392,16]]$ & $6$ & $(3,10)$ & $97$ & $0.404$\\
        $[[1390,560,18]]$ & $6$ & $(3,10)$ & $139$ & $0.403$\\
        $[[1630,656,20]]$ & $8$ & $(3,10)$ & $163$ & $0.402$\\
        $[[2110,848,22]]$ & $6$ & $(3,10)$ & $211$ & $0.402$\\
        $[[2230,896,24]]$ & $8$ & $(3,10)$ & $223$ & $0.402$\\
        $[[1356,682,14]]$ & $6$ & $(3,12)$ & $113$ & $0.503$\\
        $[[2004,1006,16]]$ & $6$ & $(3,12)$ & $167$ & $0.502$\\
        $[[2676,1342,18]]$ & $6$ & $(3,12)$ & $223$ & $0.501$\\
        $[[1414,812,12]]$ & $6$ & $(3,14)$ & $101$ & $0.574$\\
        $[[3122,1788,16]]$ & $6$ & $(3,14)$ & $223$ & $0.573$\\
        \colrule
        $[[180,42,12]]$ & $6$ & $(4,10)$ & $18$ & $0.233$\\
        $[[200,46,14]]$ & $6$ & $(4,10)$ & $20$ & $0.230$\\
        $[[220,50,16]]$ & $6$ & $(4,10)$ & $22$ & $0.227$\\
        $[[290,64,18]]$ & $6$ & $(4,10)$ & $29$ & $0.221$\\
        $[[320,70,20]]$ & $6$ & $(4,10)$ & $32$ & $0.219$\\
        $[[370,80,22]]$ & $6$ & $(4,10)$ & $37$ & $0.216$\\
        $[[276,98,14]]$ & $6$ & $(4,12)$ & $23$ & $0.355$\\
        $[[372,130,16]]$ & $6$ & $(4,12)$ & $31$ & $0.349$\\
        $[[444,154,18]]$ & $6$ & $(4,12)$ & $37$ & $0.347$\\
        $[[492,170,20]]$ & $6$ & $(4,12)$ & $41$ & $0.346$\\
        $[[516,178,20]]$ & $6$ & $(4,12)$ & $43$ & $0.345$\\
        $[[708,242,22]]$ & $6$ & $(4,12)$ & $59$ & $0.342$\\
        $[[876,298,22]]$ & $6$ & $(4,12)$ & $73$ & $0.340$\\
        $[[322,144,12]]$ & $4$ & $(4,14)$ & $23$ & $0.447$\\
        $[[406,180,14]]$ & $4$ & $(4,14)$ & $29$ & $0.443$\\
        $[[518,228,16]]$ & $6$ & $(4,14)$ & $37$ & $0.440$\\
        $[[574,252,18]]$ & $6$ & $(4,14)$ & $41$ & $0.439$\\
        $[[560,286,12]]$ & $4$ & $(4,16)$ & $35$ & $0.511$\\
        $[[592,302,14]]$ & $4$ & $(4,16)$ & $37$ & $0.510$\\
        $[[848,430,18]]$ & $6$ & $(4,16)$ & $53$ & $0.507$\\
        $[[944,478,20]]$ & $6$ & $(4,16)$ & $59$ & $0.506$\\
        \colrule
        $[[110,8,12]]$ & $6$ & $(5,10)$ & $11$ & $0.073$\\
        $[[170,8,20]]$ & $6$ & $(5,10)$ & $17$ & $0.047$\\
        $[[190,8,18]]$ & $6$ & $(5,10)$ & $19$ & $0.042$\\
    \end{tabular}
    \end{ruledtabular}
\end{table}

\end{document}